\documentclass[leqno,a4paper,twoside,11pt]{article}
\usepackage[nointlimits,nosumlimits]{amsmath}
\usepackage{amsfonts,amssymb,amsthm,ifthen}
\usepackage{color}
\usepackage[arrow,matrix,curve]{xy}
\usepackage[utf8x]{inputenc}
\usepackage[square,numbers]{natbib}

\addtolength{\textwidth}{2cm}
\addtolength{\evensidemargin}{-2cm}
\addtolength{\voffset}{-1cm}
\addtolength{\textheight}{2cm}

\newtheorem{Thm}{Theorem}[section]
\newtheorem{Prp}[Thm]{Proposition}
\newtheorem{Lem}[Thm]{Lemma}

\newtheorem{Def}[Thm]{Definition}
\newtheorem{Rem}[Thm]{Remark}
\newtheorem{Exp}[Thm]{Example}

\unitlength=1mm
\theoremstyle{definition}

\newcommand{\arr}[2]{%
  \begin{array}{@{}#1@{}}#2\end{array}}
\newcommand{\abs}[1]{\left| #1 \right|}
\newcommand{\scal}[3][]{\ifthenelse{\equal{#1}{}}{
  \left\langle #2,\,#3 \right\rangle
}{\ifthenelse{\equal{#1}{(}}{
  \left( #2,\,#3 \right)
}{\ifthenelse{\equal{#1}{[}}{
  \left[ #2,\,#3 \right]
}{
  #1\left( #2,\,#3 \right)
}}}}
\newcommand{\dd}[2][]{\frac{\partial #1}{\partial #2}}
\newcommand{\phantrix}[3]{#1^{#2}_{\phantom{#2}#3}}
\newcommand{\setsep}{\;\big|\;}

\newcommand{\mA}{\mathcal A}
\newcommand{\mB}{\mathcal B}
\newcommand{\mE}{\mathcal E}
\newcommand{\mF}{\mathcal F}
\newcommand{\mN}{\mathcal N}
\newcommand{\mO}{\mathcal O}
\newcommand{\mP}{\mathcal P}
\newcommand{\mS}{\mathcal S}

\newcommand{\bN}{\mathbb N}
\newcommand{\bR}{\mathbb R}
\newcommand{\bZ}{\mathbb Z}
\newcommand{\fg}{\mathfrak g}
\newcommand{\fgl}{\mathfrak{gl}}
\newcommand{\oP}{\overline{P}}

\newcommand{\Der}{\mathrm{Der}}
\newcommand{\End}{\mathrm{End}}
\newcommand{\Gr}{\mathrm{Gr}}
\newcommand{\Hol}{\mathrm{Hol}}
\newcommand{\Hom}{\mathrm{Hom}}
\newcommand{\Mat}{\mathrm{Mat}}
\newcommand{\ev}{\mathrm{ev}}
\newcommand{\hol}{\mathrm{hol}}
\newcommand{\id}{\mathrm{id}}
\newcommand{\rk}{\mathrm{rk}}
\newcommand{\tr}{\mathrm{tr}}

\renewcommand{\title}[1]{\vbox{\center\LARGE{\textsc{#1}}}\vspace{5mm}}
\renewcommand{\author}[1]{\vbox{\center\large{\textsc{#1}}}\vspace{5mm}}
\newcommand{\address}[1]{\vbox{\center\em#1}}
\newcommand{\email}[1]{\vbox{\center\tt#1}\vspace{5mm}}

\begin{document}

\title{Super Wilson Loops and\\Holonomy on Supermanifolds}

\author{Josua Groeger$^1$}

\address{Humboldt-Universit\"at zu Berlin, Institut f\"ur Mathematik,\\
  Rudower Chaussee 25, 12489 Berlin, Germany }

\email{$^1$groegerj@mathematik.hu-berlin.de}

\begin{abstract}
\noindent
The classical Wilson loop is the gauge-invariant trace of the parallel transport
around a closed path with respect to a connection on a vector bundle over a smooth manifold.
We build a precise mathematical model of the super Wilson loop,
an extension introduced by Mason-Skinner and Caron-Huot, by endowing the objects
occurring with auxiliary Graßmann generators coming from S-points.
A key feature of our model is a supergeometric parallel transport,
which allows for a natural notion of holonomy on a supermanifold
as a Lie group valued functor. Our main results for that theory
comprise an Ambrose-Singer theorem as well as a natural analogon of the holonomy principle.
Finally, we compare our holonomy functor with the holonomy supergroup introduced
by Galaev in the common situation of a topological point.
It turns out that both theories are different, yet related in a sense made precise.
\end{abstract}

\section{Introduction}

Gluon scattering amplitudes have been known to be dual to Wilson loops along lightlike polygons
(\cite{AM07,DKS08,BHT08,AR08}).
While these quantum expectation values, which are formally calculated by means of the path integral,
remain problematic from a mathematical point of view, the underlying classical theory has been well understood.
In fact, a Wilson line refers to parallel transport with respect to a connection on a vector bundle along
a path in the underlying smooth manifold. In the usual context of flat spacetime (Minkowski space)
with a single global coordinate chart, the corresponding solution operator can be written in terms of
a path-ordered exponential.

Recently, a similar duality (at weak coupling) between the full superamplitude of
$\mN=4$ super Yang-Mills theory and two variants of a supersymmetric extension of the Wilson loop
has been claimed. The first approach (\cite{MS10}) originates in momentum twistor
space and translates into the integral over a superconnection in spacetime, while
the second (\cite{CH11}) attaches to lightlike polygons certain edge and vertex operators, whose
shape is determined by supersymmetry constraints (\cite{Gro12}).
Both approaches agree, in the common domain of definition,
up to a term depending on the equations of motion (\cite{BKS12})
and indeed satisfy the conjectured duality upon subtracting an anomalous contribution (\cite{Bel12}).

The first purpose of the present article is to
build a supergeometric model of super Wilson loops
that leads to the same characteristic formulas as summarised in Sec. 2.2 of \cite{BKS12}.
The main idea is to give the objects occurring an inner structure through auxiliary Graßmann
generators coming from $S$-points. While the resulting additional degrees of freedom come
without physical significance, this approach is well-justified mathematically and has been performed
successfully in modelling other aspects of superfield theory.
Notably, consider ''maps with flesh'' as introduced by H{\'{e}}lein in \cite{Hel09}
as models for superfields including bosons and fermions. See also
\cite{DF99b,Khe07,Han12} for the same concept under different terminology
and \cite{Gro11a} for their differential calculus.

A key feature of our model is the supergeometric parallel transport introduced in
Sec. \ref{secSuperWilsonLoops}, which allows for a natural notion of holonomy
at an $S$-point of a supermanifold as a Lie group valued functor.
A different notion of holonomy on supermanifolds was introduced by Galaev in \cite{Gal09} by
taking a suitable generalisation of the Ambrose-Singer theorem as the definition of a
super Lie algebra and endowing this to a Harish-Chandra pair, thus obtaining a super
Lie group for every topological point of the manifold.
Developing a new holonomy theory by means of our parallel transport, and comparing it to Galaev's,
is the second objective of this article.

In Sec. \ref{secHolonomy}, we establish two main results generalising properties of classical
holonomy. The first is an Ambrose-Singer theorem, which describes the holonomy Lie algebra
in terms of curvature, while the second formulates a natural analogon of the holonomy principle
relating parallel sections to holonomy-invariant vectors.

Our Ambrose-Singer theorem facilitates the comparison of our holonomy functor
with Galaev's theory, which is the subject matter of Sec. \ref{secComparison}.
Since this functor is, in general, not representable,
both theories are different in the common situation of a topological point.
Nevertheless, we show that they are related in that
the generators of Galaev's holonomy algebra can be
extracted as certain coefficients by considering special $S$-points.
This construction is based on the knowledge of the geometric significance of the
elements and, in this sense, not algebraic.

\section{Super Wilson Loops and Parallel Transport}
\label{secSuperWilsonLoops}

The super Wilson loop described in \cite{MS10} and \cite{BKS12} is constructed as follows.
Consider $n$ ''superpoints'' $(x_i,\theta_i)$ in chiral superspace,
which are symbolic quantities in that
their exact mathematical type is not important, only their calculation rules such as
\begin{align}
\label{eqnCalculationRules}
x_i^{\mu}\cdot x_j^{\nu}=x_j^{\nu}\cdot x_i^{\mu}\;,\qquad
\theta_i^{\alpha A}\cdot\theta_j^{\beta B}=-\theta_j^{\beta B}\cdot\theta_i^{\alpha A}
\end{align}
These superpoints are connected by ''straight lines''
\begin{align}
\label{eqnStraightSuperline}
x(t_i)=x_i-t_ix_{i,i+1}\;,\qquad\theta(t_i)=\theta_i-t_i\theta_{i,i+1}
\end{align}
thus yielding a closed ''superpath'' $\gamma$
parametrised by one bosonic variable $t$, which enters the Wilson loop via
\begin{align}
\label{eqnWilsonLine}
W_{\gamma}=\tr\left(X\mapsto\mP\exp\left(\int_0^1ig\mB(t)dt\right)[X]\right)\;,\qquad\mB(t)=\mB_{\xi}\cdot\dot{\gamma}^{\xi}(t)
\end{align}
where $\mP\exp$ denotes a path-ordered exponential, and $\mB_{\xi}$ is a connection one-form in coordinates $\xi$.
This connection has a very specific form due to supersymmetry conditions which,
however, is not relevant for our purposes.

As a mathematical model for a more general situation,
let $M$ be a supermanifold of dimension $\dim M=(\dim M)_{\overline{0}}|(\dim M)_{\overline{1}}$
(such as chiral superspace), and let $S$ be
another supermanifold which should be thought of as auxiliary. Throughout,
we employ the definitions of Berezin-Kostant-Leites \cite{Lei80}.
A supermanifold $M$ is thus, in particular, a ringed space $M=(M_0,\mO_M)$,
and a morphism $\varphi:M\rightarrow N$ consists of two parts $\varphi=(\varphi_0,\varphi^{\sharp})$
with $\varphi_0:M_0\rightarrow N_0$ a smooth map and $\varphi^{\sharp}$ a generalised pullback
of superfunctions $f\in\mO_N$. Modern monographs on the general theory of supermanifolds include
\cite{Var04} and \cite{CCF11}.

\begin{Def}
\label{defSPoint}
An $S$-point of $M$ is a morphism $x=(x_0,x^{\sharp}):S\rightarrow M$.
A (smooth) $S$-path $\gamma$ connecting $S$-points $x$ and $y$ is a morphism
\begin{align*}
\gamma=(\gamma_0,\gamma^{\sharp}):S\times [0,1]\rightarrow M\qquad\textrm{such that}\qquad
\ev|_{t=0}\gamma^{\sharp}=x^{\sharp}\;,\qquad\ev|_{t=1}\gamma^{\sharp}=y^{\sharp}
\end{align*}
which we shall denote, by a slight abuse of notation, by $\gamma:x\rightarrow y$.
It is called closed (or an $S$-loop) if $x=y$.
\end{Def}

In the following, we will exclusively consider superpoints
\begin{align}
\label{eqnSuperpoint}
S=\bR^{0|L}=\left(\{0\},\bigwedge\bR^L\right)\;,\qquad\bigwedge\bR^L=\langle\eta^1,\ldots,\eta^L\rangle\;,\qquad L\in\bN
\end{align}
Although most of our results should continue to hold accordingly for general $S$, this restriction will
turn out to suffice for reproducing the characteristic formulas of super Wilson loops as well as
allowing for a powerful holonomy theory.
This significance of superpoints does not come unexpected. According to \cite{SW11}, an inner Hom object
$\underline{\Hom}(M,N)$ in the category of supermanifolds is determined by its $\bigwedge\bR^L$-points
\begin{align*}
\underline{\Hom}(M,N)\left(\bigwedge\bR^L\right)\cong\scal[\Hom_{\mathrm{SMan}}]{\bR^{0|L}\times M}{N}
\end{align*}
in the sense of Molotkov-Sachse theory (\cite{Mol84,Sac08}).
The morphisms on the right are the aforementioned ''maps with flesh'' (\cite{Hel09}).

\begin{Def}
\label{defPathConcatenation}
Let $x,y,z:S\rightarrow M$ be $S$-points and $\gamma:x\rightarrow y$ and $\delta:y\rightarrow z$ be 
$S$-paths. For fixed $t_0\in[0,1]$, we prescribe
\begin{align*}
\ev|_{t=t_0}(\delta\star\gamma)^{\sharp}:=\left\{\arr{ll}{\ev|_{t=2t_0}\gamma^{\sharp}&t_0\leq 1/2\\\ev|_{t=2\left(t_0-\frac{1}{2}\right)}\delta^{\sharp}&t_0\geq 1/2}\right.
\end{align*}
This defines an $S$-point which coincides with $x$,$y$,$z$ for $t_0=0,\frac{1}{2},1$, respectively.
Similarly, we define
\begin{align*}
\ev|_{t=t_0}(\gamma^{-1})^{\sharp}:=\ev|_{t=(1-t_0)}\gamma^{\sharp}
\end{align*}
\end{Def}

Considering all $t_0\in[0,1]$ at a time, the previous definition yields $S$-paths
$\delta\star\gamma:x\rightarrow z$ and $\gamma^{-1}:y\rightarrow x$,
referred to as the concatenation of $\gamma$ and $\delta$ and the inverse of $\gamma$,
respectively. The concatenation is, however,
only piecewise smooth in the sense of the following definition.

\begin{Def}
Let $x,y$ be $S$-points. A piecewise smooth $S$-path $\gamma:x\rightarrow y$ connecting $x$ with $y$
is a tuple $(\gamma_j:S\times[t_j,t_{j+1}]\rightarrow M)_{j=0}^l$
with $t_0=0$, $t_l=1$ and $t_j<t_{j+1}$
such that $\ev|_{t=t_{j+1}}\gamma^{\sharp}_j=\ev|_{t=t_{j+1}}\gamma^{\sharp}_{j+1}$ and $\ev|_{t=0}\gamma^{\sharp}_0=x^{\sharp}$
and $\ev|_{t=1}\gamma^{\sharp}_l=y^{\sharp}$, and such that $\gamma_j|_{S\times[t_j,t_{j+1}]}$ is a morphism.
\end{Def}

The concatenation $(\delta\star\gamma)$ and inverse $\gamma^{-1}$ of piecewise smooth
paths $\delta$ and $\gamma$
are defined analogous. The construction is such that the underlying path $(\delta\star\gamma)_0$
is the classical concatenation of $\delta_0$ and $\gamma_0$, and $(\gamma^{-1})_0=(\gamma_0)^{-1}$.

\begin{Exp}
\label{expSPoint}
Comparing with the objects in \cite{BKS12}, we state the following dictionnary. Let
$(x^{\mu},\theta^{\alpha A})$ denote (global) coordinates on $M\cong\bR^{n|m}$ (using
spacetime indices $\mu$ rather than spinor indices $\dot{\alpha}\alpha$).
Then a superpoint is an $S$-point $\xi=(\xi_0,\xi^{\sharp}):S\rightarrow M$, identified
with $(\xi^{\sharp}(x^{\mu}),\xi^{\sharp}(\theta^{\alpha A}))\in(\mO_S)^{n|m}$.
The latter tuple is then abbreviated $(x,\theta)=(x^{\mu},\theta^{\alpha A})$, for which (\ref{eqnCalculationRules}) is satisfied.
The straight line connecting superpoints $(x_i,\theta_i)$ and $(x_{i+1},\theta_{i+1})$ is the
$S$-path $\xi_{i,i+1}:S\times[0,1]\rightarrow M$ defined as follows.
\begin{align*}
&(\xi^{\sharp}_{i,i+1}(x^{\mu}),\xi^{\sharp}_{i,i+1}(\theta^{\alpha A}))\\
&\qquad:=\left(\xi_i^{\sharp}(x^{\mu})-t\left(\xi_i^{\sharp}(x^{\mu})-\xi_{i+1}^{\sharp}(x^{\mu})\right),\,
\xi_i^{\sharp}(\theta^{\alpha A})-t\left(\xi_i^{\sharp}(\theta^{\alpha A})-\xi_{i+1}^{\sharp}(\theta^{\alpha A})\right)\right)\\
&\qquad\in(\mO_{S\times[0,1]})^{n|m}
\end{align*}
In this sense, we can understand (\ref{eqnStraightSuperline}).
The last line is $\xi_{n,0}$. Concatenation thus yields a loop.
\end{Exp}

\subsection{Super Vector Bundles and Connections}

A super vector bundle $\mE$ over a supermanifold $M$ is a sheaf of locally free $\mO_M$ supermodules
on $M$. We shall denote its even and odd parts by $\mE_{\overline{0}}$ and $\mE_{\overline{1}}$,
respectively. An important example is the super tangent bundle $\mS M:=\Der(\mO_M)$, which is
the sheaf of $\mO_M$-superderivations.
$\mE(U)$ is, for $U\subseteq M_0$ sufficiently small, by definition isomorphic to $\mO_M(U)^{\rk\mE}$
with $\rk\mE=(\rk\mE)_{\overline{0}}|(\rk\mE)_{\overline{1}}$ the rank of $\mE$.
Let $(T^j)^{\rk\mE}_{j=1}$ be an adapted local basis such that
$X\in\mE(U)$ is identified with the tuple $(X^j)_{j=1}^{\rk\mE}$ of functions $X^j\in\mO_M(U)$ with
respect to right coefficients $X=T^j\cdot X^j$ (sum convention).
In general, it is preferable to consider right coordinates on supermodules over
supercommutative superalgebras, for then superlinear maps can be identified with matrices.
For example, the matrix of the differential $d\varphi[X]:=X\circ\varphi^{\sharp}$
for $X=\partial_{\xi^k}\cdot X^k\in\mS M$ of a map $\varphi:M\rightarrow N$ with respect to
coordinates $(\xi^j)$ and $(\zeta^j)$ is given by
\begin{align}
\label{eqnDifferentialMatrix}
d\varphi^i_{\phantom{i}k}:=(-1)^{(\abs{\xi^k}+\abs{\zeta^i})\cdot\abs{\zeta^i}}\dd[\varphi^{\sharp}(\zeta^i)]{\xi^k}\quad
\mathrm{s.th.}\quad d\varphi[X]=\sum_{i,k}\left(\varphi^{\sharp}\circ\dd{\zeta^i}\right)\cdot d\varphi^i_{\phantom{i}k}\cdot X^k
\end{align}

\begin{Lem}[Chain Rule]
\label{lemChainRule}
Let $\varphi:M\rightarrow N$ and $\psi:N\rightarrow P$ be morphisms. Then
\begin{align*}
d(\psi\circ\varphi)[X]
=\left(\varphi^{\sharp}\circ\psi^{\sharp}\circ\dd{\pi^l}\right)\cdot\varphi^{\sharp}(d\psi^l_{\phantom{l}i})\cdot d\varphi^i_{\phantom{i}k}\cdot X^k
\end{align*}
with $(\pi^l)$ coordinates on $P$ and indices $k$, $i$ referring to (unlabelled)
coordinates on $M$ and $N$, respectively.
\end{Lem}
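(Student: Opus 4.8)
The plan is to derive the identity intrinsically from the definition $d\varphi[X]:=X\circ\varphi^{\sharp}$ together with the contravariance of the pullback, and to invoke the coordinate description (\ref{eqnDifferentialMatrix}) only at the very end, so that all sign conventions remain packaged inside the matrices $d\varphi^i_{\phantom{i}k}$ and $d\psi^l_{\phantom{l}i}$.

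First I would record the coordinate-free core. Because $(\psi\circ\varphi)^{\sharp}=\varphi^{\sharp}\circ\psi^{\sharp}$, the definition of the differential yields at once
\[
d(\psi\circ\varphi)[X]=X\circ(\psi\circ\varphi)^{\sharp}=X\circ\varphi^{\sharp}\circ\psi^{\sharp}=d\varphi[X]\circ\psi^{\sharp}.
\]
Evaluated on a superfunction $g\in\mO_P$, this reads $d(\psi\circ\varphi)[X][g]=d\varphi[X][\psi^{\sharp}(g)]$, so the remaining task is the purely computational one of expanding the right-hand side in the chosen frames.

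Next I would feed the test function $\psi^{\sharp}(g)\in\mO_N$ into the coordinate formula (\ref{eqnDifferentialMatrix}) for $d\varphi$, which reduces everything to the superfunctions $\dd[\psi^{\sharp}(g)]{\zeta^i}$. Rather than differentiating $\psi^{\sharp}(g)$ by hand — which would force me to unwind the sign built into (\ref{eqnDifferentialMatrix}) — I would recognise this derivative as a value of the differential of $\psi$ itself. Since the coordinate derivation $\dd{\zeta^i}$ has right-coefficient components $\delta^{i'}_i$ in the expansion $\dd{\zeta^{i'}}\cdot(\,\cdot\,)^{i'}$, applying (\ref{eqnDifferentialMatrix}) to $\psi$ collapses the $i'$-sum and gives
\[
\dd[\psi^{\sharp}(g)]{\zeta^i}=d\psi\Big[\dd{\zeta^i}\Big][g]=\psi^{\sharp}\Big(\dd[g]{\pi^l}\Big)\cdot d\psi^l_{\phantom{l}i}
\]
(summation over $l$), which is exactly the scalar super chain rule already encoded in (\ref{eqnDifferentialMatrix}).

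Finally I would substitute this back and push the even homomorphism $\varphi^{\sharp}$ through the product $\psi^{\sharp}(\dd[g]{\pi^l})\cdot d\psi^l_{\phantom{l}i}$; evenness guarantees that no Koszul sign appears, so the factor becomes $(\varphi^{\sharp}\circ\psi^{\sharp}\circ\dd{\pi^l})(g)\cdot\varphi^{\sharp}(d\psi^l_{\phantom{l}i})$. Reinstating the trailing factors $d\varphi^i_{\phantom{i}k}\cdot X^k$ then reproduces the claimed formula verbatim. I expect the only delicate point to be this sign bookkeeping: one must keep $d\psi^l_{\phantom{l}i}$ as an abstract matrix entry and resist expanding it, letting the conventions of (\ref{eqnDifferentialMatrix}) and the right-coefficient action do the work, for then the lemma is little more than the functorial statement $d(\psi\circ\varphi)[X]=d\varphi[X]\circ\psi^{\sharp}$ rewritten in coordinates.
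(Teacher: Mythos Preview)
Your proposal is correct and is precisely the ``straightforward calculation in local coordinates'' that the paper invokes as its entire proof; you have simply spelled out the steps rather than leaving them implicit. The key organisational idea you use---deriving the intrinsic identity $d(\psi\circ\varphi)[X]=d\varphi[X]\circ\psi^{\sharp}$ first and then unpacking it via two applications of (\ref{eqnDifferentialMatrix}) so that the signs stay hidden inside the matrix entries---is exactly what the direct computation amounts to.
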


\begin{proof}
This is proved by a straightforward calculation in local coordinates.
\end{proof}

\begin{Def}
\label{defSConnection}
For a super vector bundle $\mE$, and $S$ as in (\ref{eqnSuperpoint}),
we define
\begin{align*}
\mE_S:=\mE\otimes_{\mO_M}\mO_{S\times M}
\end{align*}
An $S$-connection on $M$ is an even $\bR$-linear sheaf morphism
\begin{align*}
\nabla:\mE_S\rightarrow\mS M^*_S\otimes_{\mO_{S\times M}}\mE_S\;,\qquad
\nabla(fe)=df\otimes_{\mO_{S\times M}}e+f\cdot\nabla e\qquad\mathrm{for}\quad f\in\mO_{S\times M}
\end{align*}
\end{Def}

In particular, $\mE_S$ can be considered as a super vector bundle on $S\times M$ and, in this sense,
$\nabla$ is an ordinary superconnection.
The local picture is as follows. Let $\xi=(x,\theta)$ be coordinates on $M$ and
$(T^j)$ be an $\mE$-basis. Then $X\in\mE_S$ can be expaned as $X=T^j\cdot X^j$
with $X^j\in\mO_{S\times M}(\{0\}\times U)$, and
\begin{align}
\label{eqnLocalConnection}
\nabla_{\partial_{\xi^i}}X
=(-1)^{\abs{\xi^i}\abs{T^j}}T^j\partial_{\xi^i}(X^j)+\Gamma_{\xi^i}[T^j]\cdot X^j\;,\qquad
\Gamma_{\xi^i}[T^j]:=\nabla_{\partial_{\xi^i}}T^j
\end{align}
where $\Gamma_{\xi^i}\in\Mat_{\rk\mE\times\rk\mE}(\mO_{S\times M}(\{0\}\times U))$,
which has an expansion
\begin{align*}
\Gamma_{\xi^i}=\sum_{I=(i_1,\ldots,i_{\abs{I}})}\theta^I\cdot(\Gamma_{\xi^i})_I\;,\qquad
(\Gamma_{\xi^i})_I\in\Mat_{\rk\mE\times\rk\mE}(\mO_{S\times M_0}(\{0\}\times U))
\end{align*}

\begin{Exp}
\label{expBKSConnection}
Consider the trivial vector bundle $\mE:=\mathfrak{su}(N)\otimes_{\bR}\mO_M$  with $N\in\bN$ of rank $\rk\mE=\dim\mathfrak{su}(N)|0$
over flat superspace with global coordinates $\xi=(x^{\mu},\theta^{\alpha A})$. Define
$\mA_{\mu}:=\Gamma_{x^{\mu}}$ and $\mF_{\alpha A}:=\Gamma_{\theta^{\alpha A}}$.
With this notation, the $\theta$-expansion assumes the form
\begin{align*}
\mA_{\mu}&=(\mA_{\mu})_0+\theta^{\beta B}(\mA_{\mu})_{\beta B}+\theta^{\beta B}\theta^{\gamma C}(\mA_{\mu})_{\beta B\,\gamma C}+\ldots\\
\mF_{\alpha A}&=(\mF_{\alpha A})_0+\theta^{\beta B}(\mF_{\alpha A})_{\beta B}+\theta^{\beta B}\theta^{\gamma C}(\mF_{\alpha A})_{\beta B\,\gamma C}+\ldots
\end{align*}
Since $\nabla$ is, by definition, even it follows that $\mA_{\mu}$ and $\mF_{\alpha A}$ are
even respectively odd. The parity of the $\theta$-coefficients in the expansion is thus
alternating. This is the situation considered in \cite{BKS12}. In case of a plain connection
on $\mE$, the odd coefficients in the $\mA_{\mu}$-expansion would be missing,
and analogous for $\mF_{\alpha A}$.
\end{Exp}

Let $\mE\rightarrow N$ be a super vector bundle over $N$ and $\varphi:M\rightarrow N$ be a morphism
of supermanifolds. The pullback of $\mE$ under $\varphi$ is defined as
\begin{align}
\label{eqnPullbackBundle}
\varphi^*\mE(U):=\mO_M(U)\otimes_{\varphi}(\varphi_0^*\mE)(U)\;,\qquad U\subseteq M_0\quad\mathrm{open}
\end{align}
Here, $\varphi_0^*\mE$ is the pullback of the sheaf $\mE$ under the continuous map $\varphi_0$
which, in terms of its sheaf space, is the bundle of stalks $\mE_{\varphi_0(x)}$ attached to
$x\in M_0$. In this context, one can define the pullback $\varphi_0^*X\in\varphi_0^*\mE$
of $X\in\mE$. (\ref{eqnPullbackBundle}) indeed yields a super vector bundle on $M$ of rank $\rk\mE$.
For details, consult \cite{Han12} and \cite{Ten75}.

A local frame $(T^k)$ of $\mE$ gives rise to a local frame $(\varphi_0^*T^k)$ of $\varphi_0^*\mE$
and a local frame $(\varphi^*T^k:=1\otimes_{\varphi}\varphi_0^*T^k)$ of $\varphi^*\mE$ such that,
locally, every section $X\in\varphi^*\mE$ can be written $X=\varphi^*T^k\cdot X^k$ with
$X^k\in\mO_M(U)$. For $Y=T^kY^k\in\mE$, we find
\begin{align*}
\varphi^*Y=\varphi^*(T^kY^k)=\varphi^*T^k\cdot\varphi^{\sharp}(Y^k)
\end{align*}

\begin{Def}
\label{defCanonicalInclusion}
In the following, we shall identify maps $\varphi:S\times M\rightarrow N$ with maps
$\hat{\varphi}:S\times M\rightarrow S\times N$ by composing $\varphi$ with the canonical inclusion
$N\hookrightarrow S\times N$.
\end{Def}

In particular, we will use this identification for $S$-points $x:S\rightarrow M$ and
$S$-paths $\gamma:S\times[0,1]\rightarrow M$.
In terms of generators $\eta^j$ as in (\ref{eqnSuperpoint}), the construction
is such that $\hat{\varphi}^{\sharp}(\eta^j)=\eta^j$.

\begin{Lem}
\label{lemSPullbackBundle}
Let $\varphi:S\times M\rightarrow N$ and $\mE\rightarrow N$ be a super vector bundle.
Then $\varphi^*\mE\cong\hat{\varphi}^*\mE_S$.
\end{Lem}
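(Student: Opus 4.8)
The plan is to recognise both bundles as pullbacks of $\mE$ along suitable morphisms and then to reduce the claim to the functoriality of the pullback construction (\ref{eqnPullbackBundle}). First I would observe that $\mE_S$, formed over $N$ exactly as in Definition \ref{defSConnection}, is nothing but the pullback of $\mE$ along the projection $\mathrm{pr}_N:S\times N\rightarrow N$. Indeed, since $S=\bR^{0|L}$ has the single topological point $\{0\}$, the underlying map $\mathrm{pr}_{N,0}:\{0\}\times N_0\rightarrow N_0$ is a homeomorphism, so $\mathrm{pr}_{N,0}^*\mE$ may be identified with $\mE$ and (\ref{eqnPullbackBundle}) collapses to $\mathrm{pr}_N^*\mE(V)\cong\mO_{S\times N}(V)\otimes_{\mO_N}\mE(V)=\mE_S(V)$, naturally in the open set $V\subseteq N_0\cong (S\times N)_0$. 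Hence $\mE_S\cong\mathrm{pr}_N^*\mE$ canonically.

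The second step is the identity $\mathrm{pr}_N\circ\hat{\varphi}=\varphi$ of morphisms $S\times M\rightarrow N$. By Definition \ref{defCanonicalInclusion}, $\hat{\varphi}$ is characterised by $\hat{\varphi}^{\sharp}(\eta^j)=\eta^j$ on the generators of $S$ and by $\hat{\varphi}^{\sharp}(\zeta^i)=\varphi^{\sharp}(\zeta^i)$ on coordinates $\zeta^i$ of $N$. Since $\mathrm{pr}_N^{\sharp}(\zeta^i)=\zeta^i$ and $(\mathrm{pr}_N\circ\hat{\varphi})^{\sharp}=\hat{\varphi}^{\sharp}\circ\mathrm{pr}_N^{\sharp}$, the two maps agree on coordinates and hence as morphisms. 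Granting the natural isomorphism $\hat{\varphi}^*\bigl(\mathrm{pr}_N^*\mE\bigr)\cong(\mathrm{pr}_N\circ\hat{\varphi})^*\mE$, the two previous steps combine to
\begin{align*}
\hat{\varphi}^*\mE_S\cong\hat{\varphi}^*\mathrm{pr}_N^*\mE\cong(\mathrm{pr}_N\circ\hat{\varphi})^*\mE=\varphi^*\mE\,,
\end{align*}
which is the assertion.

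The main obstacle is precisely the functoriality used above, that is, the natural isomorphism $u^*(v^*\mE)\cong(v\circ u)^*\mE$ for composable morphisms $u,v$, everything else being formal bookkeeping. I would establish it frame-wise: a frame $(T^k)$ of $\mE$ induces frames $(v^*T^k)$ and $(u^*v^*T^k)$ as recalled before Definition \ref{defCanonicalInclusion}, and I would define the comparison map by $u^*v^*T^k\mapsto(v\circ u)^*T^k$. To see that it is well defined and glues to a global isomorphism of sheaves, one checks compatibility with changes of frame: if $\tilde{T}^k=T^l\cdot\phantrix{g}{l}{k}$ with transition matrix $g$ over $\mE$, then both $u^*v^*\mE$ and $(v\circ u)^*\mE$ acquire the transition matrix $(v\circ u)^{\sharp}(\phantrix{g}{l}{k})=u^{\sharp}\bigl(v^{\sharp}(\phantrix{g}{l}{k})\bigr)$, the two descriptions agreeing exactly by contravariant functoriality $(v\circ u)^{\sharp}=u^{\sharp}\circ v^{\sharp}$ of the generalised pullback, together with the standard functoriality $(v\circ u)_0^*=u_0^*\circ v_0^*$ of the topological pullback of sheaves. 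As this compatibility belongs to the standard theory of pullback bundles, one may alternatively simply cite \cite{Han12,Ten75}; I would keep the frame computation as the honest verification.
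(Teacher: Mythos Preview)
Your argument is correct. The paper does not actually supply a proof of this lemma; it merely records the local form of the isomorphism, namely that $(\hat{\varphi}^*T^k)\cdot X^k$ corresponds to $(\varphi^*T^k)\cdot X^k$ with respect to a frame $(T^k)$ of $\mE$. Your approach is a genuine, more conceptual justification: recognising $\mE_S\cong\mathrm{pr}_N^*\mE$ and $\mathrm{pr}_N\circ\hat{\varphi}=\varphi$, and then invoking functoriality $(v\circ u)^*\cong u^*v^*$ of the pullback. The frame-wise verification you give for functoriality, comparing transition matrices via $(v\circ u)^{\sharp}=u^{\sharp}\circ v^{\sharp}$, specialises exactly to the paper's local identification $u^*v^*T^k\leftrightarrow(v\circ u)^*T^k$, so the two viewpoints match. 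What your route buys is a clean global statement and an explanation of \emph{why} the local formula glues; what the paper's one-line local description buys is brevity, at the cost of leaving the well-definedness implicit.
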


Locally, this isomorphism is such that
$X=(\hat{\varphi}^*T^k)\cdot X^k\in\hat{\varphi}^*\mE_S$ is identified with
$X=\varphi^*T^k\cdot X^k\in\varphi^*\mE$.
We define the pullback of $X\in\mE_S$ under $\varphi:S\times M\rightarrow N$ by
\begin{align}
\label{eqnSectionPullback}
\varphi^*X:=\hat{\varphi}^*X\in\hat{\varphi}^*\mE_S\cong\varphi^*\mE
\end{align}
Similarly, an endomorphism $E\in\End_{\mO_{S\times N}}(\mE_S)$ is pulled back under $\varphi$
to an endomorphism along $\varphi$ as follows.
\begin{align}
\label{eqnEndoPullback}
E_{\varphi}\in\End_{\mO_{S\times M}}(\hat{\varphi}^*\mE_S)\;,\qquad
E_{\varphi}(\varphi^*Y):=\varphi^*E(Y)
\end{align}
and analogous for other tensors.

Let $\nabla$ be a connection on $\mE\rightarrow N$ and $\varphi:M\rightarrow N$ be a morphism.
There are two types of pullback connections. With respect to coordinates $(\xi^k)$ of $M$,
we write $X=(\varphi^*\partial_{\xi^i})\cdot X^i\in\varphi^*\mS N$ and prescribe
\begin{align}
\label{eqnPullbackConnectionIntermediate}
&(\varphi^*\nabla):\varphi_0^*\mE\rightarrow(\varphi^*\mS N)^*\otimes_{\mO_M}\varphi^*\mE\\
&(\varphi^*\nabla)_{(\varphi^*\partial_i)X^i}(\varphi^*Z):=(-1)^{\abs{X^i}\abs{\partial_i}}X^i\cdot\varphi^*(\nabla_{\partial_i}Z)\nonumber
\end{align}
The local representations glue together to a well-defined object satisfying a Leibniz rule.
For the second, more common, pullback note that $X\in\varphi^*\mS N$ acts naturally on sections
$f\in\mO_N$ as the superderivation $X(f):=(-1)^{\abs{X^i}\abs{f}}(\varphi^{\sharp}\circ\partial_{\xi^i})(f)\cdot X^i$
along $\varphi$. We define
\begin{align}
\label{eqnPullbackConnectionFinal}
&(\varphi^*\nabla):\varphi^*\mE\rightarrow\mS M^*\otimes_{\mO_M}\varphi^*\mE\\
&(\varphi^*\nabla)_X((\varphi^*T^k)Z^k):=(-1)^{\abs{X}\abs{T^k}}(\varphi^*T^k)\cdot X(Z^k)
+(\varphi^*\nabla)_{d\varphi[X]}\varphi^*T^k\cdot Z^k
\nonumber
\end{align}
using (\ref{eqnDifferentialMatrix}) and (\ref{eqnPullbackConnectionIntermediate})
for the second summand.
Again, this prescription is independent of coordinates and $\mE$-bases and yields a
connection on $\varphi^*\mE\rightarrow M$.

Let now $\nabla$ be an $S$-connection on $\mE_S$ over $N$ and $\varphi:S\times M\rightarrow N$.
We may consider $\nabla$ as an ordinary connection over $S\times N$ and apply
(\ref{eqnPullbackConnectionIntermediate}) to obtain
\begin{align*}
(\hat{\varphi}^*\nabla):\hat{\varphi}_0^*\mE_S\rightarrow\left(\hat{\varphi}^*\mS(S\times N)\right)^*
\otimes_{\mO_{S\times M}}\hat{\varphi}^*\mE_S
\end{align*}
Concatenating this with the adjoint of the inclusion $\mS N_S\subseteq\mS(S\times N)$, and
using $\hat{\varphi}_0^*\mE_S=\varphi_0^*\mE\otimes\mO_S$ as well as
$\hat{\varphi}^*\mS N_S=\varphi^*\mS N$, we yield the first pullback, denoted
\begin{align}
\label{eqnPullbackConnectionIntermediateRelative}
(\varphi^*\nabla):\varphi_0^*\mE\otimes\mO_S\rightarrow(\varphi^*\mS N)^*\otimes_{\mO_{S\times M}}\varphi^*\mE
\end{align}
The second pullback is the connection
\begin{align}
\label{eqnPullbackConnectionRelative}
(\varphi^*\nabla):\varphi^*\mE\rightarrow\mS M_S^*\otimes_{\mO_{S\times M}}\varphi^*\mE
\end{align}
defined verbatim to (\ref{eqnPullbackConnectionFinal}) by means of
(\ref{eqnPullbackConnectionIntermediateRelative})
The local picture is as follows.
\begin{align}
\label{eqnPullbackConnection}
(\varphi^*\nabla)_XZ=(-1)^{\abs{X}\abs{T^k}}(\varphi^*T^k)X(Z^k)
+X(\varphi^*(\xi^l))\hat{\varphi}^*(\nabla_{\partial_{\xi^l}}T^k)\cdot Z^k
\end{align}

\subsection{Parallel Transport}

\begin{Def}
A section $X\in\gamma^*\mE$ is called parallel if $(\gamma^*\nabla)_{\partial_t}X\equiv 0$.
\end{Def}

The local form is as follows. As above, we write $X=(\gamma^*T^k)\cdot X^k$,
thus identifying $X$ with the $t$-dependent column vector $X(t)\in(\mO_S)^{\rk\mE}$.
We further use the notation $\Gamma^m_{lk}\cdot T^m:=\Gamma_{\xi^l}[T^k]$
with $\Gamma_{\xi^l}$ as in (\ref{eqnLocalConnection}).
By (\ref{eqnPullbackConnection}), the parallelness condition in local coordinates reads
\begin{align}
\label{eqnParallel}
\partial_tX(t)=-\mB(t)\cdot X(t)\;,\qquad
\phantrix{\mB(t)}{m}{k}=(-1)^{\abs{T^m}(\abs{T^k}+1)}\partial_t(\gamma^*(\xi^l))\cdot\hat{\gamma}^*(\Gamma^m_{lk})
\end{align}
with $\mB(t)\in\End_{\mO_{S}}(\gamma^*\mE)_{\overline{0}}\cong\Mat_{\rk\mE\times\rk\mE}(\mO_S)_{\overline{0}}$.

\begin{Exp}
In the situation of Exp. \ref{expBKSConnection}, the matrix $\mB(t)$ can be written in the form
\begin{align*}
\mB(t)=\dot{x}^{\mu}(t)\mA_{\mu}+\dot{\theta}^{\alpha A}(t)\mF_{\alpha A}
\end{align*}
This is equation (17) of \cite{BKS12}.
\end{Exp}

The next result follows from standard facts on ODEs applied to (\ref{eqnParallel}).

\begin{Lem}
Let $X_x\in x^*\mE$ be a section along an $S$-point $x:S\rightarrow M$, and
$\gamma$ be a piecewise smooth $S$-path with $\ev|_{t=0}\gamma^{\sharp}=x^{\sharp}$.
Then there exists a unique parallel section $X\in\gamma^*\mE$ along $\gamma$ such that
$\ev|_{t=0}X=X_x$.
\end{Lem}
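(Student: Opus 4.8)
The plan is to reduce the statement to an existence-and-uniqueness theorem for a linear system of ordinary differential equations, and then to verify that the solution furnished by the classical theory actually assembles into a section of the pullback bundle $\gamma^*\mE$ in the supergeometric sense. First I would record that, by the local formula (\ref{eqnParallel}), parallelness of $X=(\gamma^*T^k)\cdot X^k$ is equivalent to the first-order linear system $\partial_tX(t)=-\mB(t)\cdot X(t)$, where the column vector $X(t)\in(\mO_S)^{\rk\mE}$ collects the coefficient functions and $\mB(t)\in\Mat_{\rk\mE\times\rk\mE}(\mO_S)_{\overline{0}}$ depends smoothly on $t$. The initial condition $\ev|_{t=0}X=X_x$ fixes $X(0)$ to be the coefficient vector of $X_x\in x^*\mE$. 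Thus the lemma is precisely the assertion that this initial value problem has a unique solution.

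The key point is that the coefficients $\phantrix{\mB(t)}{m}{k}$ take values in the Graßmann algebra $\mO_S=\bigwedge\bR^L$, which is a finite-dimensional commutative (in the $\bZ_2$-graded sense) $\bR$-algebra, not in $\bR$ itself. I would therefore expand every entry in the finite basis of monomials $\eta^I=\eta^{i_1}\cdots\eta^{i_{\abs{I}}}$, writing $\mB(t)=\sum_I\eta^I\,\mB_I(t)$ with ordinary matrix-valued functions $\mB_I(t)$ smooth in $t$, and likewise $X(t)=\sum_I\eta^I\,X_I(t)$. Substituting into $\partial_tX=-\mB\cdot X$ and comparing coefficients of each monomial yields a finite, triangular hierarchy of ordinary matrix ODEs: the lowest component $X_{\emptyset}$ solves the honest linear equation $\partial_tX_{\emptyset}=-\mB_{\emptyset}X_{\emptyset}$ with the given initial datum, and each higher component $X_I$ satisfies a linear equation whose inhomogeneity is built from the already-determined lower components. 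Standard ODE theory (existence, uniqueness, and smooth dependence for linear systems with continuous coefficients, applied degree by degree) then gives a unique smooth solution of the whole system on $[0,1]$.

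For the piecewise smooth case I would solve the system successively on each subinterval $[t_j,t_{j+1}]$, taking as initial datum on the $(j+1)$-st piece the terminal value $\ev|_{t=t_{j+1}}$ of the solution on the $j$-th piece; the matching conditions in the definition of a piecewise smooth path guarantee that these data are compatible across the break points, so the local solutions glue to a single section $X\in\gamma^*\mE$ that is smooth on each piece and continuous overall. The main obstacle, or rather the point deserving care, is not the analysis but the bookkeeping: one must check that the monomial-by-monomial solutions reassemble into a genuine section of the pullback bundle independent of the choices of local frame $(T^k)$ and coordinates $(\xi^l)$, and that the parity of $X$ is as prescribed. This follows because (\ref{eqnParallel}) is the coordinate shadow of the frame-independent equation $(\gamma^*\nabla)_{\partial_t}X\equiv 0$, so a change of frame transforms $\mB(t)$ by the corresponding gauge rule and carries solutions to solutions; uniqueness then forces the locally defined sections to agree on overlaps. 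Hence the parallel section exists and is unique.
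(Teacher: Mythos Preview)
Your argument is correct and in the same spirit as the paper, which does not spell out a proof but only remarks that the lemma ``follows from standard facts on ODEs applied to (\ref{eqnParallel}).'' The one place where the paper makes the reduction explicit is the proof of Lem.~\ref{lemParallelSolution}: there the finite-dimensional real vector space $(\mO_S)^{\rk\mE}\cong\bR^{\rk\mE}\otimes\bigwedge\bR^L\cong\bR^M$ is used to turn $\partial_tX=-\mB(t)X$ into a single classical linear system with real matrix coefficients. Your monomial expansion $X=\sum_I\eta^I X_I$, producing a triangular hierarchy solved degree by degree, is an equivalent way to effect the same reduction; it makes the nilpotent filtration visible, whereas the paper's identification is a one-line shortcut. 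Your handling of the piecewise smooth case and of frame independence is more detailed than anything the paper writes for this lemma and is fine as stated.
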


\begin{Def}
\label{defParallelTransport}
Let $\gamma:x\rightarrow y$ be a smooth $S$-path and
let $X_x\in x^*\mE$ be a vector field along $x:S\rightarrow M$. We define the parallel transport
\begin{align*}
P_{\gamma}:x^*\mE\rightarrow y^*\mE\;,\qquad P_{\gamma}(X_x):=\ev_{t=1}X
\end{align*}
where $X\in\gamma^*\mE$ denotes the parallel vector field such that $\ev_{t=0}X=X_x$.
\end{Def}

For smooth $S$-paths $\gamma:x\rightarrow y$ and $\delta:y\rightarrow z$, we define parallel
transport of the concatenation by $P_{\delta\star\gamma}:=P_{\delta}\circ P_{\gamma}$.
If $\delta\star\gamma$ happens to be smooth, this definition agrees with the one from
Def. \ref{defParallelTransport} by the following lemma.

\begin{Lem}
\label{lemPathConcatenation}
Let $a<b<c$ and $\gamma:S\times[a,c]\rightarrow M$ be a smooth $S$-path. Then
\begin{align*}
P_{\gamma|_{S\times[b,c]}}\circ P_{\gamma|_{S\times[a,b]}}=P_{\gamma}
\end{align*}
\end{Lem}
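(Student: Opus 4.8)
The plan is to reduce everything to the uniqueness of parallel sections, that is, to the uniqueness of solutions of the linear ODE (\ref{eqnParallel}). First I would observe that the notions of parallel section and of parallel transport carry over verbatim to smooth $S$-paths defined over an arbitrary compact interval $S\times[a,b]$ in place of $S\times[0,1]$: the parallelness condition $(\gamma^*\nabla)_{\partial_t}X\equiv 0$ is purely local in the parameter $t$ and, in local coordinates, is nothing but the linear initial value problem (\ref{eqnParallel}) with coefficient matrix $\mB(t)\in\Mat_{\rk\mE\times\rk\mE}(\mO_S)_{\overline{0}}$. Existence and uniqueness of the parallel section with prescribed value at the left endpoint therefore hold on every such interval, exactly as in the lemma preceding Def.\ \ref{defParallelTransport}.

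Next I would make precise that restriction to a subinterval is compatible with the pullback data. Since the pullback bundle $\gamma^*\mE$ and the pullback connection $\gamma^*\nabla$ are constructed locally in $t$ (see the local formula (\ref{eqnPullbackConnection})), the bundle $(\gamma|_{S\times[a,b]})^*\mE$ is canonically the restriction of $\gamma^*\mE$ to $S\times[a,b]$, and the pullback connection restricts accordingly; consequently the ODE (\ref{eqnParallel}) governing parallelness is literally the restriction of the one on $S\times[a,c]$. In particular, if $X\in\gamma^*\mE$ is parallel along $\gamma$ on $[a,c]$, then $X|_{S\times[a,b]}$ and $X|_{S\times[b,c]}$ are parallel along the respective restricted paths.

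With these observations in place the claim is an immediate consequence of uniqueness. Fix $X_x\in x^*\mE$, where $x:=\ev|_{t=a}\gamma$, and let $X\in\gamma^*\mE$ be the unique parallel section on $[a,c]$ with $\ev|_{t=a}X=X_x$. Set $X_y:=\ev|_{t=b}X$. Then $X|_{S\times[a,b]}$ is the parallel section with left value $X_x$, so $P_{\gamma|_{S\times[a,b]}}(X_x)=X_y$; and $X|_{S\times[b,c]}$ is, by uniqueness, the parallel section with left value $X_y$, whence $P_{\gamma|_{S\times[b,c]}}(X_y)=\ev|_{t=c}X=P_{\gamma}(X_x)$. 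Composing yields $P_{\gamma|_{S\times[b,c]}}\circ P_{\gamma|_{S\times[a,b]}}(X_x)=P_{\gamma}(X_x)$, and since $X_x$ was arbitrary the two maps coincide.

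There is no deep obstacle here; the only point requiring care is the bookkeeping of the second paragraph, namely verifying that the pullbacks of bundle and connection, and hence the parallelness ODE, are genuinely compatible with restriction to a closed subinterval, so that a restricted parallel section is again parallel. Once that functoriality is recorded, the entire statement is the elementary observation that the flow of the linear ODE (\ref{eqnParallel}) from $a$ to $c$ factors through its value at the intermediate time $b$. If one insists that parallel transport be defined only over $[0,1]$, one additionally notes that the defining equation transforms as a one-form in $t$, so that $P_{\gamma}$ is invariant under orientation-preserving reparametrisation and the subintervals may be rescaled to $[0,1]$ without affecting the result.
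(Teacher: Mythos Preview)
Your proposal is correct and follows essentially the same approach as the paper: both arguments reduce to the uniqueness of solutions of the linear parallelness ODE (\ref{eqnParallel}). The paper defines a section piecewise by transporting first along $[a,b]$ and then along $[b,c]$ and invokes uniqueness on the whole interval $[a,c]$, whereas you start with the global parallel section on $[a,c]$ and invoke uniqueness on each subinterval after restriction; this is the same idea read in the opposite direction.
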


\begin{proof}
Let $X_x\in x^*\mE$. We define $X\in\gamma^*\mE$ by setting
\begin{align*}
X(t):=P_{\gamma|_{S\times[a,t]}}[X_x]\quad(t\in[a,b])\quad,\qquad
X(t):=P_{\gamma|_{S\times[b,t]}}\circ P_{\gamma|_{S\times[a,b]}}[X_x]\quad(t\in[b,c])
\end{align*}
Then $X(t)$ satisfies $(\gamma^*\nabla)_{\partial_t}X=0$ for every $t\in[a,c]$ and has the initial condition
$X(0)=X_x$. By uniqueness of the solution, we thus conclude that
$X(t)=P_{\gamma|_{S\times[a,t]}}[X_x]$.
\end{proof}

\begin{Lem}
\label{lemParallelTranslationSuperlinear}
$P_{\gamma}$ is even (i.e. parity-preserving), $\mO_S$-superlinear and invertible
such that $(P_{\gamma})^{-1}=P_{\gamma^{-1}}$.
\end{Lem}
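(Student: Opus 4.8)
The plan is to reduce all three assertions to the defining linear ODE (\ref{eqnParallel}) and to exploit that its coefficient matrix $\mB(t)$ is even with entries in $\mO_S$. Fixing a local frame, a section $X\in\gamma^*\mE$ is identified with a $t$-dependent column vector $X(t)\in(\mO_S)^{\rk\mE}$; let $U(t)\in\Mat_{\rk\mE\times\rk\mE}(\mO_S)$ be the fundamental solution of $\partial_tU(t)=-\mB(t)\cdot U(t)$ with $U(0)=\id$, which exists and is unique by the same argument that gave existence and uniqueness of parallel sections above. Then $P_{\gamma}(X_x)=U(1)\cdot X_x$, so every claim follows from a corresponding property of $U$.

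For evenness, note that since $\mB(t)$ is even the equation $\partial_tU=-\mB U$ preserves the $\bZ_2$-grading of each column; as $U(0)=\id$ is even, $U(t)$ is even for all $t$, and hence $P_{\gamma}=U(1)$ is parity-preserving. For $\mO_S$-superlinearity, additivity is the superposition principle, and compatibility with the right $\mO_S$-action is checked as follows: for $f\in\mO_S$, constant in $t$, the section $X\cdot f$ obeys $\partial_t(X(t)f)=(\partial_tX(t))f=-\mB(t)(X(t)f)$, the left action of $\mB$ and right multiplication by $f$ commuting by associativity in $\mO_S$. By uniqueness $X\cdot f$ is the parallel section with initial value $X_x\cdot f$, so $P_{\gamma}(X_x\cdot f)=P_{\gamma}(X_x)\cdot f$; since $P_{\gamma}$ is even, the parity sign is trivial and this is exactly $\mO_S$-superlinearity.

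For invertibility I would argue by path reversal. Given $X_x\in x^*\mE$ with parallel section $X(t)$, set $\tilde X(s):=X(1-s)$ and check that $\tilde X$ is parallel along $\gamma^{-1}$. The key computation compares coefficient matrices: from $\ev|_{t=s}(\gamma^{-1})^{\sharp}=\ev|_{t=(1-s)}\gamma^{\sharp}$ one reads off $\partial_s((\gamma^{-1})^*\xi^l)=-\partial_t(\gamma^*\xi^l)|_{t=1-s}$ while the pulled-back symbols $\hat\gamma^*(\Gamma^m_{lk})$ are merely reparametrised, so (\ref{eqnParallel}) gives $\mB_{\gamma^{-1}}(s)=-\mB_{\gamma}(1-s)$. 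Consequently $\partial_s\tilde X(s)=-\partial_tX|_{t=1-s}=\mB_{\gamma}(1-s)\tilde X(s)=-\mB_{\gamma^{-1}}(s)\tilde X(s)$, so $\tilde X$ solves the parallelness equation for $\gamma^{-1}$. As $\tilde X(0)=X(1)=P_{\gamma}(X_x)$ and $\tilde X(1)=X(0)=X_x$, uniqueness yields $P_{\gamma^{-1}}\circ P_{\gamma}=\id_{x^*\mE}$, and the same argument applied to $\gamma^{-1}$ (whose reversal is $\gamma$) gives $P_{\gamma}\circ P_{\gamma^{-1}}=\id_{y^*\mE}$, hence $(P_{\gamma})^{-1}=P_{\gamma^{-1}}$. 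The calculations are routine; the only place demanding care is the sign bookkeeping in (\ref{eqnParallel}) when deriving $\mB_{\gamma^{-1}}(s)=-\mB_{\gamma}(1-s)$, but since $\mB$ is globally even every parity sign is trivial and the identity reduces to the reparametrisation $t\mapsto 1-s$ together with associativity in $\mO_S$.
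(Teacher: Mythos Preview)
Your proof is correct and follows exactly the approach sketched in the paper: evenness from the evenness of $\mB(t)$, $\mO_S$-superlinearity from the linearity of (\ref{eqnParallel}), and invertibility via the reparametrisation $t\mapsto 1-t$ showing that $P_{\gamma}$ and $P_{\gamma^{-1}}$ solve the same equation. You have simply supplied the details the paper leaves implicit under ``standard ODE arguments''.
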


\begin{proof}
This is shown by standard ODE arguments as follows. Parallel transport is even since the matrix
$\mB(t)$ in (\ref{eqnParallel}) is even. From the same equation, $\mO_S$-linearity is clear.
It is invertible since both $P_{\gamma}$ and $P_{\gamma^{-1}}$ satisfy the same equation
(\ref{eqnParallel}) at $t$ and $1-t$, respectively.
\end{proof}

By restriction, an $S$-connection $\nabla$ on $\mE_S$ induces a connection
$\nabla^{\mE}:\mE\rightarrow\mS M^*\otimes_{\mO_M}\mE$. By further restriction, we obtain
a classical connection $\nabla^0:\Gamma(E)\rightarrow\Gamma(TM_0)\otimes\Gamma(E)$
on the vector bundle $E:=\bigcup_{x\in M_0}\mE_x\rightarrow M_0$ (denoted $\tilde{\nabla}$
in \cite{Gal09}). Let $P_{\gamma_0}:E_{\gamma_0(0)}\rightarrow E_{\gamma_0(1)}$ denote
parallel transport along a path $\gamma_0:[0,1]\rightarrow M_0$ (denoted $\tau_{\gamma}$
in \cite{Gal09}). On the other hand, let $(P_{\gamma})^0:E_{\gamma_0(0)}\rightarrow E_{\gamma_0(1)}$
denote the restriction of $\nabla$-parallel transport along $\gamma:S\times[0,1]\rightarrow M$.

\begin{Lem}
\label{lemUnderlyingParallelTransport}
Let $\gamma:x\rightarrow y$ be an $S$-path. Then $(P_{\gamma})^0=P_{\gamma_0}$.
\end{Lem}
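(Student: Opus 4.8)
The plan is to compare the defining first-order ODEs in local coordinates and to show that the reduction (body) of the super equation is exactly the classical parallel transport equation for $\nabla^0$ along $\gamma_0$. Since $S=\bR^{0|L}$ has reduced ring $\bR$, write $\epsilon:\mO_S\rightarrow\bR$ for the augmentation sending each $\eta^j\mapsto 0$; this is an even $\bR$-algebra homomorphism which commutes with $\partial_t$, as $t$ parametrises the second factor of $S\times[0,1]$. By Lem. \ref{lemParallelTranslationSuperlinear} the map $P_{\gamma}$ is $\mO_S$-superlinear, so its restriction $(P_{\gamma})^0:E_{\gamma_0(0)}\rightarrow E_{\gamma_0(1)}$ is obtained by applying $\epsilon$ entrywise to the matrix of $P_{\gamma}$.

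First I would fix coordinates $\xi=(x^{\mu},\theta^{\alpha A})$ and an adapted frame $(T^k)$, and invoke (\ref{eqnParallel}): the parallel section $X(t)\in(\mO_S)^{\rk\mE}$ solves $\partial_tX=-\mB(t)\cdot X$ with $\phantrix{\mB(t)}{m}{k}=(-1)^{\abs{T^m}(\abs{T^k}+1)}\partial_t(\gamma^*(\xi^l))\cdot\hat{\gamma}^*(\Gamma^m_{lk})$. Applying $\epsilon$ and using that it commutes with $\partial_t$ gives $\partial_t(\epsilon X)=-\epsilon(\mB)\cdot(\epsilon X)$ with initial value $\epsilon(X(0))=\epsilon(X_x)$, the body of the prescribed initial vector. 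It therefore suffices to identify $\epsilon(\mB(t))$ with the classical connection matrix $B_0(t)$ of $\nabla^0$ along $\gamma_0$.

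The heart of the argument is the reduction of the entries. For an odd coordinate $\xi^l=\theta^{\alpha A}$ the pullback $\gamma^*(\theta^{\alpha A})$ is odd, hence $\epsilon(\partial_t(\gamma^*(\theta^{\alpha A})))=\partial_t\epsilon(\gamma^*(\theta^{\alpha A}))=0$, so these summands drop out. For an even coordinate $\xi^l=x^{\mu}$ one has $\epsilon(\gamma^*(x^{\mu}))=\gamma_0^*(x^{\mu})$, whence $\epsilon(\partial_t(\gamma^*(x^{\mu})))=\partial_t(\gamma_0^*(x^{\mu}))$, while $\epsilon(\hat{\gamma}^*(\Gamma^m_{\mu k}))=(\Gamma^m_{\mu k})^0\circ\gamma_0$ is the reduced Christoffel symbol of $\nabla^0$. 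It remains to dispose of the sign. Since $\nabla$ is even and $\partial_{x^{\mu}}$ is even, $\Gamma^m_{\mu k}T^m=\nabla_{\partial_{x^{\mu}}}T^k$ has parity $\abs{T^k}$, so $\abs{\Gamma^m_{\mu k}}=\abs{T^m}+\abs{T^k}$; a nonzero body forces $\Gamma^m_{\mu k}$ to be even and thus $\abs{T^m}=\abs{T^k}$. On these surviving terms $(-1)^{\abs{T^m}(\abs{T^k}+1)}=(-1)^{\abs{T^k}(\abs{T^k}+1)}=1$, so $\epsilon(\phantrix{\mB(t)}{m}{k})=\partial_t(\gamma_0^*(x^{\mu}))\cdot(\Gamma^m_{\mu k})^0\circ\gamma_0=\phantrix{(B_0(t))}{m}{k}$.

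Thus $\epsilon(X)$ solves exactly the ODE defining classical parallel transport along $\gamma_0$ with initial value $\epsilon(X_x)$; by uniqueness of solutions to linear ODEs this forces $\epsilon(X(1))=P_{\gamma_0}(\epsilon(X_x))$, and since $\epsilon$ is surjective every fibre vector arises as such a body, giving $(P_{\gamma})^0=P_{\gamma_0}$. The only delicate point is the sign bookkeeping in the penultimate step, where evenness of $\nabla$ is precisely what collapses the odd sign factor to $1$ on exactly those terms whose body survives.
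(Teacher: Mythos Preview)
Your proof is correct and follows essentially the same approach as the paper's own proof, which also reduces the ODE (\ref{eqnParallel}) by observing that $\partial_t(\gamma^*(\xi^l))$ is odd for $\xi^l$ odd and hence projected to zero. Your version is considerably more detailed, in particular the careful treatment of the sign factor $(-1)^{\abs{T^m}(\abs{T^k}+1)}$ and the explicit use of ODE uniqueness, but the underlying idea is the same.
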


\begin{proof}
This follows immediately from (\ref{eqnParallel}). Note that $\partial_t(\gamma^*(\xi^l))$
is odd, for $\xi^l$ an odd coordinate, and thus projected to zero, leaving only even
indices $l$ in $\gamma^*(\Gamma_{lk}^m)$.
\end{proof}

By Lem. \ref{lemParallelTranslationSuperlinear}, $P_{\gamma}$ is an isomorphism from
$x^*\mE$ to $y^*\mE$. With respect to local bases $(T^k)$ and $(\tilde{T}^k)$
of $\mE$ around $\gamma_0(0)$ and $\gamma_0(1)$, respectively, it can thus be identified
with a matrix in $GL_{\rk\mE}(\mO_S)$.

\begin{Lem}
\label{lemParallelSolution}
The solution to (\ref{eqnParallel}) is given by
\begin{align*}
X(t)=\mP\exp\left(-\int_0^t\mB(\tau)d\tau\right)[X_x]
:=\sum_{j=0}^{\infty}(-1)^j\int_0^td\tau_j\ldots\int_0^{\tau_2}d\tau_1\mB(\tau_j)\cdot\ldots\mB(\tau_1)X_x
\end{align*}
where $X_x\in x^*\mE$ and $x^{\sharp}=\ev|_{t=0}\gamma^{\sharp}$.
\end{Lem}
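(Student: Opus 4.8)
The plan is to verify that the proposed path-ordered exponential series solves the linear ODE $\partial_t X(t) = -\mathcal{B}(t)\cdot X(t)$ of (\ref{eqnParallel}) with the correct initial condition, and then to invoke the uniqueness established by the preceding lemma. First I would check the initial condition: at $t=0$ every integral in the defining series vanishes except the $j=0$ term, which is simply $X_x$, so $\ev|_{t=0} X = X_x$ as required. This matches $x^{\sharp}=\ev|_{t=0}\gamma^{\sharp}$.

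Next I would differentiate the series term by term in $t$. Writing the generic summand as $(-1)^j\int_0^t d\tau_j \int_0^{\tau_j} d\tau_{j-1}\cdots\int_0^{\tau_2} d\tau_1\, \mathcal{B}(\tau_j)\cdots\mathcal{B}(\tau_1)X_x$, the fundamental theorem of calculus applied to the outermost integral yields $(-1)^j \mathcal{B}(t)\int_0^{t} d\tau_{j-1}\cdots\int_0^{\tau_2}d\tau_1\,\mathcal{B}(\tau_{j-1})\cdots\mathcal{B}(\tau_1)X_x$. Factoring out $-\mathcal{B}(t)$ turns the $j$-th derivative into $-\mathcal{B}(t)$ times the $(j-1)$-th summand, so summing over $j$ telescopes to $\partial_t X(t) = -\mathcal{B}(t)\cdot X(t)$. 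Here it is essential that the entries $\phantrix{\mathcal{B}(t)}{m}{k}$ lie in $\mO_S$ and are matrix-valued, so they need not commute; the ordering in the iterated integral (larger argument on the left, via the nested limits $0\le\tau_1\le\dots\le\tau_j\le t$) is exactly what makes the telescoping respect this noncommutativity. Since $\mathcal{B}(t)$ is even by Lemma \ref{lemParallelTranslationSuperlinear}, left-multiplication by $-\mathcal{B}(t)$ commutes with the differentiation and preserves parity, so no sign subtleties from the $\bZ_2$-grading intervene.

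The main obstacle, and the step deserving genuine care rather than the formal manipulation above, is convergence and the legitimacy of differentiating the series term by term. Because the coefficients of $\mathcal{B}(t)$ take values in the finite-dimensional nilpotent algebra $\mO_S=\bigwedge\bR^L$, any product of more than $L$ odd generators vanishes; however one must argue that the Graßmann-valued entries do not produce an unbounded number of surviving terms. The clean way is to observe that the series is, in each fixed $\eta$-multidegree, a finite sum of ordinary matrix-valued iterated integrals with continuous integrands on the compact interval $[0,t]$, for which the standard majorant estimate $\|\text{$j$-th term}\|\le (\int_0^t\|\mathcal{B}\|)^j/j!$ gives uniform convergence and justifies termwise differentiation. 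I would therefore decompose $\mathcal{B}(t)=\sum_I \eta^I \mathcal{B}_I(t)$ into its $\bigwedge\bR^L$-components, note that only finitely many multi-indices $I$ occur, and apply the classical Picard–Lindelöf/Dyson-series argument componentwise. Once convergence and termwise differentiability are secured, the verification of the ODE and initial condition above shows $X(t)$ is a parallel section with $\ev|_{t=0}X=X_x$, and uniqueness identifies it with the parallel transport, completing the proof.
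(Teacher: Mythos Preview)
Your proposal is correct and follows essentially the same route as the paper: reduce the $\mO_S$-valued linear ODE to an ordinary finite-dimensional one, establish absolute convergence of the Dyson series by the standard factorial estimate, and then differentiate termwise to verify (\ref{eqnParallel}) and the initial condition. The only cosmetic difference is that the paper performs the reduction by identifying $\bR^{\rk\mE}\otimes\bigwedge\bR^L$ with a single real vector space $\bR^M$ (so that $\mB(t)$ becomes an ordinary $M\times M$ matrix and the convergence is argued once in the Banach space $C^1([0,1],\Mat(M\times M,\bR))$), whereas you decompose by $\eta$-multidegree; both lead to the same classical estimate.
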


\begin{proof}
By assumption, $\gamma_0$ takes values in $U_0\subseteq M_0$ such that both $M|_{U_0}$ and $\mE|_{U_0}$ are trivial.
We may thus identify (as vector spaces), for every $t\in[0,1]$, $\ev|_t\gamma^*\mE$ with $\bR^{\rk\mE}\otimes\bigwedge\bR^L\cong\bR^M$
for some $M\in\bN$. With this identification, the $\bR$-linear operator $\mB(t)$ becomes a matrix in $\Mat(M\times M,\bR)$,
and $\partial_tX(t)=-\mB(t)\cdot X(t)$ can be considered as a classical first order linear ordinary differential equation.
It remains to show that the series stated converges absolutely in the Banach space $C^1([0,1],\Mat(M\times M,\bR))$.
Then, differentiating termwise, it follows that it is indeed the solution operator.
These steps are standard. See Lem. 2.6.7 of \cite{Bae09} for a similar treatment.
\end{proof}

\begin{Rem}
Redefining (\ref{eqnLocalConnection}) as
$\Gamma_{\xi^i}[T^j]:=\frac{i}{g}\nabla_{\partial_{\xi^i}}T^j$, we get the
parallelness equation $\partial_tX(t)=ig\mB(t)\cdot X(t)$, and thus the solution operator
\begin{align*}
X(t)=\mP\exp\left(ig\int_0^t\mB(\tau)d\tau\right)[X_x]:=\sum_{j=0}^{\infty}(ig)^j\int_0^td\tau_j\ldots\int_0^{\tau_2}d\tau_1\mB(\tau_j)\cdot\ldots\mB(\tau_1)X_x
\end{align*}
as in (\ref{eqnWilsonLine}). This convention is more usual in the physical literature.
\end{Rem}

An important property of the Wilson loop is its gauge-invariance. We close this chapter
showing that the trace of parallel transport around an $S$-loop is gauge-invariant,
thus qualifying as a model for the super Wilson loop. We restrict attention to local
gauge transformations in a coordinate chart $U\subseteq M$, which is sufficient for
the situation $M\cong\bR^{n|m}$ considered in \cite{BKS12} and avoids the theory
of super principal bundles.

\begin{Def}
\label{defGaugeTrafo}
A (local) gauge transformation is a morphism of supermanifolds
\begin{align*}
V:S\times U\rightarrow GL_{\rk\mE}\quad\textrm{identified with}\qquad
\left(V^{\sharp}(\zeta^{kl})\right)_{kl}\in GL_{\rk\mE}(\mO_{S\times M}(U))
\end{align*}
where $\zeta^{kl}$ denote the global standard coordinates of the super Lie group $GL_{\rk\mE}$.
It acts on sections $\psi\in\mE_S(U)$ and connections $\nabla$ via
\begin{align*}
\psi\mapsto V\cdot\psi\;,\qquad
\Gamma_{\xi^i}\mapsto V\cdot\Gamma_{\xi^i}\cdot V^{-1}-(\partial_{\xi^i}V)V^{-1}
\end{align*}
where $\Gamma_{\xi^i}$ is as in (\ref{eqnLocalConnection}).
\end{Def}

Consider an $S$-path $\gamma:S\times[0,1]\rightarrow U$
and the concatenation $V_{\gamma}:=V\circ\hat{\gamma}:S\times[0,1]\rightarrow GL_{\rk\mE}$. Let $\mB(t)$ be as in (\ref{eqnParallel}) with respect to the original connection $\nabla$ (and $\gamma$)
and $\tilde{\mB}(t)$ be its gauge transformed counterpart. Then
\begin{align*}
\tilde{\mB}(t)&=\partial_t(\hat{\gamma}^*(\xi^l))\cdot\hat{\gamma}^*\left(V\Gamma_{\xi^l}V^{-1}-(\partial_{\xi^l}V)V^{-1}\right)\\
&=V_{\gamma}\cdot\mB(t)\cdot V_{\gamma}^{-1}-(\partial_tV_{\gamma})\cdot V_{\gamma}^{-1}
\end{align*}
It follows that
\begin{align*}
(\gamma^*\tilde{\nabla})_{\partial_t}(V_{\gamma}\cdot X)=\left(\partial_t+V_{\gamma}\cdot\mB(t)\cdot V_{\gamma}^{-1}-(\partial_tV_{\gamma})\cdot V_{\gamma}^{-1}\right)V_{\gamma}\cdot X
=V_{\gamma}\cdot(\gamma^*\nabla)_{\partial_t}X
\end{align*}
In particular, $X\in\gamma^*\mE$ is $\nabla$-parallel if and only if
$V_{\gamma}\cdot X\in\gamma^*\mE$ is $\tilde{\nabla}$-parallel.

Now let $X_x\in x^*\mE$, and let $\gamma:x\rightarrow y$ connect the $S$-points $x$ and $y$.
Then, $V_x\cdot X_x$ with $V_x:=V\circ\hat{x}$ is moved by $\tilde{\nabla}$-parallel transport
to $V_y$ times $\nabla$-parallel transport of $X_x$. We thus arrive at the following result.

\begin{Prp}
Let $\tilde{P}$ denote parallel transport with respect to the gauge transformed connection $\tilde{\nabla}$. Then
$\tilde{P}=V_y\cdot P\cdot V_x^{-1}$. In particular, if $\gamma:x\rightarrow x$ is closed,
\begin{align*}
\tilde{P}=V_x\cdot P\cdot V_x^{-1}
\end{align*}
and the trace $\tr P=\tr\tilde{P}$ is a gauge invariant quantity.
\end{Prp}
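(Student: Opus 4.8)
The plan is to extract the statement from the equivalence established immediately above, namely that $X\in\gamma^*\mE$ is $\nabla$-parallel if and only if $V_\gamma\cdot X$ is $\tilde\nabla$-parallel, combined with a comparison of boundary values. First I would fix $X_x\in x^*\mE$ and take the unique $\nabla$-parallel section $X\in\gamma^*\mE$ with $\ev|_{t=0}X=X_x$, so that $P_\gamma(X_x)=\ev|_{t=1}X$ by Def.~\ref{defParallelTransport}. Then $V_\gamma\cdot X$ is $\tilde\nabla$-parallel, and since $V_\gamma=V\circ\hat\gamma$ with $\ev|_{t=0}\hat\gamma=\hat x$ and $\ev|_{t=1}\hat\gamma=\hat y$, its boundary values are $\ev|_{t=0}(V_\gamma\cdot X)=V_x\cdot X_x$ and $\ev|_{t=1}(V_\gamma\cdot X)=V_y\cdot P_\gamma(X_x)$.

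By uniqueness of the parallel section through a prescribed initial value, applied now to $\tilde\nabla$, the section $V_\gamma\cdot X$ is the $\tilde\nabla$-parallel transport of $V_x\cdot X_x$; hence $\tilde P(V_x\cdot X_x)=V_y\cdot P_\gamma(X_x)$. As $V_x$ is invertible, being the restriction of a morphism into $GL_{\rk\mE}$, and $X_x$ is arbitrary, this yields $\tilde P=V_y\cdot P\cdot V_x^{-1}$. For a closed path $x=y$ we have $V_y=V_x$ and therefore $\tilde P=V_x\cdot P\cdot V_x^{-1}$.

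It then remains to pass to traces. Using Lem.~\ref{lemParallelTranslationSuperlinear} I would identify $P$ and $\tilde P$ with invertible matrices in $GL_{\rk\mE}(\mO_S)$ relative to a local frame, and read off $\tr\tilde P=\tr P$ from invariance of the trace under conjugation by the even invertible matrix $V_x$.

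The step I expect to be the genuine obstacle is precisely this conjugation invariance. For a purely even bundle it is immediate: when $\rk\mE=(\rk\mE)_{\overline{0}}|0$, as in the super Wilson loop of Exp.~\ref{expBKSConnection} where $\rk\mE=\dim\mathfrak{su}(N)|0$, all matrix entries lie in the commutative even part of $\mO_S$ and ordinary cyclicity $\tr(AB)=\tr(BA)$ applies. For a genuinely graded bundle this breaks down, since conjugating an even supermatrix by an even invertible one shifts the ordinary trace by a sum of products of odd off-diagonal entries; there one must use the supertrace, which is conjugation invariant. I would accordingly either restrict the proposition to even $\mE$, which already suffices for the model of \cite{BKS12}, or phrase the gauge-invariant quantity in terms of the supertrace.
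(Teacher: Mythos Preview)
Your argument is correct and mirrors the paper's: the paragraph preceding the proposition already records that $V_x\cdot X_x$ is carried by $\tilde{\nabla}$-parallel transport to $V_y\cdot P_{\gamma}(X_x)$, which is precisely your boundary-value computation, and the proposition is stated as an immediate consequence. Your caveat about the trace is well taken and goes slightly beyond the paper---the ordinary trace is conjugation-invariant only for purely even rank (as in Exp.~\ref{expBKSConnection}, which is the intended application), while for genuinely graded $\rk\mE$ one must use the supertrace.
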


By now, we have achieved the first aim of this article of constructing a mathematical model
of super Wilson loops. Superpoints are $S$-points, and a super Wilson loop is the gauge-invariant
trace of parallel transport around an $S$-loop. The exact choice of $S=\bR^{0|L}$ is not
important, except that $L$ should be sufficiently large to make calculations consistent.
By means of $S$, the super Wilson loop acquires an (unphysical) inner structure.

\section{The Holonomy of an $S$-Point}
\label{secHolonomy}

Let $\mE$ continue to denote a super vector bundle over a supermanifold $M$ and $\nabla$
be an $S$-connection on $\mE_S$ with $S$ a superpoint (\ref{eqnSuperpoint}). In this section,
we define the holonomy group of an $S$-point $x:S\rightarrow M$ and prove an analogon
of the Ambrose-Singer theorem. After endowing the holonomy group to a functor,
we establish a holonomy principle in this context, whose proof makes use of at least
$(\dim M)_{\overline{1}}$ additional Graßmann generators.

\begin{Def}
A piecewise smooth $S$-homotopy is a map
$\Xi:S\times[0,1]\setminus\{t_0,\ldots,t_l\}\times[0,1]\rightarrow M$ such that,
denoting the real coordinates by $t$ and $s$, respectively,
\begin{enumerate}
\renewcommand{\labelenumi}{(\roman{enumi})}
\item the prescription $\Xi_{s_0}^{\sharp}:=\ev|_{s=s_0}\Xi^{\sharp}$ yields
a piecewise smooth $S$-path $\Xi_{s_0}$ for every $s_0\in[0,1]$, and
\item $\Xi^{\sharp}(f)$ is smooth in $s$ for every $f\in\mO_M$.
\end{enumerate}
$\Xi$ is called proper if $\ev_{s,t=0}\Xi^{\sharp}=p^{\sharp}$ and
$\ev_{s,t=1}\Xi^{\sharp}=q^{\sharp}$ for all $s\in[0,1]$ and $S$-points $p$ and $q$.
\end{Def}

\begin{Def}[$S$-Holonomy]
\label{defHolonomy}
Let $x:S\rightarrow M$ be an $S$-point. We set
\begin{align*}
&\Hol_x:=\{P_{\gamma}\setsep\gamma:x\rightarrow x\;\textrm{piecewise smooth}\}\subseteq\End_{\mO_S}(x^*\mE)\\
&\Hol^0_x:=\{P_{\gamma}\setsep\gamma:x\rightarrow x\;\textrm{piecewise smooth and contractible}\}
\end{align*}
with contractible in the sense that
there exists a piecewise smooth proper homotopy $\Xi$ such that $\Xi_0=x$ and $\Xi_1=\gamma$.
\end{Def}

By Lem. \ref{lemParallelTranslationSuperlinear}, $\Hol_x$ is a group which can be identified
with a subgroup of $GL_{\rk\mE}(\mO_S)$ with respect to a local basis $(T^k)$ of $\mE$.
By Thm. \ref{thmAmbroseSinger} below, it is indeed a Lie group.
For $S=\bR^{0|0}$, it follows by Lem. \ref{lemUnderlyingParallelTransport} that
$\Hol_x=\Hol_{\nabla^0}(x_0(0))$ is the holonomy group with respect to the underlying
connection $\nabla^0$.

We call $M$ path-connected if, for any two $S$-points $x$, $y$, there is an $S$-path
$\gamma:x\rightarrow y$. By the following result this, as well as contractability,
is determined by the classical counterparts such that, in particular, $\Hol_x$ does
not depend on the restriction of $M$ to any connected component of $M_0$ different
from that of $x_0(0)$.

\begin{Lem}
\label{lemContractible}
$M$ is path-connected if and only if $M_0$ is. Moreover, a piecewise smooth $S$-loop
$\gamma:x\rightarrow x$ is contractible to $x$ if and only if $\gamma_0$ is contractible to $x_0$.
\end{Lem}

\begin{proof}
It is clear that path-connectedness of $M$ implies that of $M_0$. Conversely, let $x,y:S\rightarrow M$ and $\gamma_0:x_0(0)\rightarrow y_0(0)$ be a connecting classical path. Let
$t_j\in[0,1]$ be such that $\gamma_0|_{[t_j,t_{j+1}]}$
is smooth and its image is contained in the open set $U_0$ for a coordinate chart
$U\subseteq M$ with coordinates $(\xi^k)$.
Any (smooth) morphism $\gamma^j:S\times[t_j,t_{j+1}]\rightarrow U$
can be identified with $(\dim M)_{\overline{0}}+(\dim M)_{\overline{1}}$
smooth maps $\gamma^{j\sharp}(\xi^k):[t_j,t_{j+1}]\rightarrow\bigwedge\bR^L$ or, equivalently,
with a single smooth map $\tilde{\gamma}^j:[t_j,t_{j+1}]\rightarrow\bR^M$ for some $M\in\bN$.
An $S$-path $\gamma:x\rightarrow y$ with underlying path $\gamma_0$ can then
be constructed by glueing together suitable maps $\tilde{\gamma}^j$.
The details are standard and thus omitted.
The proof of the second statement is similar.
\end{proof}

\subsection{An Ambrose-Singer Theorem}

The classical Ambrose-Singer theorem characterises the holonomy Lie algebra in terms
of the curvature of the connection considered. In this section, we show that this theorem
continues to hold in the more general situation of $S$-holonomy in the sense of
Def. \ref{defHolonomy}. Our proof is modelled on a classical proof due to Levi-Civita
as presented in \cite{Bal02}. We define the curvature of $\nabla$ as usual by
\begin{align}
\label{eqnCurvature}
\scal[R]{X}{Y}Z:=\nabla_X\nabla_YZ-(-1)^{\abs{X}\abs{Y}}\nabla_Y\nabla_XZ-\nabla_{\scal[[]{X}{Y}}Z
\end{align}
for $X,Y\in\mS M_S$ and $Z\in\mE_S$, where $\scal[[]{X}{Y}:=XY-(-1)^{\abs{X}\abs{Y}}YX$
is the supercommutator. This definition is such that
$R\in\scal[\Hom_{\mO_{S\times M}}]{\mS M_S\otimes_{\mO_{S\times M}}\mS M_S\otimes_{\mO_{S\times M}}\mE_S}{\mE_S}_{\overline{0}}$. The curvature is skew-symmetric
\begin{align*}
\scal[R]{Y}{X}=-(-1)^{\abs{X}\abs{Y}}\scal[R]{X}{Y}
\end{align*}
which is inherited to the pullback. Let $\varphi:S\times N\rightarrow M$ be a supermanifold morphism. Then
\begin{align}
\label{eqnRSkewSymmetric}
\scal[R_{\varphi}]{A}{B}=-(-1)^{\abs{A}\abs{B}}\scal[R_{\varphi}]{B}{A}
\end{align}
for $A,B\in\varphi^*\mS M$.
This is shown by a straightforward calculation in coordinates, writing $A=(\varphi^*\partial_{\xi^k})\cdot A^k$ etc.

\begin{Def}
\label{defHolAlgGL}
Let $x:S\rightarrow M$ be an $S$-point. Let $\fg_x$ denote the Lie subalgebra of
$(\fgl_{\rk\mE}(\bigwedge\bR^L))_{\overline{0}}$ which is generated by the following set of endomorphisms.
\begin{align*}
\{P_{\gamma}^{-1}\circ\scal[R_y]{u}{v}\circ P_{\gamma}\setsep
y:S\rightarrow M\,,\;\gamma:x\rightarrow y\;\textrm{piecewise smooth}\,,\;u,v\in(y^*\mS M)_{\overline{0}}\}
\end{align*}
\end{Def}

$\Hol_x$ is contained in $GL_{\rk\mE}(\bigwedge\bR^L)$. By the
following lemma, this is a Lie group.
In general, every Lie subalgebra of the Lie algebra of a Lie group is the Lie algebra of a
unique immersed connected Lie subgroup (see Chp. 2 of \cite{GOV97}).
Let $G_x\subseteq GL_{\rk\mE}(\bigwedge\bR^L)$ denote this Lie subgroup corresponding to
$\fg_x\subseteq(\fgl_{\rk\mE}(\bigwedge\bR^L))_{\overline{0}}$.

\begin{Lem}
$GL_{n|m}(\bigwedge\bR^L)$ is a real Lie group with Lie algebra
$(\fgl_{n|m}(\bigwedge\bR^L))_{\overline{0}}$.
\end{Lem}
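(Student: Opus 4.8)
The plan is to recognise $GL_{n|m}(\bigwedge\bR^L)$ as the group of units of a finite-dimensional associative real algebra, for which the Lie group structure is classical. Throughout, write $\mathcal A:=(\fgl_{n|m}(\bigwedge\bR^L))_{\overline{0}}$. Since the product of two even supermatrices is even and the identity is even, $\mathcal A$ is an associative unital $\bR$-algebra, and it is finite-dimensional because $\bigwedge\bR^L$ is (of real dimension $2^L$). On even elements the supercommutator $[X,Y]=XY-(-1)^{\abs{X}\abs{Y}}YX$ reduces to the ordinary commutator $XY-YX$; thus $\mathcal A$ with this bracket is an ordinary real Lie algebra, and it is exactly the bracket we must recover in the end.

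Next I would identify $GL_{n|m}(\bigwedge\bR^L)$ with the group of units $\mathcal A^{\times}$ and prove that it is open in $\mathcal A$. The tool is the augmentation (body) homomorphism $\epsilon:\bigwedge\bR^L\to\bR$ sending each $\eta^i$ to $0$, whose kernel $\mathcal N$ is nilpotent with $\mathcal N^{L+1}=0$. Applied entrywise, $\epsilon$ sends an even supermatrix to a block-diagonal real matrix, the odd off-diagonal blocks mapping to zero, and it is an algebra homomorphism onto $\Mat_n(\bR)\oplus\Mat_m(\bR)$. Writing $g=\epsilon(g)+r$ with $r$ having all entries in $\mathcal N$, the matrix $\epsilon(g)^{-1}r$ is nilpotent whenever $\epsilon(g)$ is invertible, so that $1+\epsilon(g)^{-1}r$ is inverted by the finite geometric series $\sum_{j=0}^{L}(-\epsilon(g)^{-1}r)^j$; conversely an algebra homomorphism maps units to units. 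Hence $g\in\mathcal A$ is invertible if and only if its two diagonal blocks lie in $GL_n(\bR)$ and $GL_m(\bR)$, a manifestly open condition. This shows $\mathcal A^{\times}$ is open in $\mathcal A$.

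Finally I would invoke the standard fact that an open subset of a finite-dimensional real vector space that forms a group under a polynomial multiplication, with smooth inversion, is a real Lie group. Multiplication is the restriction of the bilinear product of $\mathcal A$, hence smooth; inversion is smooth because the formula above exhibits $g^{-1}$ as the finite geometric series composed with inversion of the real body $\epsilon(g)$, the latter being rational on $GL_n(\bR)\oplus GL_m(\bR)$. Since $\mathcal A^{\times}$ is open in $\mathcal A$, its tangent space at the identity is all of $\mathcal A$, and the Lie bracket induced by the group is the commutator of matrices, which by the first paragraph is the natural bracket of $(\fgl_{n|m}(\bigwedge\bR^L))_{\overline{0}}$.

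I expect the only genuinely non-routine point to be the invertibility criterion of the second paragraph, where the nilpotency of $\mathcal N$ is used to transfer invertibility between the Grassmann algebra and its real body; once openness and smoothness are in place, the Lie group structure and the identification of the Lie algebra follow from the matrix-group formalism in the same way as for $GL_n(\bR)$.
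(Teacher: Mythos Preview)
Your proposal is correct and follows essentially the same route as the paper: the invertibility criterion via the body map (which the paper cites from \cite{Var04} and you prove directly with the geometric series), openness in the ambient even matrix algebra, smoothness of the operations, and identification of the tangent space at the identity with $(\fgl_{n|m}(\bigwedge\bR^L))_{\overline{0}}$ equipped with the commutator bracket. Your framing in terms of the unit group of a finite-dimensional associative $\bR$-algebra is a mildly more abstract packaging of the same argument, but there is no substantive difference.
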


\begin{proof}
$M\in(\fgl_{n|m}(\bigwedge\bR^L))_0$ is invertible if and only if its image under the canonical
projection to $\fgl_{n|m}(\bR)$ is (Lem. 3.6.1 in \cite{Var04}). Therefore
\begin{align*}
GL_{n|m}\left(\bigwedge\bR^L\right)=\left(GL_n(\bR)\times GL_m(\bR)\right)\oplus\left(\fgl_{n|m}\left(\bigwedge(\bR^L)_{\mathrm{nilpotent}}\right)\right)_{\overline{0}}
\end{align*}
which is open in $(\fgl_{n|m}(\bigwedge\bR^L))_{\overline{0}}$ and as such
a submanifold with a group structure such that the tangent space at $1$ can be identified
with $(\fgl_{n|m}(\bigwedge\bR^L))_{\overline{0}}$.
Writing the matrix entries of a product $M\cdot L$ in terms of real coefficients of odd
generators, it is clear that multiplication is smooth, and similar for inversion. One
further shows that the Lie algebra commutator coincides with the commutator
$\scal[[]{X}{Y}=XY-YX$.
\end{proof}

\begin{Thm}[Ambrose-Singer Theorem]
\label{thmAmbroseSinger}
The Lie groups $G_x=\Hol^0_x$ coincide. In particular, $\Hol_x$ is a Lie group with identity
component $\Hol_x^0$ and Lie algebra $\hol_x=\fg_x$.
\end{Thm}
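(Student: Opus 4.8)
The goal is the set-theoretic identity $G_x=\Hol^0_x$ inside $GL_{\rk\mE}(\bigwedge\bR^L)$, after which the ``In particular'' follows formally. The plan is to equip $\Hol^0_x$ with a Lie group structure independently of $G_x$ and then to identify the two Lie algebras. For the structure, I would first trivialise $M$ and $\mE$ along $\gamma_0$ and invoke Lem.~\ref{lemParallelSolution}, which after expanding in real coefficients of the generators of $\bigwedge\bR^L$ presents parallel transport as the solution operator of a classical linear ODE over $\bR$. Standard continuous dependence of ODE solutions on parameters then shows that $P_\gamma$ depends smoothly on the loop, in particular on the homotopy parameter $s$ of a proper homotopy. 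Since a contracting $\Xi$ with $\Xi_0=x$, $\Xi_1=\gamma$ is proper, each $\Xi_s$ is again a contractible $S$-loop at $x$ and $s\mapsto P_{\Xi_s}$ is a continuous path in $\Hol^0_x$ from $\id$ to $P_\gamma$. Hence $\Hol^0_x$ is an arcwise connected subgroup of $GL_{\rk\mE}(\bigwedge\bR^L)$ and, by Yamabe's theorem, an immersed connected Lie subgroup; denote its Lie algebra $\hol^0_x$. It remains to prove $\hol^0_x=\fg_x$.

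The engine for both inclusions is a variation-of-holonomy formula. Differentiating the path-ordered exponential of Lem.~\ref{lemParallelSolution} in $s$ and rewriting the outcome by means of the curvature (\ref{eqnCurvature}) and its skew-symmetry (\ref{eqnRSkewSymmetric}), I expect to obtain, schematically,
\[
(\partial_s P_{\Xi_s})\,P_{\Xi_s}^{-1}=-\int_0^1 P_{\Xi_s|_{S\times[\tau,1]}}\circ\scal[R_{\Xi_s}]{d\Xi(\partial_t)}{d\Xi(\partial_s)}\big|_{\tau}\circ P_{\Xi_s|_{S\times[\tau,1]}}^{-1}\,d\tau
\]
where $d\Xi(\partial_t)$ and $d\Xi(\partial_s)$ are the images of the coordinate fields $\partial_t,\partial_s$ and thus \emph{even} sections of $\Xi^*\mS M$, so the curvature term has even arguments as demanded by Def.~\ref{defHolAlgGL}, and $\Xi_s|_{S\times[\tau,1]}$ runs from $\Xi_s(\tau)$ to $x$. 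Each integrand is then exactly a generator $P_\delta^{-1}\circ\scal[R_y]{u}{v}\circ P_\delta$ of $\fg_x$ with $y=\Xi_s(\tau)$ and $\delta=(\Xi_s|_{S\times[\tau,1]})^{-1}$. Since $\fg_x$ is finite dimensional, the linear span of its generators is closed, so the integral and hence $(\partial_s P_{\Xi_s})P_{\Xi_s}^{-1}$ lie in $\fg_x$ for every $s$. As every tangent vector to $\Hol^0_x$ at $\id$ arises this way for a suitable family, this gives $\hol^0_x\subseteq\fg_x$.

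For the reverse inclusion I would use the same formula to \emph{realise} each generator as a velocity: given even $u,v\in(y^*\mS M)_{\overline 0}$ and $\gamma:x\rightarrow y$, I construct a proper homotopy of lasso loops (out along $\gamma$, sweeping a parallelogram at $y$ in the directions $u,v$ whose enclosed area grows linearly in $s$, back along $\gamma^{-1}$) for which the right-hand side above evaluates at $s=0$ to $\pm\,P_\gamma^{-1}\circ\scal[R_y]{u}{v}\circ P_\gamma$. Thus every generator of $\fg_x$ is the velocity at $\id$ of a smooth curve in $\Hol^0_x$ and so lies in $\hol^0_x$; being a Lie algebra, $\hol^0_x$ then contains the whole Lie algebra these generate, i.e. $\fg_x\subseteq\hol^0_x$. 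Combining the two inclusions yields $\hol^0_x=\fg_x$, whence the connected immersed subgroups $\Hol^0_x$ and $G_x$ coincide by uniqueness of the integral subgroup attached to $\fg_x$.

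Finally, $\Hol_x$ is the union of the cosets $P_\gamma\cdot\Hol^0_x$ indexed by homotopy classes of $S$-loops at $x$; by Lem.~\ref{lemContractible} these are the classes of classical loops in $M_0$, a countable set, so $\Hol_x$ is a Lie group with $\Hol^0_x$ as open identity component and Lie algebra $\hol_x=\fg_x$. The main obstacle is the variation formula itself: differentiating the path-ordered exponential in the parameter $s$ and massaging it into curvature form requires careful bookkeeping of the Koszul signs arising from (\ref{eqnParallel}) and (\ref{eqnCurvature}), which is exactly the place where the evenness of $d\Xi(\partial_t)$, $d\Xi(\partial_s)$ and of $P_\gamma$ is used.
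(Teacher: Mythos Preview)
Your overall architecture matches the paper's: the variation formula you propose is exactly Prp.~\ref{prpParallelDifferential}, and the appeal to Yamabe's theorem to give $\Hol^0_x$ a Lie group structure is the same. The inclusion $\Hol^0_x\subseteq G_x$ also goes through essentially as you indicate, though the clean phrasing is not ``every tangent vector to $\Hol^0_x$ at $\id$ arises this way'' (which would require knowing that smooth curves in $\Hol^0_x$ come from smooth homotopies), but rather: the variation formula gives $\partial_s P_{\Xi_s}=g(s)P_{\Xi_s}$ with $g(s)\in\fg_x$, so by the standard ODE argument in a Lie group the whole curve $s\mapsto P_{\Xi_s}$ stays in $G_x$, whence $P_\gamma=P_{\Xi_1}\in G_x$ for every contractible $\gamma$.

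The genuine gap is in your reverse inclusion. A \emph{smooth} one-parameter family of parallelograms based at $y$ that collapses to the point $y$ at $s=0$ necessarily has both side lengths of order $s$, hence area of order $s^2$, not $s$. For such a family the integrand in your variation formula vanishes at $s=0$ (both $d\Xi[\partial_t]$ and $d\Xi[\partial_s]$ are zero at the collapsed tip), and one finds $\partial_sP_{\Xi_s}|_{s=0}=0$. If instead you keep one side of fixed positive length to force linear area growth, the $s=0$ loop is no longer the constant $y$ but an out-and-back segment, and the variation integral at $s=0$ collects curvature contributions along that entire segment, conjugated by varying parallel transports; it does not reduce to the single generator $P_\gamma^{-1}\scal[R_y]{u}{v}P_\gamma$. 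The paper's Prp.~\ref{prpDerivativeOfParallelTranslation} carries out precisely this computation for the square of side $s$: the first $s$-derivative of $P_{\Xi_s}$ vanishes, and it is the \emph{second} derivative that equals $2\scal[R_a]{v}{u}$. One then uses the elementary fact that for a smooth curve lying in a submanifold with vanishing first derivative the second derivative is tangent to the submanifold, giving $P_\gamma^{-1}\scal[R_a]{v}{u}P_\gamma\in T_e\Hol^0_x=\hol^0_x$. Replacing your first-derivative lasso argument by this second-derivative step closes the gap.
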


We defer the proof of the theorem to the end of the present section. It is based on Prp.
\ref{prpParallelDifferential} and Prp. \ref{prpDerivativeOfParallelTranslation} below.
The following two lemmas are needed in the proof of the first proposition.

\begin{Lem}
\label{lemPullbackCurvature}
Let $f:S\times[a,b]\times[b,c]\rightarrow M$ be a morphism and $X\in f^*\mE$ be a section along $f$. Then
\begin{align*}
(f^*\nabla)_{\partial_s}(f^*\nabla)_{\partial_t}X-(f^*\nabla)_{\partial_t}(f^*\nabla)_{\partial_s}X
=\scal[R_f]{df[\partial_s]}{df[\partial_t]}X
\end{align*}
where $(s,t)$ denote the standard coordinates on $[a,b]\times[b,c]$.
\end{Lem}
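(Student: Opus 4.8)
The plan is to reduce the statement to the general fact that the curvature of the pullback connection $f^*\nabla$ is the pullback $R_f$ of the curvature $R$ of $\nabla$, evaluated on the differentials of the coordinate fields. First I would observe that, since $s$ and $t$ are ordinary bosonic coordinates, the vector fields $\partial_s$ and $\partial_t$ are both even and satisfy $\scal[[]{\partial_s}{\partial_t}=0$. Hence in the curvature (\ref{eqnCurvature}) of $f^*\nabla$ the bracket term drops out and the sign $(-1)^{\abs{\partial_s}\abs{\partial_t}}$ equals $1$, so that the left-hand side is exactly the curvature of the pullback connection,
\[
(f^*\nabla)_{\partial_s}(f^*\nabla)_{\partial_t}X-(f^*\nabla)_{\partial_t}(f^*\nabla)_{\partial_s}X = R^{f^*\nabla}(\partial_s,\partial_t)X .
\]
The same computation applied to $X=gY$ with $g\in\mO_{S\times[a,b]\times[b,c]}$ shows this expression to be $\mO$-linear in $X$: the terms containing $\partial_s\partial_t(g)$ cancel against those with $\partial_t\partial_s(g)$ since partials commute and, being even, carry no sign, and the remaining first-order cross terms cancel by antisymmetry in $(s,t)$. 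As the right-hand side $\scal[R_f]{df[\partial_s]}{df[\partial_t]}X$ is $\mO$-linear in $X$ by the very definition (\ref{eqnEndoPullback}) of the pullback tensor, it suffices to verify the identity on a local frame, i.e.\ for $X=f^*T^k$.

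Next I would carry out the coordinate expansion. Writing $\xi^l$ for coordinates on $M$ and using the local form (\ref{eqnPullbackConnection}) of the pullback connection, one finds
\[
(f^*\nabla)_{\partial_t}(f^*T^k)=\partial_t(f^*\xi^l)\cdot\hat{f}^*(\nabla_{\partial_{\xi^l}}T^k),
\]
the derivative term dropping out because the frame coefficients are constant. Applying $(f^*\nabla)_{\partial_s}$ and expanding once more, via the chain rule (Lem.\ \ref{lemChainRule}), produces three kinds of contributions: a double-derivative term in $\partial_s\partial_t(f^*\xi^l)$, a term where $\partial_s$ differentiates the pulled-back connection coefficients, and a term where the connection acts on the frame $f^*T^m$, which is quadratic in these coefficients. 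On antisymmetrising in $s$ and $t$ the double-derivative contribution cancels because mixed partials commute, while the two remaining contributions assemble into $\hat{f}^*$ of the curvature components $\scal[R]{\partial_{\xi^{l'}}}{\partial_{\xi^l}}T^k$ contracted with $\partial_s(f^*\xi^{l'})\,\partial_t(f^*\xi^l)$. By the definition (\ref{eqnEndoPullback}) of $R_f$ and the expansion of $df[\partial_s]$ along $f^*\partial_{\xi^{l'}}$ (and likewise $df[\partial_t]$) dictated by (\ref{eqnDifferentialMatrix}), this is precisely $\scal[R_f]{df[\partial_s]}{df[\partial_t]}(f^*T^k)$, completing the verification.

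The main obstacle is the sign bookkeeping of the Koszul convention. Because each coefficient $\partial_s(f^*\xi^l)$ inherits the parity $\abs{\xi^l}$ of the coordinate, moving it past the frame sections $T^m$ and past the other coefficient in the quadratic term introduces signs that must be matched exactly against those already built into (\ref{eqnCurvature}) and the skew-symmetry (\ref{eqnRSkewSymmetric}). I would control this by keeping all section coefficients on the right throughout, in accordance with the right-coefficient convention of the paper, so that each reordering sign is generated in a single, traceable place; the commuting, even nature of $\partial_s$ and $\partial_t$ then guarantees that the potentially obstructive second-order terms cancel cleanly and only the tensorial curvature part survives.
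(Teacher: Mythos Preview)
Your proposal is correct and follows essentially the same approach as the paper, which simply states that the identity is shown by a direct calculation in local coordinates $(\xi^k)$ of $M$ and a trivialisation $(T^k)$ of $\mE$, writing $X=(f^*T^l)\cdot X^l$. You have in fact supplied considerably more detail than the paper does, including the reduction to a local frame via $\mO$-linearity and the explicit identification of which terms cancel upon antisymmetrisation in $s$ and $t$.
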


\begin{proof}
This is shown by a direct calculation in local coordinates $(\xi^k)$ of $M$ and a trivialisation $(T^k)$ of $\mE$, writing $X=(\varphi^*T^l)\cdot X^l$ with $X^l\in\mO(S\times[a,b]\times[b,c])$.
\end{proof}

Let $x,y:S\rightarrow M$ and $\gamma:x\rightarrow y$.
A tuple $(e^1,\ldots,e^k)$ of sections $e^j\in\gamma^*\mE$ is a basis of $\gamma^*\mE$ if and only if
$(\ev|_{t=t_0}e^1,\ldots,\ev|_{t=t_0}e^k)$ is a basis of $\ev|_{t=t_0}\gamma^*\mE$ for every $t_0\in[0,1]$.
It is called parallel if all $e^i$ are parallel. Such a basis is determined by its evaluation at $t=0$
via $\ev|_{t=t_0}e^j=P_{\gamma|_{S\times[0,t_0]}}(\ev|_{t=0}e^j)$. In particular,
a parallel basis, as used in the proof of the following lemma, exists.

\begin{Lem}
\label{lemPtDt}
Let $X\in\gamma^*\mE$ be a section along $\gamma$. Let $P_t:=P_{\gamma|_{S\times[0,t]}}^{-1}$
be the parallel dispacement from $\ev|_t\gamma^{\sharp}$ to $x^{\sharp}=\ev|_{t=0}\gamma^{\sharp}$. Then
\begin{align*}
P_t\ev|_t(\gamma^*\nabla)_{\partial_t}X=\partial_t P_t(\ev|_tX)\in x^*\mE
\end{align*}
\end{Lem}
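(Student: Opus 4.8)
The plan is to prove the identity by choosing a convenient parallel basis of $\gamma^*\mE$ and expressing everything in terms of the column-vector coordinates with respect to that basis. Concretely, let $(e^j)$ be a parallel basis of $\gamma^*\mE$, which exists by the remark preceding the statement, and write $X=e^j\cdot X^j$ with $X^j\in\mO_S[t]$ (i.e.\ $t$-dependent elements of $\mO_S$). Since each $e^j$ is parallel, the Leibniz rule for $\gamma^*\nabla$ gives $(\gamma^*\nabla)_{\partial_t}X=(-1)^{\abs{e^j}\abs{\partial_t}}e^j\cdot\partial_t(X^j)$; because $\partial_t$ is even, the sign is trivial and $(\gamma^*\nabla)_{\partial_t}X=e^j\cdot\partial_t(X^j)$.

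Next I would compute both sides in the fixed basis $(\ev|_{t=0}e^j)$ of $x^*\mE$. The key observation is how $P_t$ acts on the parallel basis: by definition of the parallel basis, $\ev|_te^j=P_{\gamma|_{S\times[0,t]}}(\ev|_{t=0}e^j)$, hence $P_t(\ev|_te^j)=\ev|_{t=0}e^j$ for every $t$. Applying $P_t$ to $\ev|_tX=(\ev|_te^j)\cdot\ev|_tX^j$ and using $\mO_S$-linearity of $P_t$ (Lem.~\ref{lemParallelTranslationSuperlinear}) yields $P_t(\ev|_tX)=(\ev|_{t=0}e^j)\cdot\ev|_tX^j$. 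The right-hand side of the claim is then $\partial_t P_t(\ev|_tX)=(\ev|_{t=0}e^j)\cdot\partial_t(\ev|_tX^j)$, where the crucial point is that $P_t(\ev|_te^j)=\ev|_{t=0}e^j$ is constant in $t$, so that $\partial_t$ passes through it and acts only on the scalar coordinates.

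For the left-hand side, applying $\ev|_t$ to $(\gamma^*\nabla)_{\partial_t}X=e^j\cdot\partial_t(X^j)$ gives $\ev|_t(\gamma^*\nabla)_{\partial_t}X=(\ev|_te^j)\cdot\ev|_t\partial_t(X^j)$, and then applying $P_t$ and using $P_t(\ev|_te^j)=\ev|_{t=0}e^j$ together with linearity gives $P_t\ev|_t(\gamma^*\nabla)_{\partial_t}X=(\ev|_{t=0}e^j)\cdot\ev|_t\partial_t(X^j)$. Comparing the two expressions, the identity reduces to the commutation $\partial_t(\ev|_tX^j)=\ev|_t\partial_t(X^j)$ of the evaluation map with the $t$-derivative on scalar functions, which holds since $\ev|_t$ on the interval factor simply records the value of a smooth $\mO_S$-valued function of $t$.

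The main obstacle I expect is not conceptual but bookkeeping: one must be careful that $P_t$, as an $\mO_S$-linear map between the fibres $\ev|_t\gamma^*\mE$ and $x^*\mE$, genuinely commutes past $\partial_t$ only because its action on the chosen \emph{parallel} basis is $t$-independent. For a non-parallel basis the term $\partial_tP_t$ applied to the frame would contribute, and the whole point of the lemma is precisely that this extra term is what $(\gamma^*\nabla)_{\partial_t}$ encodes; selecting a parallel frame is what makes that term vanish and lets $\partial_t$ act solely on coordinates. Keeping the parities and the order of scalar coefficients (right coefficients, as fixed in the paper) consistent throughout is the only real care needed, and since $\partial_t$ is even all the Koszul signs are trivial.
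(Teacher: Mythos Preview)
Your proposal is correct and follows essentially the same approach as the paper's proof: choose a parallel basis $(e^j)$ of $\gamma^*\mE$, expand $X=e^j X^j$, and compute both sides in the fixed frame $(\ev|_{t=0}e^j)$ of $x^*\mE$, reducing the identity to the commutation of $\partial_t$ with evaluation on the scalar coefficients. Your additional remarks on signs and on why the parallel frame is the right choice are accurate but not needed for the argument.
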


\begin{proof}
Let $(e^j)$ be a parallel basis along $\gamma$.
Writing $X=e^i\cdot X^i$ with $X^i\in\mO_{S\times[0,1]}$, it follows that
$P_t(\ev|_tX)=\ev_{t=0}e^i\cdot\ev|_tX^i$, and
\begin{align*}
\partial_t P_t(\ev|_tX)=\ev|_{t=0}e^i\cdot\ev|_t(\partial_tX^i)
\end{align*}
On the other hand, $(\gamma^*\nabla)_{\partial_t}X=e^i\cdot\partial_t(X^i)$ implies
\begin{align*}
P_t\ev|_t(\gamma^*\nabla)_{\partial_t}X=P_t(\ev|_te^i\cdot\ev|_t(\partial_tX^i))
=\ev|_{t=0}e^i\cdot\ev|_t(\partial_tX^i)
\end{align*}
such that both sides agree.
\end{proof}

For the following proposition note that, for a proper $S$-homotopy $\Xi$,
we may identify $\ev|_{t=0}\Xi^{\sharp}$ and $\ev|_{t=1}\Xi^{\sharp}$
with single $S$-points $x,y:S\rightarrow M$, respectively.

\begin{Prp}
\label{prpParallelDifferential}
Let $\Xi$ be a proper $S$-homotopy, and let
$P_{s,t}:=P_{\Xi_s|_{S\times[t,1]}}$ denote parallel transport along the restriction of
the $S$-path $\Xi_s$ to $S\times[t,1]$. Then
\begin{align*}
\partial_sP_{s,0}=\left(\int_0^1R_{s,t}dt\right)P_{s,0}
\in\scal[\Hom_{\mO_{S\times[0,1]}}]{x^*\mE}{y^*\mE}
\end{align*}
with $R_{s,t}:=P_{s,t}\ev|_{s,t}\scal[R_{\Xi}]{d\Xi[\partial_t]}{d\Xi[\partial_s]}P_{s,t}^{-1}$
\end{Prp}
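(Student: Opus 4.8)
The plan is to test the asserted identity of homomorphisms on an arbitrary section $X_x\in x^*\mE$. Since $P_{s,0}$ is $\mO_S$-superlinear (Lem. \ref{lemParallelTranslationSuperlinear}) and $\partial_s$ commutes with the $\mO_S$-scalars (the variable $s$ being real), both sides of the claimed equation are $\mO_S$-linear maps $x^*\mE\rightarrow y^*\mE$, so it suffices to compute $\partial_s(P_{s,0}X_x)$ and recognise it as $\left(\int_0^1R_{s,t}dt\right)P_{s,0}X_x$. First I would introduce the section $X\in\Xi^*\mE$ obtained by $t$-parallel transport of $X_x$, i.e. the unique solution of $(\Xi^*\nabla)_{\partial_t}X=0$ with $\ev|_{t=0}X=X_x$. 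Then $\ev|_tX=Q_{s,t}X_x$, where $Q_{s,t}:=P_{\Xi_s|_{S\times[0,t]}}$ satisfies $P_{s,0}=P_{s,t}\circ Q_{s,t}$ by Lem. \ref{lemPathConcatenation}, and in particular $\ev|_{t=1}X=P_{s,0}X_x$ by Def. \ref{defParallelTransport}.

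The decisive object is $W:=(\Xi^*\nabla)_{\partial_s}X$. At the frozen endpoints $t=0$ and $t=1$ the differential $d\Xi[\partial_s]$ vanishes, since properness of $\Xi$ forces $\ev|_{t=0}\Xi^{\sharp}$ and $\ev|_{t=1}\Xi^{\sharp}$ to be independent of $s$; by the local formula (\ref{eqnPullbackConnection}) the Christoffel term of $(\Xi^*\nabla)_{\partial_s}$ then drops out, so there $(\Xi^*\nabla)_{\partial_s}$ reduces to $\partial_s$ acting on coefficients. This yields $\ev|_{t=1}W=\partial_s\ev|_{t=1}X=\partial_s(P_{s,0}X_x)$ and $\ev|_{t=0}W=\partial_s X_x=0$. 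To control $W$ in between, I would differentiate the parallelness of $X$: applying Lem. \ref{lemPullbackCurvature} to $\Xi$, using $(\Xi^*\nabla)_{\partial_t}X=0$, and invoking the skew-symmetry (\ref{eqnRSkewSymmetric}) (legitimate since $d\Xi[\partial_s]$ and $d\Xi[\partial_t]$ are both even) gives the transport equation $(\Xi^*\nabla)_{\partial_t}W=\scal[R_{\Xi}]{d\Xi[\partial_t]}{d\Xi[\partial_s]}X$.

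Next I would integrate this equation along $\Xi_s$ by means of Lem. \ref{lemPtDt}. With $\tilde{P}_t:=P_{\Xi_s|_{S\times[0,t]}}^{-1}=Q_{s,t}^{-1}$, that lemma gives $\partial_t(\tilde{P}_t\ev|_tW)=\tilde{P}_t\ev|_t(\Xi^*\nabla)_{\partial_t}W$. Integrating over $t\in[0,1]$, summing over the finitely many smooth subintervals of the piecewise-smooth homotopy (the interior boundary terms cancel because $\tilde{P}_t\ev|_tW$ is continuous across the breakpoints), and inserting $\ev|_{t=0}W=0$, $\tilde{P}_0=\id$, $\tilde{P}_1=P_{s,0}^{-1}$ produces $P_{s,0}^{-1}\partial_s(P_{s,0}X_x)=\int_0^1\tilde{P}_t\ev|_t\scal[R_{\Xi}]{d\Xi[\partial_t]}{d\Xi[\partial_s]}X\,dt$. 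Finally, substituting $\ev|_tX=Q_{s,t}X_x$ and using $P_{s,0}=P_{s,t}\circ Q_{s,t}$ to rewrite $P_{s,0}\,Q_{s,t}^{-1}\,\ev|_{s,t}\scal[R_{\Xi}]{d\Xi[\partial_t]}{d\Xi[\partial_s]}\,Q_{s,t}=P_{s,t}\,\ev|_{s,t}\scal[R_{\Xi}]{d\Xi[\partial_t]}{d\Xi[\partial_s]}\,P_{s,t}^{-1}P_{s,0}=R_{s,t}P_{s,0}$ turns the integrand into $R_{s,t}P_{s,0}X_x$, which is the assertion.

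I expect the main obstacle to be the disciplined bookkeeping in passing between the two roles of the parameters: using the vanishing of $d\Xi[\partial_s]$ at the frozen endpoints to convert $\partial_s$ into the covariant derivative, and then applying Lem. \ref{lemPtDt} in the $t$-direction to integrate the transport equation for $W$, all while keeping the signs of (\ref{eqnRSkewSymmetric}) and the parities consistent. Subsidiary care is needed to legitimise the $t$-integration across the breakpoints of the piecewise-smooth homotopy (continuity of $\tilde{P}_t\ev|_tW$) and the interchange of $\partial_s$ with $\ev|_{t=1}$, which both rest on the smooth dependence of the underlying parallel-transport ODE on the parameter $s$.
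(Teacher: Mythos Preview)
Your proposal is correct and follows essentially the same strategy as the paper: define the $t$-parallel extension $X$ of an arbitrary $X_x$, set $W=(\Xi^*\nabla)_{\partial_s}X$, use properness to identify $W$ at $t=0,1$ with $\partial_s$ of the endpoints, apply Lem.~\ref{lemPullbackCurvature} to get the transport equation for $W$, and integrate via Lem.~\ref{lemPtDt}. The only cosmetic difference is that the paper integrates directly against $P_{s,t}$ (transport to $t=1$), while you integrate against $Q_{s,t}^{-1}$ (transport to $t=0$) and then conjugate using $P_{s,0}=P_{s,t}\circ Q_{s,t}$; these are equivalent rearrangements of the same computation.
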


\begin{proof}
Let $Z\in\Xi^*\mE$. For $\Xi$ proper, the term $\partial_s(\Xi_s^*(\xi^l))$ in (\ref{eqnParallel})
vanishes for $t=0$ as well as $t=1$, such that
\begin{align*}
\ev|_{s,t=0}(\Xi^*\nabla)_{\partial_s}Z=\partial_s\ev|_{s,t=0}Z\;,\qquad
\ev|_{s,t=1}(\Xi^*\nabla)_{\partial_s}Z=\partial_s\ev|_{s,t=1}Z
\end{align*}
Consider $Z$ such that the first term vanishes and, moreover,
$(\Xi^*\nabla)_{\partial_t}Z\equiv 0$. By Lem. \ref{lemPtDt} and Lem. \ref{lemPullbackCurvature},
we yield
\begin{align*}
\partial_tP_{s,t}\ev|_{s,t}(\Xi^*\nabla)_{\partial_s}Z
&=P_{s,t}\ev_{s,t}(\Xi^*\nabla)_{\partial_t}(\Xi^*\nabla)_{\partial_s}Z\\
&=P_{s,t}\ev|_{s,t}\scal[R_{\Xi}]{d\Xi[\partial_t]}{d\Xi[\partial_s]}Z\\
&=R_{s,t}\ev_{s,t=1}Z
\end{align*}
since $\ev_{s,t}Z=P_{s,t}^{-1}\ev_{s,t=1}Z$ by assumption. This, together with the assumptions
on $Z$ and $P_{s,1}=\id$, implies the following.
\begin{align*}
\partial_sP_{s,0}\ev|_{s,t=0}Z&=\partial_s\ev|_{s,t=1}Z\\
&=P_{s,1}\ev|_{s,t=1}(\Xi^*\nabla)_{\partial_s}Z-P_{s,0}\ev|_{s,t=0}(\Xi^*\nabla)_{\partial_s}Z\\
&=\int_0^1\partial_t\left(P_{s,t}\ev_{s,t}(\Xi^*\nabla)_{\partial_s}Z\right)dt\\
&=\left(\int_0^1R_{s,t}dt\right)P_{s,0}\ev|_{s,t=0}Z
\end{align*}
Let $Z_x\in x^*\mE$ .Then, setting $Z(s,t):=P_{\Xi_s|_{S\times[0,t]}}Z_x$, defines
a section $Z\in\Xi^*\mE$ that satisfies the assumptions made in the beginning of the proof
as well as $\ev|_{s,t=0}Z=Z_x$, such that the equation to be proved holds applied to $Z_x$.
Since $Z_x$ was arbitrary, it holds in general.
\end{proof}

Let $a:S\rightarrow M$ be an $S$-point and $u,v\in(a^*\mS M)_{\overline{0}}$.
With respect to local coordinates
$(\xi^i)$ on $U\subseteq M$ around $a_0(0)$, we write $u=(a^*\partial_{\xi^i})\cdot u^i$
with $u^i\in\mO_S$ and likewise for $v$. Let $(x,y)$ denote standard coordinates of $\bR^2$.
Then the map
\begin{align*}
f:S\times\bR^2\rightarrow U\;,\qquad f^{\sharp}(\xi^i):=a^{\sharp}(\xi^i)+(-1)^{\abs{\xi^i}}u^i\cdot x+(-1)^{\abs{\xi^i}}v^i\cdot y
\end{align*}
is such that
\begin{align}
\label{eqnF}
\ev|_{(x,y)=(0,0)}f^{\sharp}=a^{\sharp}\;,\qquad\ev|_{(0,0)}df[\partial_x]=u\;,\qquad
\ev|_{(0,0)}df[\partial_y]=v
\end{align}
Consider also the following piecewise smooth homotopy $g:S\times[0,1]\times[0,1]\rightarrow\bR^2$.
\begin{align*}
g_0(s,t):=\left\{\arr{ll}{(4st,0)&0\leq t\leq 1/4\\(s,s(4t-1))&1/4\leq t\leq 1/2\\(s(3-4t),s)&1/2\leq t\leq 3/4\\(0,4s(1-t))&3/4\leq t\leq 1}\right.\;,\qquad
\arr{c}{g^{\sharp}(x):=g_0^*(x)\\g^{\sharp}(y):=g_0^*(y)}
\end{align*}

\begin{Prp}
\label{prpDerivativeOfParallelTranslation}
Let $a:S\rightarrow M$ be an $S$-point and $u,v\in(a^*\mS M)_{\overline{0}}$. Let $f$ be such that
(\ref{eqnF}), and let $P_s$ denote parallel translation along $\Xi_s$ for
\begin{align*}
\Xi:=f\circ\hat{g}:S\times[0,1]\times[0,1]\rightarrow U\subseteq M
\end{align*}
Then
\begin{align*}
\ev|_{s=0}\partial_sP_s=0\;,\qquad\ev|_{s=0}\partial_s\partial_sP_s=2\scal[R_a]{v}{u}
\end{align*}
\end{Prp}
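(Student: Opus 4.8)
The plan is to invoke Prp.~\ref{prpParallelDifferential} for the proper $S$-homotopy $\Xi=f\circ\hat g$. First I would verify properness and set up the basic formula. Composing $\Xi^\sharp=\hat g^\sharp\circ f^\sharp$ with \eqref{eqnF} and the definition of $g$ gives, in the coordinates $(\xi^i)$ on $U$,
\begin{align*}
\Xi^\sharp(\xi^i)=a^\sharp(\xi^i)+(-1)^{\abs{\xi^i}}\left(u^i\cdot g_0^*(x)+v^i\cdot g_0^*(y)\right)\;,
\end{align*}
and since $g_0(s,0)=g_0(s,1)=(0,0)$ the two ends of $\Xi_s$ are the constant $S$-point $a$ for every $s$. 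Thus Prp.~\ref{prpParallelDifferential} applies with $x=y=a$ and yields $\partial_sP_s=A(s)\cdot P_s$ with $A(s):=\int_0^1R_{s,t}\,dt$ and $P_s=P_{s,0}$. Because $g_0(0,t)=(0,0)$, we have $\Xi_0=a$, hence $P_0=\id$.

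For the first assertion I would show $\ev|_{s=0}d\Xi[\partial_t]=0$. By \eqref{eqnDifferentialMatrix} the component functions of $d\Xi[\partial_t]$ are $u^i\partial_t(g_0^*(x))+v^i\partial_t(g_0^*(y))$, and every $t$-derivative of a component of $g_0$ carries a factor $s$; so these components vanish at $s=0$. Since $R_\Xi$ is $\mO_{S\times[0,1]\times[0,1]}$-bilinear, $R_{s,t}|_{s=0}=0$, whence $A(0)=0$ and $\ev|_{s=0}\partial_sP_s=A(0)\cdot P_0=0$.

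For the second assertion, differentiating $\partial_sP_s=A(s)P_s$ once more gives $\partial_s^2P_s=(\partial_sA)P_s+A^2P_s$; evaluating at $s=0$ and using $A(0)=0$, $P_0=\id$ reduces the claim to $\ev|_{s=0}\partial_s^2P_s=\partial_sA(0)=\int_0^1\partial_sR_{s,t}|_{s=0}\,dt$. Writing $R_{s,t}=P_{s,t}\,\rho(s,t)\,P_{s,t}^{-1}$ with $\rho(s,t):=\ev|_{s,t}\scal[R_\Xi]{d\Xi[\partial_t]}{d\Xi[\partial_s]}$, the two terms from differentiating the conjugating factors carry the factor $\rho(0,t)=0$ and drop out, while $P_{0,t}=\id$ because $\Xi_0=a$ is constant; hence $\partial_sR_{s,t}|_{s=0}=\partial_s\rho|_{s=0}$. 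Likewise, among the terms obtained by differentiating $\rho$ in local coordinates, only the one differentiating the first (vanishing) slot $d\Xi[\partial_t]$ survives at $s=0$, and since $\Xi_0=a$ the curvature reduces to $R_a$, leaving
\begin{align*}
\partial_s\rho|_{s=0}=\scal[R_a]{\ev|_{s=0}\partial_s(d\Xi[\partial_t])}{\ev|_{s=0}d\Xi[\partial_s]}\;.
\end{align*}

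Finally I would evaluate the two arguments, which at $s=0$ both lie in $a^*\mS M$: one computes $\ev|_{s=0}d\Xi[\partial_s]=\alpha(t)u+\beta(t)v$ and $\ev|_{s=0}\partial_s(d\Xi[\partial_t])=\mu(t)u+\nu(t)v$, where the real coefficients $\alpha,\beta$ and $\mu,\nu$ are the $s$-derivatives and mixed $s,t$-derivatives at $s=0$ of the two components of $g_0$. Using bilinearity, $\scal[R_a]{u}{u}=\scal[R_a]{v}{v}=0$ and the skew-symmetry (\ref{eqnRSkewSymmetric}) (so $\scal[R_a]{v}{u}=-\scal[R_a]{u}{v}$ for even $u,v$), this collapses to $(\mu\beta-\nu\alpha)\cdot\scal[R_a]{u}{v}$. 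Reading off $\alpha,\beta,\mu,\nu$ as piecewise polynomials on the four segments of $g_0$ and integrating gives $\int_0^1(\mu\beta-\nu\alpha)\,dt=-2$, whence $\partial_sA(0)=-2\,\scal[R_a]{u}{v}=2\,\scal[R_a]{v}{u}$, as claimed. The main obstacle is the bookkeeping in the second-derivative step: keeping track of precisely which terms survive at $s=0$ once one uses that $d\Xi[\partial_t]|_{s=0}$, $\rho(0,t)$ and $A(0)$ vanish and that conjugation by $P_{0,t}$ trivialises, together with the explicit segmentwise integral producing the factor $-2$.
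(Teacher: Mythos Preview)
Your proof is correct and follows essentially the same approach as the paper: apply Prp.~\ref{prpParallelDifferential} to $\Xi=f\circ\hat g$ and evaluate the curvature integrand segmentwise, using skew-symmetry to kill the diagonal terms. The only organisational difference is that the paper first extracts the explicit factor $4s$ from $R_{s,t}$ on $[\tfrac14,\tfrac34]$ (with the outer segments vanishing for all $s$ by skew-symmetry) and then reads off both derivatives from $\partial_sP_s=4s(\ldots)P_s$, whereas you Taylor-expand directly at $s=0$; the resulting arithmetic, including the integral producing the factor $2$, is identical.
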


\begin{proof}
By Lem. \ref{lemChainRule}, we have $d\Xi[\partial_t]=(\Xi^*\partial_l)g^{\sharp}(\phantrix{df}{l}{i})\partial_tg^{\sharp}(x^i)$ where $x^i$ runs over $x$ and $y$.
For $t\leq 1/4$,
\begin{align*}
\scal[R_{\Xi}]{d\Xi[\partial_t]}{d\Xi[\partial_s]}
=\scal[R_{\Xi}]{(\Xi^*\partial_l)g^{\sharp}(\phantrix{df}{l}{x})4s}{(\Xi^*\partial_m)g^{\sharp}(\phantrix{df}{m}{x})4t}=0
\end{align*}
vanishes by skew-symmetry (\ref{eqnRSkewSymmetric}), and analogous for $t\geq 3/4$.
For $1/4\leq t\leq 3/4$, we find
\begin{align*}
\scal[R_{\Xi}]{d\Xi[\partial_t]}{d\Xi[\partial_s]}
=-\scal[R_{\Xi}]{(\Xi^*\partial_l)g^{\sharp}(\phantrix{df}{l}{x})}{(\Xi^*\partial_m)g^{\sharp}(\phantrix{df}{m}{y})}\cdot 4s
\end{align*}
Using (\ref{eqnF}), we further calculate, for $1/4\leq t\leq 3/4$,
\begin{align*}
R_{s,t}&=P_{s,t}\ev|_{s,t}\scal[R_{\Xi}]{d\Xi[\partial_t]}{d\Xi[\partial_s]}P_{s,t}^{-1}\\
&=4s\cdot P_{s,t}\ev|_{s,t}\scal[R_{\Xi}]{(\Xi^*\partial_l)g^{\sharp}(\phantrix{df}{l}{y})}{(\Xi^*\partial_m)g^{\sharp}(\phantrix{df}{m}{x})}P_{s,t}^{-1}\\
&=4s\cdot P_{s,t}\scal[R_a]{(a^*\partial_l)v^l}{(a^*\partial_m)u^m}P_{s,t}^{-1}
\end{align*}
Prp. \ref{prpParallelDifferential} now yields
\begin{align*}
\partial_sP_s=4s\left(\int_{\frac{1}{4}}^{\frac{3}{4}}P_{s,t}\scal[R_a]{v}{u}P_{s,t}^{-1}dt\right)P_s
\end{align*}
which vanishes for $s\rightarrow 0$. Likewise
\begin{align*}
\ev_{s=0}\partial_s(\partial_sP_s)=\lim_{s\rightarrow 0}\left(\frac{1}{s}4s\left(\int_{\frac{1}{4}}^{\frac{3}{4}}P_{s,t}\scal[R_a]{v}{u}P_{s,t}^{-1}dt\right)P_{s}\right)
=2\scal[R_a]{v}{u}
\end{align*}
\end{proof}

\begin{proof}[Proof of Thm. \ref{thmAmbroseSinger}]
Let $\gamma:x\rightarrow x$ be piecewise smooth and contractible.
We choose a piecewise smooth proper homotopy $\Xi$ such that
$\Xi_0=x$ and $\Xi_1=\gamma$, and let
$P_s:=P_{\Xi_s}\in GL_{\rk\mE}(\bigwedge\bR^L)$ denote parallel translation
along $\Xi_s$. By Prp. \ref{prpParallelDifferential}, it satisfies the differential equation
\begin{align*}
\partial_sP_s=g(s)\cdot P_s\;,\qquad g(s):=\left(\int_a^bR_{s,t}dt\right)\in\fg_x
\end{align*}
By standard Lie group theory (cf. Chp. 2 of \cite{GOV97}), we conclude that
$P_s\in G_x$ and, in particular, $P_{\gamma}=P_1\in G_x$. Therefore,
$\Hol^0_x\subseteq G_x$ is a path-connected subgroup. By a theorem of Yamabe \cite{Yam50},
it is a Lie subgroup.

Let $a$ be an $S$-point, $\gamma:x\rightarrow a$ and $u,v\in(a^*\mS M)_{\overline{0}}$.
Let $\Xi$ be as in Prp. \ref{prpDerivativeOfParallelTranslation}, and let
$P_s\in\Hol^0_x$ denote parallel translation along
$\hat{\Xi}_s:=\gamma\star\Xi_s\star\gamma^{-1}$. Then
\begin{align*}
\partial_sP_s|_{s=0}&=P_{\gamma}\circ\partial_s P_{\Xi_s}|_{s=0}\circ P_{\gamma}^{-1}=0\\
\partial_s\partial_sP_s|_0&=P_{\gamma}\circ\partial_s\partial_s P_{\Xi_s}|_0\circ P_{\gamma}^{-1}
=2 P_{\gamma}\circ\scal[R]{v}{u}\circ P_{\gamma}^{-1}
\end{align*}
by Prp. \ref{prpDerivativeOfParallelTranslation}.
$\Hol^0_x$ can be identified with a submanifold of some $\bR^M$. By the vanishing
of the first derivative we can thus conclude that
$\partial_s\partial_s P_s|_0\in\hol_x=T_e(\Hol_x^0)$.
Therefore, all generators of $\fg_x$ are contained in $\hol_x$.
It follows that $\fg_x=\hol_x$ and $\Hol^0_x=G_x$.
\end{proof}

\subsection{The Holonomy Group Functor}

So far, we have considered a fixed superpoint $S=\bR^{0|L}$ along with an $S$-connection
$\nabla$ on an $S$-bundle $\mE_S$. In Sec. \ref{secSuperWilsonLoops}, it was argued that having
$S$-connections (compared to plain connections in $\mE$) is necessary to model superconnections
as in \cite{BKS12}, whereas the exact value of $L$ cannot have any physical significance.
But also for purely mathematical reasons, it is desirable to allow for extending the number of
auxiliary Graßmann generators, as will become clear in the proof of the holonomy principle
(Thm. \ref{thmHolonomyPrinciple}) below. This extension results in a categorical theory
to be described next.

Let $\nabla$ be an $S$-connection on $\mE_S$ with respect to $S=\bR^{0|L}$, and let
$T=\bR^{0|L'}$ be another superpoint. By $\bigwedge\bR^{L'}$-linear extension, $\nabla$
can be considered as an $S\times T$-connection on $\mE_{S\times T}$. Similarly,
an $S$-point $x:S\rightarrow M$ canonically induces an $S\times T$-point $x_T:S\times T\rightarrow M$
by composing $x$ with the canonical projection $S\times T\rightarrow S$.
For the next proposition, note that a morphism $\varphi:T\rightarrow T'$ can be identified
with a Graßmann algebra morphism $\varphi^*$ and as such acts naturally on $GL_{\rk\mE}(\mO_{T'})$.

\begin{Prp}
\label{prpHolonomyFunctor}
The assignment
\begin{align*}
T\mapsto\Hol_x(T):=\Hol_{x_T}\;,\qquad
(\varphi:T\rightarrow T')\mapsto\left(L\mapsto\varphi^*(L),\,\Hol_{x_{T'}}\rightarrow\Hol_{x_T}\right)
\end{align*}
defines a group-valued functor.
\end{Prp}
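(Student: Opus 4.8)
The plan is to reduce the whole statement to a single compatibility identity: that pulling back a loop along the auxiliary factor intertwines parallel transport with the Graßmann-algebra morphism induced by $\varphi$. Write $T=\bR^{0|\ell}$ and $T'=\bR^{0|\ell'}$ and set $\Phi:=\id_S\times\varphi:S\times T\to S\times T'$, so that on structure sheaves $\Phi^\sharp=\Phi^*=\id_{\bigwedge\bR^L}\otimes\,\varphi^*$ is the even unital Graßmann-algebra homomorphism fixing the generators of $S$ and acting by $\varphi^*$ on those of $T$. This is the extension from $\mO_{T'}$ to $\mO_{S\times T'}$ of the $\varphi^*$ in the statement, and it is this extension that acts entrywise as $GL_{\rk\mE}(\mO_{S\times T'})\to GL_{\rk\mE}(\mO_{S\times T})$. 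Since $\mathrm{pr}_S\circ\Phi=\mathrm{pr}_S$, the defining composition of $x_T$ gives $x_{T'}\circ\Phi=x_T$. Given a piecewise smooth loop $\gamma:x_{T'}\to x_{T'}$, I set $\gamma':=\gamma\circ(\Phi\times\id_{[0,1]})$; it has the same breakpoints, is piecewise smooth, and satisfies $\ev|_{t=0}(\gamma')^\sharp=\Phi^*\ev|_{t=0}\gamma^\sharp=\Phi^*x_{T'}^\sharp=x_T^\sharp$ and likewise at $t=1$, so $\gamma':x_T\to x_T$ is an $S\times T$-loop. That the object assignment is well defined, i.e. each $\Hol_{x_T}$ is a group inside $GL_{\rk\mE}(\mO_{S\times T})$, is already furnished by Lem. \ref{lemParallelTranslationSuperlinear} together with Thm. \ref{thmAmbroseSinger}.

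The heart of the matter is to prove $P_{\gamma'}=\Phi^*(P_\gamma)$, the pullback being applied entrywise to the parallel-transport matrix. The connection coefficients $\Gamma^m_{lk}\in\mO_{S\times M}$ entering (\ref{eqnParallel}) are the $\bigwedge\bR^{\ell}$-linear extension of the fixed $S$-connection and therefore carry no $T$-generators, while $(\gamma')^*=(\Phi\times\id)^*\circ\gamma^*$ and $\partial_t$ commutes with $(\Phi\times\id)^*$ because $\Phi$ leaves the real parameter $t$ untouched. Consequently the defining matrix of $\gamma'$ is, for every $t$, the entrywise image $\mB'(t)=\Phi^*(\mB(t))$. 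Feeding this into the path-ordered exponential of Lem. \ref{lemParallelSolution} and using that $\Phi^*$ is $\bR$-linear, multiplicative, and — acting on a finite-dimensional target independent of the integration variable — commutes with the Riemann integrals $\int_0^{\tau}d\sigma$, I may apply $\Phi^*$ termwise to the Dyson series and obtain $P_{\gamma'}=\Phi^*(P_\gamma)$. In particular $\Phi^*(P_\gamma)=P_{\gamma'}\in\Hol_{x_T}$, so $\varphi^*$ indeed restricts to a map $\Hol_{x_{T'}}\to\Hol_{x_T}$.

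The remaining algebraic points are then formal. Since $\Phi^*$ is a unital algebra homomorphism, its entrywise action $GL_{\rk\mE}(\mO_{S\times T'})\to GL_{\rk\mE}(\mO_{S\times T})$ is a group homomorphism; combined with $P_{\delta\star\gamma}=P_\delta\circ P_\gamma$ and $P_{\gamma^{-1}}=P_\gamma^{-1}$ from Lem. \ref{lemParallelTranslationSuperlinear}, it restricts to a group homomorphism $\Hol_{x_{T'}}\to\Hol_{x_T}$. Functoriality follows from the contravariance of pullback: $\id_T$ induces the identity homomorphism, and for $\varphi:T\to T'$ and $\psi:T'\to T''$ one has $(\psi\circ\varphi)^*=\varphi^*\circ\psi^*$ on $\mO_{S\times T''}$, hence the same identity for the induced maps on holonomy groups, so the assignment is a (contravariant) group-valued functor. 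I expect the only genuine obstacle to lie in the compatibility $P_{\gamma'}=\Phi^*(P_\gamma)$ of the second step: everything there hinges on $\Phi^*$ commuting with $\partial_t$ and with $\int_0^{\tau}d\sigma$ and being multiplicative, so that it passes through the path-ordered exponential; once this is verified, the homomorphism and functoriality properties are immediate.
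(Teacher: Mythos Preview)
Your proof is correct and follows essentially the same route as the paper: you construct the pulled-back loop $\gamma'=\gamma\circ(\Phi\times\id_{[0,1]})$ (the paper's $\gamma_{\varphi}$), verify $x_{T'}\circ\Phi=x_T$, show that the ODE matrix transforms as $\mB'(t)=\Phi^*(\mB(t))$, and conclude $\Phi^*(P_{\gamma})=P_{\gamma'}\in\Hol_{x_T}$. The only minor difference is in how you justify this last identity: the paper observes that if $X$ solves $\partial_tX=-\mB(t)X$ then $\varphi^*X$ solves $\partial_t(\varphi^*X)=-(\varphi^*\mB(t))(\varphi^*X)$ and invokes uniqueness, whereas you push $\Phi^*$ through the Dyson series of Lem.~\ref{lemParallelSolution}; both arguments are valid, with the ODE route being slightly cleaner since it sidesteps the (easy) interchange of $\Phi^*$ with the iterated integrals.
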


In the following, we will denote both the holonomy with respect to $x$ and the induced
holonomy functor by $\Hol_x$. We will also use the notation $\hol_x(T):=\hol_{x_T}$.

\begin{proof}
Let $L\in\Hol_{x_{T'}}$. We must show that the pullback $\varphi^*(L)$ is
indeed contained in $\Hol_{x_T}$. Then the induced map $\Hol_{x_{T'}}\rightarrow\Hol_{x_T}$ is clearly a group homomorphism.

Let $\gamma:x_{T'}\rightarrow x_{T'}$ be such that $L=P_{\gamma}$, and prescribe
\begin{align*}
x_{\varphi}:=x_{T'}\circ(\id_S\times\varphi):S\times T\rightarrow M\;,\quad
\gamma_{\varphi}:=\gamma\circ\varphi:=\gamma\circ(\id_S\times\varphi\times\id_{[0,1]}):x_{\varphi}\rightarrow x_{\varphi}
\end{align*}
It is clear that $x_{\varphi}=x_T$ independent of $\varphi$. Let $\mB(t)$ be as in (\ref{eqnParallel}) with respect to $\gamma$. It follows that the local parallelness condition with respect to
$\gamma_{\varphi}$ reads
\begin{align*}
\partial_tX(t)=-\left(\varphi^*\mB(t)\right)\cdot X(t)
\end{align*}
We can, therefore, conclude that $X\in\gamma^*\mE$ parallel along $\gamma$ implies that
$\varphi^*X\in\gamma_{\varphi}^*\mE$ is parallel along $\gamma_{\varphi}$. Therefore
\begin{align*}
\varphi^*\left(P_{\gamma}[X_{x_{T'}}]\right)=P_{\gamma_{\varphi}}\left[\varphi^*(X_{x_{T'}})\right]\quad\textrm{for all}\;\;X_{x_{T'}}\in x_{T'}^*\mE
\end{align*}
and $\varphi^*(L)=\varphi^*P_{\gamma}=P_{\gamma_{\varphi}}\in\Hol_{x_T}$.
\end{proof}

The Molotkov-Sachse theory defines a supermanifold to be a certain functor from the category
$\Gr$ of Graßmann algebras to that of smooth manifolds (\cite{Mol84,Sac08})
such that, in the finite-dimensional case, the resulting category is equivalent to that
of Berezin-Kostant-Leites supermanifolds.
It is thus natural to conjecture that $\Hol_x$ is representable in that it defines
such a supermanifold. If this was true, a neighbourhood of $1$ in $\Hol_x(T)$
would be isomorphic to $(V\otimes\bigwedge\bR^{L'})_{\overline{0}}$
for a fixed finite-dimensional super vector space $V$. It would follow that
$\hol_x(T)\cong T_e(\Hol_x(T))\cong(V\otimes\bigwedge\bR^{L'})_{\overline{0}}$
such that, in particular, $\hol_x(\bigwedge\bR^0)=V_{\overline{0}}$.
The following example shows that the holonomy functor is, in general, not representable.

\begin{Exp}
\label{expGalaev}
Consider $S:=\bR^{0|0}$ and $M:=\bR^{0|1}$ with the ($S$-)connection defined by
$\nabla_{\partial_{\theta}}\partial_{\theta}=\theta\partial_{\theta}$ on
$\mE_S:=\mS M_S=\mS M$ such that $\scal[R]{\partial_{\theta}}{\partial_{\theta}}\partial_{\theta}=2\partial_{\theta}$. Let $0$ denote the unique $S$-point corresponding to $0\in\bR^0$.
By Thm. \ref{thmAmbroseSinger}, $\hol_0(T)$ is generated by
$P_{\gamma}^{-1}\circ\scal[R_y]{u}{v}\circ P_{\gamma}$ for $y:T\rightarrow M$,
$\gamma:x\rightarrow y$ and $u,v\in(y^*\mS M)_{\overline{0}}$.
We write $u=(y^*\partial_{\theta})\cdot u^{\theta}$ with $u^{\theta}\in(\mO_T)_{\overline{1}}$ and analogous for $v$. Let $w\in y^*\mS M$. Then a short calculation yields
\begin{align*}
P_{\gamma}^{-1}\circ\scal[R_y]{u}{v}P_{\gamma}[w]
=-2u^{\theta}v^{\theta}\cdot w
\end{align*}
For $T=\bR^{0|0}$, $u^{\theta}$ and $v^{\theta}$ vanish, such that $\hol_0=\{0\}$ is trivial,
while $\hol_0(T)=\fgl(0|1)\otimes\left((\mO_T)_{\overline{1}}\right)^2\subseteq\fgl(0|1)\otimes(\mO_T)_{\overline{0}}=\left(\fgl(0|1)\otimes\mO_T\right)_{\overline{0}}$ for $T=\bR^{0|L'}$, $L'\geq 2$. By the preceding paragraph, the functor $\Hol_0(T)$ is thus not representable.
\end{Exp}

By the holonomy principle, to be established next, a parallel section $X\in\mE_S$ is uniquely
determined by its $\Hol_x(T)$-invariant pullback $x^*X\in x^*\mE$ as defined in
(\ref{eqnSectionPullback}), where the number $L'$ of additional generators must be sufficiently
large.

\begin{Thm}[Holonomy Principle]
\label{thmHolonomyPrinciple}
Let $M$ be connected.
Let $\nabla$ be an $S$-connection on $\mE_S$, $x:S\rightarrow M$ be an $S$-point and
$T=\bR^{0|L'}$ with $L'\geq(\dim M)_{\overline{1}}$. Then the following holds true.
\begin{enumerate}
\renewcommand{\labelenumi}{(\roman{enumi})}
\item Let $X\in\mE_S$ be a parallel section $\nabla X\equiv 0$ and define $X_x:=x^*X\in x^*\mE$.
Then, for all $y:S\times T\rightarrow M$ and
$\gamma:x\rightarrow y$, it holds $y^*X=P_{\gamma}[X_x]$, where $X_x$ is identified with
a section of $x_T^*\mE$. In particular, $X_x$ is holonomy invariant $\Hol_x(T)\cdot X_x=X_x$.
\item Conversely, let $X_x\in x^*\mE$ be a section such that $\Hol_x(T)\cdot X_x=X_x$.
Then there exists a unique section $X\in\mE_S$ with $x^*X=X_x$, which is parallel $\nabla X\equiv 0$.
\end{enumerate}
\end{Thm}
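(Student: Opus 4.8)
The plan is to prove the two implications separately, using part (i) both to establish the holonomy invariance asserted there and to supply the uniqueness needed in part (ii). The whole argument rests on two facts already available: parallel transport is the unique solution of the first-order ODE (\ref{eqnParallel}) for prescribed initial data, and a globally $\nabla$-parallel section restricts to a section that is parallel along every $S$-path. The role of the auxiliary superpoint $T=\bR^{0|L'}$ with $L'\geq(\dim M)_{\overline{1}}$ will be to provide enough algebraically independent odd evaluations to recover the full $\theta$-expansion of the reconstructed section from its pullbacks.

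For part (i), the key observation is that the pullback of a globally parallel section is parallel along the path. Writing $X=T^k\cdot X^k$ in local coordinates, the hypothesis $\nabla X\equiv 0$ reads $\nabla_{\partial_{\xi^i}}X=0$ for every $i$, and pulling back along $\gamma$ by means of the chain rule (Lem. \ref{lemChainRule}) and the local formula (\ref{eqnPullbackConnection}) turns this into precisely the parallel-transport equation (\ref{eqnParallel}) for $\gamma^*X$. Thus $\gamma^*X$ is $\nabla$-parallel along $\gamma$ with $\ev|_{t=0}\gamma^*X=x_T^*X=X_x$, so by uniqueness of solutions to (\ref{eqnParallel}) it coincides with $P_{\gamma|_{S\times[0,t]}}[X_x]$; evaluating at $t=1$ gives $y^*X=P_\gamma[X_x]$. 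Specialising to $y=x_T$ and loops $\gamma:x\rightarrow x$ yields $X_x=P_\gamma[X_x]$ for every $P_\gamma\in\Hol_x(T)$, i.e. holonomy invariance.

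For part (ii), I would construct $X$ by parallel transport and then read off its $\theta$-expansion from generic $S\times T$-points. By Lem. \ref{lemContractible}, connectedness of $M$ reduces to that of $M_0$, so an $S$-path $\gamma:x\rightarrow y$ exists for every $y$. I would define the candidate value of $X$ at $y$ to be $P_\gamma[X_x]$; independence of the chosen path follows because for two paths $\gamma,\gamma'$ the loop $(\gamma')^{-1}\star\gamma$ satisfies $P_{\gamma'}^{-1}P_\gamma\in\Hol_x(T)$, so holonomy invariance gives $P_\gamma[X_x]=P_{\gamma'}[X_x]$. To upgrade this pointwise prescription to an actual section $X\in\mE_S$, I would fix a chart with coordinates $(x^\mu,\theta^\alpha)$ and, for each topological point $p$, evaluate along the $S\times T$-points whose odd coordinates are set equal to the $L'\geq(\dim M)_{\overline{1}}$ distinct generators of $\bigwedge\bR^{L'}$; since the corresponding monomials are linearly independent, the pullbacks $y^*X=P_\gamma[X_x]$ determine every coefficient $X^k_I(p)$ of the $\theta$-expansion, and smoothness in $p$ is inherited from the smooth dependence of the series solution of Lem. \ref{lemParallelSolution} on the path data. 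Parallelism $\nabla X\equiv 0$ then follows by testing against straight-line $S$-paths in each coordinate direction (even and odd), along which $y^*X$ is by construction $\nabla$-parallel, so that $\nabla_{\partial_{\xi^i}}X=0$ for every $i$; the identity $x^*X=X_x$ comes from the constant path, and uniqueness is immediate from part (i), since any parallel $X'$ with $x^*X'=X_x$ must satisfy $y^*X'=P_\gamma[X_x]=y^*X$ on all the reconstructing $S\times T$-points.

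The hard part will be the reconstruction step in part (ii): verifying that the pointwise parallel-transport data genuinely assemble into a smooth section of $\mE_S$ carrying the correct full odd structure, rather than a mere family of fibre values. This is exactly where the bound $L'\geq(\dim M)_{\overline{1}}$ is indispensable, as it guarantees enough algebraically independent odd evaluations to separate all monomials $\theta^I$ and thereby recover the complete $\theta$-expansion; the even, purely topological part of the argument is the classical holonomy principle, so the novelty and the technical burden lie entirely in controlling the odd directions.
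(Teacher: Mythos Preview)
Your treatment of part (i) is correct and essentially identical to the paper's. For part (ii), the overall architecture---define $X_y:=P_\gamma[X_x]$ path-independently, reconstruct $X\in\mE_S$ by reading off its $\theta^A\eta^I$-coefficients from the pullback at the special $S\times T$-point $y^\sharp(\theta^k)=\eta^{L+k}$, then verify parallelism---is in the right spirit and close to what the paper does. However, your proposed verification of parallelism has a genuine gap in the odd directions.

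The problem is your claim that ``testing against straight-line $S$-paths in each coordinate direction (even and odd)'' yields $\nabla_{\partial_{\theta^j}}X=0$. Along any $S\times T$-path $\delta$, the quantity $\partial_t\delta^\sharp(\theta^j)$ is an odd, hence nilpotent, element of $\mO_{S\times T}$; the path-parallelism equation $(\delta^*\nabla)_{\partial_t}(\delta^*X)=0$ therefore only gives that a nilpotent multiple of $\hat\delta^*(\nabla_{\partial_{\theta^j}}X)$ vanishes, which does not force $\nabla_{\partial_{\theta^j}}X=0$. Moreover, the intermediate points $\delta_t$ along such a path are no longer of the special form $\theta^k\mapsto\eta^{L+k}$, so you cannot invoke your reconstruction to assert that $\delta^*X$ is parallel in the first place. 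The paper circumvents this by not attempting to verify odd parallelism a posteriori at all: it runs an induction on the $\eta$-degree, and at each step it \emph{defines} the $\theta$-expansion of $X|_{\eta^J}$ precisely so that $(\nabla_{\partial_{\theta^m}}X^{N+1})|_{\eta^J}=0$ holds (invoking an extension lemma from \cite{Gal09}); the delicate point is then proving that this forced extension is consistent with the parallel-transport values (the well-definedness calculation around (\ref{eqnValueXIk})), which requires showing that $X_y|_{\eta^J}-(y^*X^N)|_{\eta^J}$ is independent of the choice of $y$ over a given $q\in M_0$, not merely of the path. Only the even-direction parallelism $\nabla_{\partial_{x^k}}X=0$ is checked via a straight-line path with $\delta^\sharp(\theta^k)=\eta^{L+k}$, exactly as you propose. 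So your instinct about where the difficulty lies is accurate, but the mechanism you suggest for handling the odd directions does not work; you need either the paper's construction-first approach or a separate argument that controls $\hat y^*(\nabla_{\partial_{\theta^j}}X)$ directly.
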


\begin{proof}
Let $\gamma:x\rightarrow y$ be a piecewise smooth $S$-path.
The assumption $\nabla X\equiv 0$ implies $\nabla_{\partial_t}(\gamma^*X)=0$.
Parallel transport along $\gamma$ is thus
\begin{align*}
P_{\gamma}[X_x]=\ev|_{t=1}\gamma^*X=y^*X
\end{align*}
which proves the first assertion.

Conversely, let $X_x\in x^*\mE$ be such that $\Hol_x(T)\cdot X_x=X_x$.
For a superpoint $y:S\times T\rightarrow M$, we define $X_y:=P_{\gamma}[X_x]$ where
$\gamma:x\rightarrow y$ is an $S\times T$-path.
Since $X_x$ is $\Hol_x(T)$-invariant, $X_y$ is well-defined independent of
the choice of $\gamma$.
We aim at constructing $X$ out of the set of $X_y$ inductively over the degree of $\mO_S$-monomials.
Without loss of generality, we may assume that $M\cong\bR^{n|m}$ has global coordinates
$\xi=(x,\theta)$. For, assume that the statement is true for $M$ replaced by a neighbourhood
$U\subseteq M$ of $x_0(0)$, thus resulting in a parallel section $X\in\mE_S(U)$. Then, by the
first part of the theorem, $X$ satisfies $\Hol_y(T)\cdot X_y=X_y$ for all $y:S\times T\rightarrow U$.
Repeating the local construction in a neighbourhood $V\subseteq M$ of $y_0(0)$ yields
a parallel section $\tilde{X}\in\mE_S(V)$ which, by uniqueness, agrees with $X$ on the
intersection $U_0\cap V_0$. Without loss of generality, we may further assume that $\mE$
is trivial with a global adapted basis $(T^j)$. We expand
\begin{align*}
X_y=X_y|_{\eta^I}\cdot\eta^I=T^j\cdot X_y^j|_{\eta^I}\cdot\eta^I\;,\qquad
X=X|_{\eta^I}\cdot\eta^I=T^j\cdot X^j|_{\eta^I}\cdot\eta^I
\end{align*}
for multiindices $I=(i_1,\ldots,i_{\abs{I}})$ with $1\leq i_j\leq L$, such that
$X_y^j|_{\eta^I}\in\bR$ and $X^j|_{\eta^I}\in\mO_M$ and $X|_{\eta^I}\in\mE$.
Similarly, $\nabla$ is characterised by
$\Gamma_{ij}^k=\left(\Gamma_{ij}^k\right)|_{\eta^I}\cdot\eta^I$.

In the first step, we construct $X^0\in\mE$. Letting $q:=y_0(0)$, we define its value at $q$ by
$X^0(q):=X_y|_{\eta^0}=(P_{\gamma}[X_x])|_{\eta^0}$. By Lem. \ref{lemUnderlyingParallelTransport},
it arises by classical parallel transport along $\gamma_0$.
It is thus independent of $y$ such that $q=y_0(0)$, and $X^0(q)$ depends smoothly on $q$.
By (16) of \cite{Gal09} applied to the induced connection $\nabla^{\mE}$ on $\mE$, $X^0(q)$
extends to a section $X^0\in\mE$ such that $0=\nabla^{\mE}_{\partial_{\theta^j}}X^0
=(\nabla_{\partial_{\theta^j}}X^0)|_{\eta^0}$. By construction, $X^0$ satisfies
$(y^*X^0)|_{\eta^0}=X^0(q)=(P_{\gamma}[X_x])|_{\eta^0}$.
Again by Lem. \ref{lemUnderlyingParallelTransport}, we further note that
$(\nabla X^0)|_{\theta^0\eta^0}\equiv 0$.

In the second step, we consider multiindices $I=(i_1,\ldots,i_{\abs{I}})$ with
$1\leq i_j\leq L+(\dim M)_{\overline{1}}$, such that $\eta^I\in\mO_{S\times T}$.
Assume, by induction, that we have constructed $X^N\in\mE_S$ for $N\in\bN$ such that
\begin{enumerate}
\setcounter{enumi}{-1}
\renewcommand{\labelenumi}{$\arabic{enumi}_N$}
\item $X^N$ has an expansion $X^N=\sum_{\abs{I}\leq N}X|_{\eta^I}\cdot\eta^I$
such that $X|_{\eta^I}=0$ whenever there is $i_j\in I$ with $i_j\geq L+1$.
\item $(y^*X^N)|_{\eta^I}=(P_{\gamma}[X_x])|_{\eta^I}=X_y|_{\eta^I}$
for every $y:S\times T\rightarrow M$, $\gamma:x\rightarrow y$ and $\abs{I}\leq N$.
\item $(\nabla_{\partial_{\theta^j}}X^N)|_{\eta^I}\equiv 0$ for all $\abs{I}\leq N$.
\item $(\nabla_{\partial_{x^j}}X^N)|_{\theta^A\eta^B}\equiv 0$ for all $A$,$B$ such that
$\abs{A}+\abs{B}\leq N$, where $A=(a_1,\ldots,a_{\abs{A}})$ with $1\leq a_j\leq(\dim M)_{\overline{1}}$.
\end{enumerate}
Condition $1_{N+1}$ is equivalent to $1_N$ together with
\begin{align*}
X_y|_{\eta^J}\stackrel{!}{=}(y^*X^{N+1})|_{\eta^J}=y_0^*(X|_{\eta^J})+(y^*X^N)|_{\eta^J}
\qquad\mathrm{for}\;\abs{J}=N+1
\end{align*}
We are thus led to define the value of $X|_{\eta^J}$ at $q$ by
\begin{align}
\label{eqnValueXIk}
X|_{\eta^J}(q):=X_y|_{\eta^J}-(y^*X^N)|_{\eta^J}\qquad\mathrm{for}\;\abs{J}=N+1
\end{align}
This prescription is independent of $y:S\times T\rightarrow M$ such that $y_0(0)=q$. Indeed,
let $y^1$,$y^2$ be two such $S\times T$-points and $\gamma^{1,2}:x\rightarrow y^{1,2}$ be
connecting $S$-paths. Moreover, let $\delta:y^1\rightarrow y^2$ be such that
$\delta_0(t)\equiv q$.
\begin{align*}
\begin{xy}
\xymatrix{
&&y^2\\
x\ar[urr]^{\gamma^2}\ar[rr]_{\gamma^1}&&y^1\ar[u]_{\delta}}
\end{xy}
\end{align*}
Since $X_x$ is holonomy invariant, we have $X_{y^2}=P_{\delta}[X_{y^1}]$.
We calculate, using (\ref{eqnParallel}),
\begin{align*}
\partial_t\left(X_{\delta}|_{\eta^J}-(\delta^*X^N)|_{\eta^J}\right)
&=\left(\partial_tP_{\delta}|_{[0,t]}[X_{y^1}]-\partial_t\delta^*X^N\right)|_{\eta^J}\\
&=\left(-(-1)^{\abs{T^m}(\abs{T^n}+1)}(\delta^*T^m)\partial_t(\delta^*(\xi^l))\cdot\hat{\delta}^*(\Gamma_{ln}^m)\cdot P_{\delta}|_{[0,t]}[X_{y^1}]^n\right.\\
&\left.\qquad\qquad\qquad-\partial_t\delta^*(\xi^l)(\delta^*\circ\partial_{\xi^l})(X^N)\right)|_{\eta^J}
\end{align*}
By assumption, the term $\partial_t(\delta^*(\xi^l))$ is nilpotent such that,
using induction assumption $1_N$, we may replace
$P_{\delta}|_{[0,t]}[X_{y^1}]^n$ by $(\delta^*X^N)^n=\delta^*X^{N\,n}$.
Therefore, the right hand side equals
\begin{align*}
&\left((\delta^*T^m)\partial_t(\delta^*(\xi^l))\hat{\delta}^*\left(-(-1)^{\abs{T^m}(\abs{T^n}+1)}\Gamma_{ln}^mX^{N\,n}-\partial_{\xi^l}X^{N\,m}\right)\right)|_{\eta^J}\\
&\qquad\qquad=-\left((\delta^*T^m)\partial_t(\delta^*(\xi^l))\hat{\delta}^*(\nabla_{\partial_{\xi^l}}X^N)^m\right)|_{\eta^J}
\end{align*}
By $2_N$ and $3_N$ (and nilpotency of $\partial_t(\delta^*(\xi^l))$), this expression
vanishes, thus showing that $X_{\delta}|_{\eta^J}-(\delta^*X^N)|_{\eta^J}$
is constant, which proves that (\ref{eqnValueXIk}) is well-defined.

We next endow $X|_{\eta^J}(q)$ to a section $X|_{\eta^J}\in\mE$ such that
$X^{N+1}:=\sum_{\abs{J}\leq N+1}X|_{\eta^J}\cdot\eta^J$ satisfies $2_{N+1}$.
$2_N$ implies that $(\nabla_{\partial_{\theta^m}}X^{N+1})|_{\eta^I}=0$ with
$\abs{I}\leq N$ for any such $X^{N+1}$. Under this induction hypothesis,
$2_{N+1}$ is thus equivalent to $(\nabla_{\partial_{\theta^m}}X^{N+1})|_{\eta^J}=0$
for $\abs{J}=N+1$ which, in turn, is equivalent to
\begin{align*}
\left(\partial_{\theta^r}\ldots\partial_{\theta^1}\partial_{\theta^m}X^j|_{\eta^J}\right)|_{\theta^0}
=-(-1)^{\abs{T^j}(\abs{T^i}+1)}\left(\partial_{\theta^r}\ldots\partial_{\theta^1}(\Gamma_{mi}^jX^{i\,N+1})\right)|_{\eta^J\theta^0}
\end{align*}
for all $r\leq(\dim M)_{\overline{1}}$. Similar to the construction of $X^0\in\mE$ above,
these equations uniquely determine
$X|_{\eta^J}$, for $\abs{J}=N+1$, by $X^N$ and $X|_{\eta^J}(q)$, such that $2_{N+1}$ holds. If
any index $l_j\in J$ satisfies $l_j>L$, the right hand side of (\ref{eqnValueXIk}) vanishes
upon considering $y:S\rightarrow M$, such that $0_{N+1}$ is satisfied. By construction,
also $1_{N+1}$ holds.

We show that $X^{N+1}$ further satisfies $3_{N+1}$.
$1_{N+1}$ implies that $(z^*X^{N+1})|_{\eta^I}=P_{\delta}[X_y]|_{\eta^I}$ for all $z$
and $\delta:y\rightarrow z$ and $\abs{I}\leq N+1$.
In particular, we let $q\in M_0$ and define $y$ and $\delta$ as follows.
\begin{align*}
y^{\sharp}(x^k):=q^*(x^k)=q^k\,,\;y^{\sharp}(\theta^k):=\eta^{L+k}(\in\mO_T)\;,\quad
\delta^{\sharp}(x^k):=q^k+t\delta^{kk_0}\,,\;\delta^{\sharp}(\theta^k):=\eta^{L+k}
\end{align*}
This is such that $\ev|_{t=0}\delta^{\sharp}=y^{\sharp}$. We thus yield
\begin{align*}
0=\left((\delta^*\nabla)_{\partial_t}(\delta^*X^{N+1})\right)|_{\eta^I}
=\hat{\delta}^*(\nabla_{\partial_{x^{k_0}}}X^{N+1})|_{\eta^I}
\end{align*}
Writing $\nabla_{\partial_{x^{k_0}}}X^{N+1}=:N^{AB}\theta^A\eta^B$ with $\eta^B\in\mO_S$,
we conclude that
\begin{align*}
0=\hat{y}^*(\nabla_{\partial_{x^{k_0}}}X^{N+1})|_{\eta^I}=\left(N^{AB}(q)\cdot\eta^{A_L}\eta^B\right)|_{\eta^I}
\end{align*}
with $A_L$ arising from the multiindex $A$
by shifting all indices by $L$, such that $\eta^{A_L}\in\mO_T$.
For $\abs{A}+\abs{B}=\abs{A_L}+\abs{B}=\abs{I}\leq N+1$, this implies that $N^{AB}(q)=0$.

Proceeding inductively yields a section $X:=X^L=X^{L+(\dim M)_{\overline{1}}}\in\mE_S$
such that the induction hypotheses hold with respect to $L+(\dim M)_{\overline{1}}$.
$X$ is, therefore, parallel. Concerning uniqueness, assume that $\tilde{X}\in\mE_S$ is
a second such section. Then $y^*(X-\tilde{X})=0$ for all $y:S\times T\rightarrow M$
such that $X-\tilde{X}=0$ by an argument analogous to that in the previous proof of $3_{N+1}$.
\end{proof}

\section{Comparison with Galaev's Holonomy Theory}
\label{secComparison}

Considering $S=\bR^{0|0}$,
let $\nabla$ be a connection on a super vector bundle $\mE\rightarrow M$ and $x\in M_0$ be
a (topological) point. In this chapter, we will compare the functor $\Hol_x$
with Galaev's holonomy super Lie group $\Hol_x^{\mathrm{Gal}}$, which
was introduced in \cite{Gal09} by means of a certain Harish-Chandra pair built around
the super Lie algebra $\hol_x^{\mathrm{Gal}}$ generated by endomorphisms
\begin{align*}
P_{\gamma_0}^{-1}\circ\scal[\left(\overline{\nabla}^r_{Y_r,\ldots,Y_1}R\right)_y]{Y}{Z}\circ P_{\gamma_0}:x^*\mE\rightarrow x^*\mE
\end{align*}
with $y\in M_0$, $\gamma_0:x\rightarrow y$, $r\geq 0$ and $Y_1,\ldots,Y_r,Y,Z\in y^*\mS M$,
and where $\overline{\nabla}^r_{Y_r,\ldots,Y_1}R$ denotes the $r$-fold covariant derivative
of the curvature $R$
with respect to $\nabla$ and some auxiliary connection $\overline{\nabla}$ on $\mS M$ in a
neighbourhood of $y$. This derivative is defined analogous to the classical (non-super) case
with appropriate signs. For $r=1,2$, it reads as follows.

\begin{Def}
\label{defCurvDerivatives}
Let $R\in\scal[\Hom_{\mO_{S\times M}}]{\mS M_S\otimes_{\mO_{S\times M}}\mS M_S\otimes_{\mO_{S\times M}}\mE_S}{\mE_S}$, and $u,v\in\mS M_S$. For $X,Y\in\mS M_S$, we define
\begin{align*}
\overline{\nabla}_X\scal[R]{u}{v}&:=\nabla_X\circ\scal[R]{u}{v}-(-1)^{\abs{R}\abs{X}}\scal[R]{\overline{\nabla}_Xu}{v}\\
&\qquad-(-1)^{\abs{X}(\abs{R}+\abs{u})}\scal[R]{u}{\overline{\nabla}_Xv}\\
&\qquad-(-1)^{\abs{X}(\abs{R}+\abs{u}+\abs{v})}\scal[R]{u}{v}\circ\nabla_X\\
\overline{\nabla}^2_{X,Y}\scal[R]{u}{v}&:=\overline{\nabla}_X\left(\overline{\nabla}_YR\right)\scal[(]{u}{v}
-\nabla_{\overline{\nabla}_XY}\circ\scal[R]{u}{v}\\
&\qquad+(-1)^{(\abs{X}+\abs{Y})\abs{R}}\scal[R]{\overline{\nabla}_{\overline{\nabla}_XY}u}{v}\\
&\qquad+(-1)^{(\abs{X}+\abs{Y})(\abs{R}+\abs{u})}\scal[R]{u}{\overline{\nabla}_{\overline{\nabla}_XY}v}\\
&\qquad+(-1)^{(\abs{X}+\abs{Y})(\abs{R}+\abs{u}+\abs{v})}\scal[R]{u}{v}\circ\nabla_{\overline{\nabla}_XY}
\end{align*}
\end{Def}

According to Exp. \ref{expGalaev}, the functor $\Hol_x$ is, in general, not
representable such that Galaev's holonomy theory is a priori different from ours.
Nevertheless, we will show that the generators of
$\hol^{\mathrm{Gal}}_x$ can be extracted in a geometric way, in a sense to be made precise.

\subsection{Parallel Transport and Covariant Derivatives}

The aforementioned extraction of generators of $\hol_x^{\mathrm{Gal}}$ is based on the
following observation. Consider again the more general situation of an $S$-connection $\nabla$
on $\mE_S$ for $S=\bR^{0|L}$ and $x:S\rightarrow M$ an $S$-point.
As shown next, the pullback connection $x^*\nabla$ - along with its
induced connections on tensors as well as higher covariant derivatives - arises by means
of infinitesimal parallel transport. We will not treat the most general situation here
but content ourselves with the following. First, we consider only even vector fields
to be differentiated along.
The general case is expected to work along the lines of the flow of vector bundles developed
in \cite{MSV93}.
Second, we consider tensors of the following type: sections, endomorphisms and curvature-type. The general case should be analogous. Third, we consider covariant derivatives up to second order. Analogous results for higher order derivatives are expected to be obtainable by an inductive proof.

For $X\in\mS S$ and $Z\in x^*\mE$, the pullback $(x^*\nabla)_XZ\in x^*\mE$ was defined
in (\ref{eqnPullbackConnectionRelative}). Let also $Y\in\mS S$ and
$\overline{\nabla}$ be an $S$-connection on $\mS M_S$. We define the second
covariant derivative of $Z$, with respect to $\nabla$ and $\overline{\nabla}$, as follows.
\begin{align}
(x^*\overline{\nabla}^2)_{X,Y}Z:=(x^*\nabla)_X(x^*\nabla)_YZ-(x^*\nabla)_{(x^*\overline{\nabla})_X[dx[Y]]}Z
\end{align}
The corresponding first and second covariant derivatives of endomorphisms
and tensors of curvature type are defined likewise.

\begin{Def}
\label{defEndoPullbackDerivatives}
Let $E\in\End_{\mO_{S\times M}}(\mE_S)$ be an endomorphism and $E_x$ its pullback
under $x$ as in (\ref{eqnEndoPullback}). For $X,Y\in\mS S$, we define
\begin{align*}
(x^*\nabla)_XE_x&:=(x^*\nabla)_X\circ E_x-(-1)^{\abs{X}\abs{E}}E_x\circ(x^*\nabla)_X\in\End_{\mO_S}(x^*\mE)\\
(x^*\overline{\nabla}^2)_{X,Y}E_x&:=(x^*\nabla)_X\left((x^*\nabla)_YE_x\right)\\
&\qquad-(x^*\nabla)_{(x^*\overline{\nabla})_X[dx[Y]]}\circ E_x+(-1)^{\abs{E}(\abs{X}+\abs{Y})}E_x\circ(x^*\nabla)_{(x^*\overline{\nabla})_X[dx[Y]]}
\end{align*}
\end{Def}

\begin{Def}
\label{defCurvPullbackDerivatives}
Let $R\in\scal[\Hom_{\mO_{S\times M}}]{\mS M_S\otimes_{\mO_{S\times M}}\mS M_S\otimes_{\mO_{S\times M}}\mE_S}{\mE_S}$, and $u,v\in x^*\mS M$. For $X,Y\in\mS S$, we define
\begin{align*}
(x^*\overline{\nabla})_X\scal[R_x]{u}{v}&:=(x^*\nabla)_X\circ\scal[R_x]{u}{v}-(-1)^{\abs{R}\abs{X}}\scal[R_x]{(x^*\overline{\nabla})_X(u)}{v}\\
&\qquad-(-1)^{\abs{X}(\abs{R}+\abs{u})}\scal[R_x]{u}{(x^*\overline{\nabla})_X(v)}\\
&\qquad-(-1)^{\abs{X}(\abs{R}+\abs{u}+\abs{v})}\scal[R_x]{u}{v}\circ(x^*\nabla)_X\\
(x^*\overline{\nabla}^2)_{X,Y}\scal[R_x]{u}{v}&:=(x^*\overline{\nabla})_X\left((x^*\overline{\nabla})_YR_x\right)\scal[(]{u}{v}
-(x^*\nabla)_{(x^*\overline{\nabla})_X[dx[Y]]}\circ\scal[R_x]{u}{v}\\
&\qquad+(-1)^{(\abs{X}+\abs{Y})\abs{R}}\scal[R_x]{(x^*\overline{\nabla})_{(x^*\overline{\nabla})_X[dx[Y]]}(u)}{v}\\
&\qquad+(-1)^{(\abs{X}+\abs{Y})(\abs{R}+\abs{u})}\scal[R_x]{u}{(x^*\overline{\nabla})_{(x^*\overline{\nabla})_X[dx[Y]]}(v)}\\
&\qquad+(-1)^{(\abs{X}+\abs{Y})(\abs{R}+\abs{u}+\abs{v})}\scal[R_x]{u}{v}\circ(x^*\nabla)_{(x^*\overline{\nabla})_X[dx[Y]]}
\end{align*}
\end{Def}

Our next lemma ensures existence of an $S$-path as occurring in the subsequent proposition
concerning first covariant derivatives.

\begin{Lem}
\label{lemSpecialPath}
Let $x:S\rightarrow M$ be an $S$-point and $\xi\in(x^*\mS M)_{\overline{0}}$.
We write (in coordinates around $x_0$) $\xi=(x^*\partial_i)\cdot\xi^i$ and assume that
every $\xi^i\in\mO_S$ is nilpotent.
Then there is an $S$-path $\gamma$ (connecting $x$ to some other $S$-point $y$) such that
$\ev|_0\partial_t\circ\gamma^{\sharp}=\xi$.
\end{Lem}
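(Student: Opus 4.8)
The plan is to exhibit $\gamma$ explicitly as a ``straight line'' in local coordinates, mirroring the auxiliary map $f$ of Prp. \ref{prpDerivativeOfParallelTranslation}. Since everything is local around $x_0(0)$, I may assume without loss of generality that $M$ carries global coordinates $(\xi^i)$ (otherwise one restricts to a chart $U\subseteq M$ about $x_0(0)$). Writing $\xi=(x^*\partial_{\xi^i})\cdot\xi^i$ with $\xi^i\in\mO_S$ as in the statement, I would define a candidate morphism $\gamma:S\times[0,1]\rightarrow M$ by prescribing its action on coordinates,
\[
\gamma^{\sharp}(\xi^i):=x^{\sharp}(\xi^i)+(-1)^{\abs{\xi^i}}\xi^i\cdot t,
\]
where $t$ is the standard coordinate on $[0,1]$ and the sign is chosen exactly as in (\ref{eqnF}) so as to reproduce the correct velocity.

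First I would check that this prescription really defines a morphism. By the fundamental description of morphisms into a coordinate chart (used in the same spirit as in the proof of Lem. \ref{lemContractible}), it suffices that parities are respected and that the bodies of the even coordinate images define a smooth map $\{0\}\times[0,1]\rightarrow M_0$. Parity is immediate: the coefficient $\xi^i$ has the same parity as the coordinate $\xi^i$, because $\xi$ is even, so each $\gamma^{\sharp}(\xi^i)$ has parity $\abs{\xi^i}$, and the $t$-affine coefficients lie in $\bigwedge\bR^L$, hence are smooth in $t$. For the body condition, this is precisely where the nilpotency hypothesis enters: for an even coordinate the body of $(-1)^{\abs{\xi^i}}\xi^i\cdot t$ vanishes since $\xi^i$ is nilpotent, so the underlying path $\gamma_0$ is the constant path at $x_0(0)$. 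In particular $\gamma_0$ never leaves the chart, and the prescription defines a genuine morphism on all of $[0,1]$ rather than merely for small $t$. For odd coordinates nilpotency is automatic in $\mO_S=\bigwedge\bR^L$, so the only real constraint is on the even (bosonic) coefficients.

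Evaluating at $t=0$ yields $\ev|_{t=0}\gamma^{\sharp}=x^{\sharp}$, so $\gamma$ is an $S$-path starting at $x$, connecting it to the $S$-point $y:=\ev|_{t=1}\gamma$. It remains to verify the velocity condition. Acting on a coordinate function, $(\partial_t\circ\gamma^{\sharp})(\xi^j)=\partial_t\bigl(x^{\sharp}(\xi^j)+(-1)^{\abs{\xi^j}}\xi^j t\bigr)=(-1)^{\abs{\xi^j}}\xi^j$, which is already $t$-independent, so $\ev|_{t=0}$ leaves it unchanged. On the other hand, viewing $\xi$ as a derivation along $x$ via the superderivation action recalled before (\ref{eqnPullbackConnectionFinal}), one computes $\xi(\xi^j)=(-1)^{\abs{\xi^j}}\xi^j$ as well. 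Both $\ev|_0\partial_t\circ\gamma^{\sharp}$ and $\xi$ are even derivations along $x$, and a derivation along $x$ is determined by its values on the coordinate functions; since the two agree there, they coincide, which is the assertion $\ev|_0\partial_t\circ\gamma^{\sharp}=\xi$.

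The construction being completely explicit, there is no serious obstacle here. The one point requiring genuine care is the well-definedness of $\gamma$ as a morphism on the \emph{entire} interval, which is exactly what the nilpotency assumption secures by forcing $\gamma_0$ to be constant at $x_0(0)$; the remaining work is the sign bookkeeping in the velocity computation, which is routine once the convention of (\ref{eqnF}) is adopted.
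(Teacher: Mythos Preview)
Your proof is correct, and it takes a genuinely different route from the paper's. The paper constructs $\gamma$ as the flow of the even derivation $\sum_i t\,\xi^i\partial_i$, setting
\[
\gamma^{\sharp}:=x^{\sharp}\circ\sum_{n\geq 0}\frac{1}{n!}\Bigl(\sum_i t\,\xi^i\partial_i\Bigr)^{n},
\]
and then argues that the exponential of a derivation is an algebra homomorphism (citing \cite{Hel08}), with nilpotency of the $\xi^i$ forcing the sum to be finite. You instead write down the affine prescription $\gamma^{\sharp}(\xi^i)=x^{\sharp}(\xi^i)+(-1)^{|\xi^i|}\xi^i t$ directly, exactly as in the auxiliary map $f$ of (\ref{eqnF}), and verify the morphism condition via the chart description of morphisms rather than via the derivation property. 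Your approach is more elementary and pleasantly self-contained, since it reuses a construction the paper has already carried out verbatim; it also makes the role of the nilpotency hypothesis completely transparent (it pins $\gamma_0$ to the constant path at $x_0(0)$, so the chart is never left). The paper's flow construction, on the other hand, is conceptually suggestive of the general picture of integrating vector fields on supermanifolds and would adapt more readily if one wanted to relax hypotheses or pass to higher-order jets. Both arguments produce an explicit $\gamma$ with the required initial velocity; your sign bookkeeping via (\ref{eqnDifferentialMatrix}) and the derivation action before (\ref{eqnPullbackConnectionFinal}) is correct.
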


\begin{proof}
Through Def. \ref{defCanonicalInclusion}, and setting $x^{\sharp}(t):=t$, we extend $x$
to a map $x:S\times\bR\rightarrow S\times M\times\bR$. In this sense, we define
\begin{align*}
\gamma^{\sharp}:=x^{\sharp}\circ\sum_{n=0}^{\infty}\frac{\left(\sum_i(t\xi^i\partial_i)\right)^n}{n!}
\end{align*}
Every $\xi^i\partial_i$ is, by assumption, even and nilpotent such that there are no ordering problems and the sum is finite.
Such $\gamma$ is indeed a morphism by the derivation property of $\sum_i(t\xi^i\partial_i)$ as shown analogous
as in the proof of Lem. 1.1 in \cite{Hel08}. A straightforward calculation shows, moreover,
that $\gamma$ indeed satisfies the required initial condition.
\end{proof}

\begin{Prp}
\label{prpParallelTransportConnection}
Let $x:S\rightarrow M$ be an $S$-point, $Y\in\mE_S$ and $\xi\in(x^*\mS M)_{\overline{0}}$. Let
$\gamma$ be an $S$-path (connecting $x$ to some $y$) such that $\ev|_0\partial_t\circ\gamma^{\sharp}=\xi$. Then
\begin{align*}
\frac{d}{dt}|_0\left(P_{\gamma}|_{[0,t]}^{-1}(\gamma^*Y)\right)
=(x^*\nabla)_{\xi}(x^*Y)
\end{align*}
In particular, for $\xi=X\circ x^{\sharp}=dx[X]$ with $X\in(\mS S)_{\overline{0}}$, we find
\begin{align*}
\frac{d}{dt}|_0\left(P_{\gamma}|_{[0,t]}^{-1}(\gamma^*Y)\right)=(x^*\nabla)_X(x^*Y)
\end{align*}
Similarly, the first covariant derivatives of $E_x$ and $R_x$, with $E$ and $R$ as in Def. \ref{defEndoPullbackDerivatives} and Def. \ref{defCurvPullbackDerivatives}, arise from parallel transport as
\begin{align*}
\frac{d}{dt}|_0\left(P_{\gamma}|_{[0,t]}^{-1}\circ E_{\gamma}\circ P_{\gamma}|_{[0,t]}\right)
&=(x^*\nabla)_XE_x\\
\frac{d}{dt}|_0\left(P_{\gamma}|_{[0,t]}^{-1}\circ\scal[R_{\gamma}]{\oP_{\gamma}|_{[0,t]}(u)}{\oP_{\gamma}|_{[0,t]}(v)}\circ P_{\gamma}|_{[0,t]}\right)
&=\scal[(x^*\overline{\nabla})_XR_x]{u}{v}
\end{align*}
\end{Prp}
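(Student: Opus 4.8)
The core of the argument will be Lemma~\ref{lemPtDt}, which converts the covariant $t$-derivative of a section of $\gamma^*\mE$ into the ordinary $t$-derivative of its backward parallel transport. The plan is to prove the section formula first and then deduce the endomorphism and curvature formulas from it by testing on parallel frames. For the first formula I would set $X:=\gamma^*Y\in\gamma^*\mE$ and apply Lemma~\ref{lemPtDt}: with $P_t:=P_\gamma|_{[0,t]}^{-1}$ this reads $\partial_t\bigl(P_t(\ev|_t\gamma^*Y)\bigr)=P_t\,\ev|_t(\gamma^*\nabla)_{\partial_t}(\gamma^*Y)$. Evaluating at $t=0$, where $P_0=\id$, reduces the left-hand side of the claim to $\ev|_0(\gamma^*\nabla)_{\partial_t}(\gamma^*Y)$.

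It then remains to identify this with $(x^*\nabla)_{\xi}(x^*Y)$, and both sides I would expand through the local formula~(\ref{eqnPullbackConnection}). The essential observation is that the hypothesis $\ev|_0\partial_t\circ\gamma^{\sharp}=\xi$ says precisely that the derivation $\ev|_0\partial_t\circ\gamma^{\sharp}:\mO_M\to\mO_S$ along $x$ \emph{is} $\xi$; hence $\ev|_0\partial_t(\gamma^{\sharp}(\xi^l))=\xi^l$ and, by the chain rule, $\ev|_0\partial_t(\gamma^{\sharp}(f))=\xi(f)$ for every $f\in\mO_M$. Substituting these into~(\ref{eqnPullbackConnection}) matches the two expansions term by term. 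The special case $\xi=dx[X]$ is then pure substitution, existence of an admissible $\gamma$ being supplied by Lemma~\ref{lemSpecialPath} (its components lie in the image of a superderivation of $\bigwedge\bR^L$, hence are nilpotent).

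For the endomorphism formula I would test both sides on an arbitrary $Y_x\in x^*\mE$ and take $Y\in\gamma^*\mE$ to be its $\nabla$-parallel extension, so that $\ev|_tY=P_\gamma|_{[0,t]}(Y_x)$. Then $\bigl(P_t\circ E_\gamma\circ P_\gamma|_{[0,t]}\bigr)(Y_x)=P_t(\ev|_tE_\gamma Y)$, and applying the section formula to the section $E_\gamma Y$ gives $\ev|_0(\gamma^*\nabla)_{\partial_t}(E_\gamma Y)$ at $t=0$. Expanding by the Leibniz rule, the term in $(\gamma^*\nabla)_{\partial_t}Y$ vanishes since $Y$ is parallel, leaving exactly $\bigl((\gamma^*\nabla)_{\partial_t}E_\gamma\bigr)(Y_x)$, whose value at $t=0$ is $\bigl((x^*\nabla)_XE_x\bigr)(Y_x)$ by Definition~\ref{defEndoPullbackDerivatives} and the local comparison above. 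The curvature formula is handled identically: extend $u,v$ to $\overline{\nabla}$-parallel sections $U(t)=\oP_\gamma|_{[0,t]}(u)$ and $V(t)=\oP_\gamma|_{[0,t]}(v)$, test on a $\nabla$-parallel $Y$, and apply the section formula to $R_\gamma(U,V)Y$; since $U$, $V$ and $Y$ are all parallel, the Leibniz expansion annihilates every term except the one differentiating $R_\gamma$, which reproduces Definition~\ref{defCurvPullbackDerivatives}.

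The routine but delicate part is the sign bookkeeping: one must check that the signs generated by~(\ref{eqnPullbackConnection}) in the term-by-term identification, and those in the two Leibniz rearrangements, agree with the signs already built into Definitions~\ref{defEndoPullbackDerivatives} and~\ref{defCurvPullbackDerivatives}. Because every field differentiated along is even, the parities of $\partial_t$ and of $\xi$ are trivial, which keeps these signs under control and is consistent with the restriction to even directions in this proposition.
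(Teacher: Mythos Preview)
Your argument is correct and arrives at the same identity, but it is organised differently from the paper's proof. The paper works entirely in a fixed local frame $(T^j)$: it reads off from the ODE~(\ref{eqnParallel}) that $\partial_t|_0P_{\gamma|_{[0,t]}}=-\mB(0)$ and $\partial_t|_0P_{\gamma|_{[0,t]}}^{-1}=\mB(0)$, and then obtains each formula by the ordinary product rule on matrices, e.g.\ $\partial_t|_0(P^{-1}E_\gamma P)=\mB(0)E_x+\partial_t|_0E_\gamma-E_x\mB(0)$, which is regrouped into $(x^*\nabla)_X\circ E_x-E_x\circ(x^*\nabla)_X$. Your route instead invokes Lemma~\ref{lemPtDt} to trade the ordinary $t$-derivative for the covariant one, and then eliminates all but the desired term by feeding in $\nabla$-parallel $Y$ and $\overline{\nabla}$-parallel $U,V$. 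This is a clean conceptual packaging and generalises well to higher tensors; the price is that the last step---identifying $\ev|_0[(\gamma^*\nabla)_{\partial_t},E_\gamma]$ with $[(x^*\nabla)_X,E_x]$ (and similarly for $R$)---still requires the same frame computation with $\mB(0)$ that the paper does explicitly, so you should not leave it entirely to ``the local comparison above''. In the end both proofs reduce to the same coordinate identity; yours hides the matrix $\mB(0)$ inside Lemma~\ref{lemPtDt}, while the paper puts it front and centre.
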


\begin{proof}
Let $(T^j)$ be an $\mE$-basis in a neighbourhood of $x_0(0)\in M_0$.
For $t$ sufficiently small, we identify $P_{\gamma|_{[0,t]}}$ and its inverse with
a matrix with respect to bases $(x^*T^j)$ and $(\gamma_t^*T^j)$. By (\ref{eqnParallel}), we find that
\begin{align*}
\ev|_{t=0}P_{\gamma|_{[0,t]}}=\id\;,\qquad\partial_t|_0P_{\gamma|_{[0,t]}}=-\mB(0)\;,\qquad
\partial_t|_0P_{\gamma|_{[0,t]}}^{-1}=\mB(0)
\end{align*}
where the sign in the last equation is due to replacing $t$ by $1-t$ in $\gamma^{-1}$ within
the definition of $\mB(t)$. The first statement is shown by the following calculation,
writing $Y=T^kY^k$.
\begin{align*}
\frac{d}{dt}|_0\left(P_{\gamma}|_{[0,t]}^{-1}(\gamma^*Y)\right)
=\mB(0)\cdot(x^*Y)+(x^*T^k)\partial_t|_0\gamma^*Y^k=(x^*\nabla)_{\xi}(x^*Y)
\end{align*}
For the second statement note that, by (\ref{eqnEndoPullback}), the matrix
of $E_{\gamma}$ is the pullback under $\gamma$ of the matrix of $E$. For $Y\in x^*\mE$,
we thus yield
\begin{align*}
\frac{d}{dt}|_0\left(P_{\gamma}|_{[0,t]}^{-1}\circ E_{\gamma}\circ P_{\gamma}|_{[0,t]}\right)(Y)
&=\left(\mB(0)E_x+\partial_t|_0E_{\gamma}-E_x\mB(0)\right)(Y)\\
&=\mB(0)E_x[Y]+X(E_xY)-E_x[X(Y)]-E_x\mB(0)[Y]\\
&=\left((x^*\nabla)_X\circ E_x-E_x\circ(x^*\nabla)_X\right)(Y)
\end{align*}
Finally, the third statement is established by an analogous calculation.
\end{proof}

We now come to second covariant derivatives. Let $X,Y\in(\mS S)_{\overline{0}}$ and consider a map
$\gamma:S\times[0,1]\times[0,1]\rightarrow M$ such that
\begin{align}
\label{eqnParallelHomotopy}
\ev|_{(0,0)}\partial_t\circ\gamma^*=X\circ x^*\;,\qquad
\ev|_{s=0}\partial_s\circ\gamma^{\sharp}=\overline{P}_{\gamma_{s=0}|_{[0,t]}}(Y\circ x^{\sharp})=:Y_t
\end{align}
such that $Y_0=Y\circ x^{\sharp}$. Such a homotopy indeed exists.
First, by Lem. \ref{lemSpecialPath}, there is $\tilde{\gamma}:S\times[0,1]\rightarrow M$ (parameter $t$) such that
the first condition in (\ref{eqnParallelHomotopy}) is satisfied. Now fix $t$. For this $t$, there is,
by the same lemma, an $S$-path $\gamma_t:S\times[0,1]\rightarrow M$ (parameter $s$) such that also the second condition holds true with parallel transport $\overline{P}_{\tilde{\gamma}}|_{[0,t]}$
on the right hand side.
By construction, $\gamma_t$ depends smoothly on $t$ and $s$, thus yielding $\gamma$ as required.

\begin{Prp}
\label{prpSecondCovariantDerivative}
Let $Z\in\mE_S$ and $E\in\End_{\mO_N}(\mE)$. Then
\begin{align*}
\frac{d}{dt}|_0\frac{d}{ds}|_0(P^2_{s,t})^{-1}(\gamma^*Z)&=(x^*\overline{\nabla}^2)_{X,Y}(x^*Z)\\
\frac{d}{dt}|_0\frac{d}{ds}|_0\left((P^2_{s,t})^{-1}\circ E_{\gamma}\circ P^2_{s,t}\right)
&=(x^*\overline{\nabla}^2)_{X,Y}E_x\\
\frac{d}{dt}|_0\frac{d}{ds}|_0\left((P^2_{s,t})^{-1}\circ\scal[R_{\gamma}]{\overline{P}^2_{s,t}(u)}{\overline{P}^2_{s,t}(v)}\circ P^2_{s,t}\right)
&=(x^*\overline{\nabla}^2)_{X,Y}R_x(u,v)
\end{align*}
with
\begin{align*}
P^2_{s,t}:=P_{\gamma_{t}|_{[0,s]}}\circ P_{\gamma_{s=0}|_{[0,t]}}\;,\qquad
\overline{P}^2_{s,t}:=\overline{P}_{\gamma_{t}|_{[0,s]}}\circ\overline{P}_{\gamma_{s=0}|_{[0,t]}}
\end{align*}
\end{Prp}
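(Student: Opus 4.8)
The plan is to produce the two derivatives by applying Prp. \ref{prpParallelTransportConnection} twice, once in the inner variable $s$ and once in the outer variable $t$, the latter controlled by a Leibniz rule, and then to show that the correction term in each second covariant derivative is exactly the one forced by the $\overline{\nabla}$-parallelism of $Y_t$ demanded in (\ref{eqnParallelHomotopy}). Throughout I use that parallel transport is $\mO_S$-linear (Lem. \ref{lemParallelTranslationSuperlinear}) and that, by Lem. \ref{lemPathConcatenation}, $P^2_{s,t}$ is parallel transport along the concatenated $S$-path from $x$ to $\gamma_t(s)$.

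I treat the section case first. Decomposing $(P^2_{s,t})^{-1}=P_{\gamma_{s=0}|_{[0,t]}}^{-1}\circ P_{\gamma_t|_{[0,s]}}^{-1}$, the factor $P_{\gamma_{s=0}|_{[0,t]}}^{-1}$ is independent of $s$, so the inner derivative is computed by the first formula of Prp. \ref{prpParallelTransportConnection} applied to the $s$-path $\gamma_t$ based at $y_t:=\ev|_{s=0}\gamma^{\sharp}$, whose tangent $\ev|_{s=0}\partial_s\circ\gamma^{\sharp}=Y_t$ is precisely the second condition of (\ref{eqnParallelHomotopy}). This gives
\[
\frac{d}{ds}\Big|_0(P^2_{s,t})^{-1}(\gamma^*Z)=P_{\gamma_{s=0}|_{[0,t]}}^{-1}\big[(y_t^*\nabla)_{Y_t}(y_t^*Z)\big].
\]
Since the relative pullback connection (\ref{eqnPullbackConnectionIntermediateRelative}) is tensorial in its direction slot, writing $Y_t=(y_t^*\partial_i)\cdot Y_t^i$ expresses the bracket as a combination $(y_t^*\nabla)_{Y_t}(y_t^*Z)=(-1)^{\abs{Y_t^i}\abs{\partial_i}}\,Y_t^i\cdot y_t^*(\nabla_{\partial_i}Z)$ of pullbacks of the global sections $\nabla_{\partial_i}Z\in\mE_S$ with coefficients $Y_t^i\in\mO_{S\times[0,1]}$.

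For the outer derivative I pull the scalar coefficients $Y_t^i$ out of $P_{\gamma_{s=0}|_{[0,t]}}^{-1}$ by $\mO_S$-linearity and differentiate by the Leibniz rule. Differentiating the coefficient gives $\big(\tfrac{d}{dt}|_0Y_t^i\big)\cdot x^*(\nabla_{\partial_i}Z)$, while differentiating the transported section $\gamma_{s=0}^*(\nabla_{\partial_i}Z)$ gives, by the first formula of Prp. \ref{prpParallelTransportConnection} with tangent $dx[X]$ (the first condition of (\ref{eqnParallelHomotopy})), the term $Y_0^i\cdot(x^*\nabla)_X\big(x^*(\nabla_{\partial_i}Z)\big)$ up to a Koszul sign. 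Because $Y_t=\oP_{\gamma_{s=0}|_{[0,t]}}(dx[Y])$ is $\overline{\nabla}$-parallel along the $t$-path, the parallelness equation (\ref{eqnParallel}) for $\overline{\nabla}$ yields $\tfrac{d}{dt}|_0Y_t^m=X(Y_0^m)-\big((x^*\overline{\nabla})_X[dx[Y]]\big)^m$, the defect between the naive and the covariant derivative of the direction field. Substituting this, and recognising $(x^*\nabla)_Y(x^*Z)=(-1)^{\abs{Y_0^i}\abs{\partial_i}}Y_0^i\,x^*(\nabla_{\partial_i}Z)$ together with the tensoriality of $(x^*\nabla)_{(x^*\overline{\nabla})_X[dx[Y]]}(x^*Z)$, the two contributions assemble into $(x^*\nabla)_X(x^*\nabla)_Y(x^*Z)-(x^*\nabla)_{(x^*\overline{\nabla})_X[dx[Y]]}(x^*Z)$, which is $(x^*\overline{\nabla}^2)_{X,Y}(x^*Z)$ by definition.

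The endomorphism and curvature-type cases run along the identical lines, using the second and third formulas of Prp. \ref{prpParallelTransportConnection} for the inner derivative and the derivation properties recorded in Def. \ref{defEndoPullbackDerivatives} and Def. \ref{defCurvPullbackDerivatives} for the reassembly; in the curvature case the extra transports $\oP^2_{s,t}(u)$ and $\oP^2_{s,t}(v)$ on the argument slots generate, upon differentiation, exactly the $\overline{\nabla}$-derivatives of $u$ and $v$ occurring in Def. \ref{defCurvPullbackDerivatives}. I expect the principal difficulty to be the bookkeeping of Koszul signs across the double Leibniz expansion, and especially the verification that the $\overline{\nabla}$-parallelism of $Y_t$ reproduces precisely the correction term $-(x^*\nabla)_{(x^*\overline{\nabla})_X[dx[Y]]}$ and its analogues, so that no spurious terms remain.
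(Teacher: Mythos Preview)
Your proposal is correct and follows essentially the same approach as the paper: apply Prp.~\ref{prpParallelTransportConnection} to the inner $s$-derivative, then differentiate in $t$ via the Leibniz rule, using Prp.~\ref{prpParallelTransportConnection} again on the transported pieces and the $\overline{\nabla}$-parallelness of $Y_t$ (equivalently $\partial_t|_0 Y_t=-\mB^X(0)\,dx[Y]$) to produce the correction term. The paper carries out exactly this computation in coordinates and likewise defers the endomorphism and curvature-type cases to the same mechanism.
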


\begin{proof}
Using Prp. \ref{prpParallelTransportConnection} and
$\abs{Y_t^l}=\abs{dx[Y]^l}=\abs{Y(x^*(\xi^l)}=\abs{\xi^l}$, we calculate
\begin{align*}
&\partial_s|_0\partial_t|_0(P_{\gamma_{s=0}|_{[0,t]}})^{-1}(P_{\gamma_{t}|_{[0,s]}})^{-1}(\gamma^*Z)\\
&\qquad=\partial_t|_0(P_{\gamma_{s=0}|_{[0,t]}})^{-1}\partial_s|_0(P_{\gamma_{t}|_{[0,s]}})^{-1}(\gamma^*Z)\\
&\qquad=\partial_t|_0(P_{\gamma_{s=0}|_{[0,t]}})^{-1}\left((\gamma_t^*\nabla)_{Y_t}(\gamma_t^*Z)\right)\\
&\qquad=(-1)^{\abs{\xi^l}\abs{Z}}\partial_t|_0(P_{\gamma_{s=0}|_{[0,t]}})^{-1}\gamma_t^*(\nabla_{\partial_l}Z)\cdot dx[Y]^l
+(-1)^{\abs{\xi^l}\abs{Z}}x^*(\nabla_{\partial_l}Z)\partial_t|_0Y_t^l
\end{align*}
Now we use $\partial_t|_0(P_{\gamma_{s=0}|_{[0,t]}})^{-1}\gamma_t^*(\nabla_{\partial_l}Z)=(x^*\nabla)_X(x^*\nabla_{\partial_l}Z)$ and
\begin{align*}
\partial_t|_0Y_t=-\mB^X(s=0)dx[Y]
=-(-1)^{\abs{n}}dx[Y]^ndx[X](\xi^l)x^*(\overline{\nabla}_{\partial_{\xi^l}}\partial_{\xi^n})
\end{align*}
to yield the first statement after a straightforward calculation.

The left hand side of the second equation is treated as follows.
\begin{align*}
LHS&=\frac{d}{dt}|_0\left(P_{\gamma}|_{s=0,[0,t]}^{-1}\partial_s|_0\left(P_{\gamma}|_{t,[0,s]}^{-1}\circ E_{\gamma}\circ P_{\gamma}|_{t,[0,s]}\right)\circ P_{\gamma}|_{s=0,[0,t]}\right)\\
&=\frac{d}{dt}|_0\left(P_{\gamma}|_{s=0,[0,t]}^{-1}\left((\gamma_t^*\nabla)_{Y_t}\circ E_{\gamma_t}-E_{\gamma_t}\circ(\gamma_t^*\nabla)_{Y_t}\right)\circ P_{\gamma}|_{s=0,[0,t]}\right)\\
&=(x^*\nabla)_X\left((x^*\nabla)_YE_x\right)
-(x^*\nabla)_{(x^*\overline{\nabla})_X[dx[Y]]}\circ E_x+E_x\circ(x^*\nabla)_{(x^*\overline{\nabla})_X[dx[Y]]}
\end{align*}
Here, the second equation follows from Prp. \ref{prpParallelTransportConnection}
applied to the second condition in (\ref{eqnParallelHomotopy}).
For the third equation, we use again Prp. \ref{prpParallelTransportConnection} to obtain
the first term and find, in addition, two derivative terms
with respect to $(\gamma_t^*\nabla)_{Y_t}$ which are obtained as in the previous calculation.

Similarly we yield, for the left hand side of the last equation to be shown,
\begin{align*}
LHS&=\frac{d}{dt}|_0P_{\gamma}|_{s=0,[0,t]}^{-1}\left((\gamma_t^*\nabla)_{Y_t}\circ\scal[R_{\gamma_t}]{\overline{P}_{\gamma}|_{s=0,[0,t]}(u)}{\overline{P}_{\gamma}|_{s=0,[0,t]}(v)}\right.\\
&\qquad\qquad\qquad-\scal[R_{\gamma_t}]{(\gamma_t^*\overline{\nabla})_{Y_t}(\overline{P}_{\gamma}|_{s=0,[0,t]}(u))}{\overline{P}_{\gamma}|_{s=0,[0,t]}(v)}\\
&\qquad\qquad\qquad-\scal[R_{\gamma_t}]{\overline{P}_{\gamma}|_{s=0,[0,t]}(u)}{(\gamma_t^*\overline{\nabla})_{Y_t}(\overline{P}_{\gamma}|_{s=0,[0,t]}(v))}\\
&\qquad\qquad\qquad\left.-\scal[R_{\gamma_t}]{\overline{P}_{\gamma}|_{s=0,[0,t]}(u)}{\overline{P}_{\gamma}|_{s=0,[0,t]}(v)}\circ(\gamma_t^*\nabla)_{Y_t}\right)P_{\gamma}|_{s=0,[0,t]}
\end{align*}
Analogous to the previous calculation for the second statement,
Prp. \ref{prpParallelTransportConnection} together with derivative terms
from the first calculation yields the right hand side as claimed.
\end{proof}

\subsection{Reconstruction of Galaev's Holonomy Algebra}

By means of the previously established relation between covariant derivatives and parallel
transport, we will now make contact with Galaev's holonomy algebra $\hol_x^{\mathrm{Gal}}$.
Let $S=\bR^{0|0}$, $\nabla$ be a connection on $\mE\rightarrow M$, and $x\in M_0$ be a
topological point identified with an $S$-point.
We aim at gaining generating elements of $\hol_x^{\mathrm{Gal}}$ as coefficients of special
elements of $\hol_x(T)$ for $T=\bR^{0|L'}$ with $L'\geq(\dim M)_{\overline{1}}$. Let $q\in M_0$,
and define the $(S\times)T$-point $y$ by prescribing
\begin{align}
\label{eqnGalaevY}
y^{\sharp}(x^k):=q^*(x^k)=q^k\;,\qquad y^{\sharp}(\theta^i):=\eta^i
\end{align}
with respect to coordinates $\xi=(x,\theta)$ around $q$. Then, a straightforward calculation
using (\ref{eqnPullbackConnection}) shows that
\begin{align*}
(y^*\nabla)_{\partial_{\eta^j}}(y^*Z)&=\hat{y}^*(\nabla_{\partial_{\theta^j}}Z)\\
(y^*\nabla)_{(y^*\overline{\nabla})_{\partial_{\eta^j}}[dy[\partial_{\eta^k}]]}(y^*Z)
&=\hat{y}^*\left(\nabla_{\overline{\nabla}_{\partial_{\theta^j}}\partial_{\theta^k}}Z\right)
\end{align*}
For the curvature terms, it follows that
\begin{align}
\label{eqnGalaevTerms}
\scal[R_y]{y^*\partial_{\xi^i}}{y^*\partial_{\xi^j}}
&=\hat{y}^*\left(\scal[R]{\partial_{\xi^i}}{\partial_{\xi^j}}\right)\nonumber\\
\scal[\left((y^*\overline{\nabla})_{\partial_{\eta^l}}R_y\right)]{y^*\partial_{\xi^i}}{y^*\partial_{\xi^j}}
&=\hat{y}^*\left(\scal[(\overline{\nabla}_{\partial_{\theta^l}}R)]{\partial_{\xi^i}}{\partial_{\xi^j}}\right)\\
\scal[\left((y^*\overline{\nabla}^2)_{\partial_{\eta^l},\partial_{\eta^m}}R_y\right)]{y^*\partial_{\xi^i}}{y^*\partial_{\xi^j}}
&=\hat{y}^*\left(\scal[(\overline{\nabla}^2_{\partial_{\theta^l},\partial_{\theta^m}}R)]{\partial_{\xi^i}}{\partial_{\xi^j}}\right)\nonumber
\end{align}

\begin{Lem}
\label{lemGalaevTerms}
Let $y$ be the $T$-point (\ref{eqnGalaevY}), $\gamma:x\rightarrow y$ be a connecting $T$-path
and $I_k$ denote a multiindex of parity $\abs{\xi^k}$ such that $\eta^{I_k}\in\mO_T$. Then
\begin{align*}
\eta^{I_{k_1}}\eta^{I_{k_2}}\cdot P_{\gamma}^{-1}\circ y^*\left(\scal[R]{\partial_{\xi^{k_2}}}{\partial_{\xi^{k_1}}}\right)\circ P_{\gamma}&\in\hol_x(T)\\
\eta^{I_{k_1}}\eta^{I_{k_2}}\eta^{I_{k_3}}\cdot P_{\gamma}^{-1}\circ y^*\left(\scal[(\nabla_{\partial_{\theta^{k_3}}}R)]{\partial_{\xi^{k_2}}}{\partial_{\xi^{k_1}}}\right)\circ P_{\gamma}&\in\hol_x(T)\\
\eta^{I_{k_1}}\eta^{I_{k_2}}\eta^{I_{k_3}}\eta^{I_{k_4}}\cdot P_{\gamma}^{-1}\circ y^*\left(\scal[(\nabla^2_{\partial_{\theta^{k_4}},\partial_{\theta^{k_3}}}R)]{\partial_{\xi^{k_2}}}{\partial_{\xi^{k_1}}}\right)\circ P_{\gamma}&\in\hol_x(T)
\end{align*}
\end{Lem}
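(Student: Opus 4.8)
The plan is to realise each of the three elements as a member of the Lie algebra $\hol_x(T)=\fg_{x_T}$ which, by Thm.~\ref{thmAmbroseSinger} together with Def.~\ref{defHolAlgGL}, is a Lie subalgebra of the finite-dimensional real Lie algebra $(\fgl_{\rk\mE}(\bigwedge\bR^{L'}))_{\overline{0}}$ and hence a \emph{closed} linear subspace. The role of the factors $\eta^{I_k}$ is to render the possibly odd vectors $y^*\partial_{\xi^k}$ and the possibly odd differentiation directions $y^*\partial_{\theta^k}$ even, so that the even curvature operator and the propositions of the preceding subsection apply to them; the resulting even, nilpotent scalars $\eta^{I_k}$ are afterwards pulled out by $\mO_T$-linearity. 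Since $\hol_x(T)$ is an $\bR$-linear space, the reordering signs incurred thereby are immaterial for membership. Throughout I read the derivatives $\nabla_{\partial_{\theta^{k}}}R$ and $\nabla^2_{\cdots}R$ in the sense of (\ref{eqnGalaevTerms}), i.e. as the $\overline{\nabla}$-derivatives of Def.~\ref{defCurvDerivatives}.

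For the first element the plan is to put $u:=(y^*\partial_{\xi^{k_2}})\cdot\eta^{I_{k_2}}$ and $v:=(y^*\partial_{\xi^{k_1}})\cdot\eta^{I_{k_1}}$. The parity hypothesis on $I_k$ guarantees $u,v\in(y^*\mS M)_{\overline{0}}$, so that $P_{\gamma}^{-1}\circ\scal[R_y]{u}{v}\circ P_{\gamma}$ is by definition a generator of $\fg_{x_T}$. Pulling the two scalars through the even, $\mO_T$-linear operator $R_y$ and invoking the first line of (\ref{eqnGalaevTerms}) identifies this generator, up to a sign and the (commuting) reordering of $\eta^{I_{k_1}},\eta^{I_{k_2}}$, with the asserted element, which therefore lies in $\hol_x(T)$.

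For the remaining two elements the plan is to exhibit the covariant curvature derivatives as derivatives of families of Ambrose-Singer generators. By Lem.~\ref{lemSpecialPath}, applicable since $\eta^{I_{k_3}}$ is nilpotent, I would choose an $S\times T$-path $\tilde{\gamma}$ issuing from $y$ with $\ev|_0\partial_t\circ\tilde{\gamma}^{\sharp}=(y^*\partial_{\theta^{k_3}})\cdot\eta^{I_{k_3}}\in(y^*\mS M)_{\overline{0}}$. For each small $t$ the operator occurring in the curvature identity of Prp.~\ref{prpParallelTransportConnection}, conjugated by the fixed $P_{\gamma}$, is---via the concatenated path $\tilde{\gamma}|_{[0,t]}\star\gamma:x\to\tilde{\gamma}(t)$ and the arguments $\oP(u),\oP(v)$ (the $\overline{\nabla}$-parallel transports of $u,v$ along $\tilde{\gamma}|_{[0,t]}$, which stay even since parallel transport preserves parity, cf. Lem.~\ref{lemParallelTranslationSuperlinear})---a generator of $\fg_{x_T}$ and so lies in $\hol_x(T)$. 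As $\hol_x(T)$ is closed, its $t$-derivative at $0$ lies in $\hol_x(T)$ as well; by Prp.~\ref{prpParallelTransportConnection} that derivative equals $P_{\gamma}^{-1}\circ\scal[\left((y^*\overline{\nabla})_{\xi}R_y\right)]{u}{v}\circ P_{\gamma}$, which by $\mO_T$-linearity and the second line of (\ref{eqnGalaevTerms}) is, up to sign, the claimed $\nabla_{\partial_{\theta^{k_3}}}R$-term. The third element then follows verbatim from the two-parameter family of Prp.~\ref{prpSecondCovariantDerivative}: differentiating first in $s$ and then in $t$ keeps every intermediate conjugated-curvature operator inside the closed subspace $\hol_x(T)$, and the third line of (\ref{eqnGalaevTerms}) identifies the resulting mixed second derivative with the $\nabla^2$-term.

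I expect the principal difficulty to be the bookkeeping in the last two cases needed to certify that each intermediate conjugated-curvature operator is genuinely an Ambrose-Singer generator: that the base point has been moved to a bona fide $S\times T$-point, that $\tilde{\gamma}|_{[0,t]}\star\gamma$ is an admissible piecewise smooth path, and that $\oP(u),\oP(v)$ remain even along it, together with tracking the precise signs produced when the nilpotent even scalars $\eta^{I_k}$ are commuted out of the multilinear curvature derivatives. The parity hypothesis on the $I_k$ and the evenness of $\overline{P}$-parallel transport are exactly what make these verifications succeed, while the closedness of the finite-dimensional subspace $\hol_x(T)$ is what justifies passing the derivative inside.
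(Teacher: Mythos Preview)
Your proposal is correct and follows essentially the same route as the paper: pull in the $\eta^{I_k}$ to make the curvature arguments even so the zero-order term is literally an Ambrose--Singer generator, and for the higher-order terms use Lem.~\ref{lemSpecialPath} and Prp.~\ref{prpParallelTransportConnection} (resp.\ Prp.~\ref{prpSecondCovariantDerivative}) to express the element as the $t$-derivative of a one-parameter (resp.\ two-parameter) family of Ambrose--Singer generators conjugated by $P_{\gamma}$, then conclude by the fact that $\hol_x(T)$ is a finite-dimensional (hence closed) linear subspace. Your justification for why the derivative remains in $\hol_x(T)$ is in fact slightly more explicit than the paper's, which simply says ``which is a vector space''.
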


\begin{proof}
By Thm. \ref{thmAmbroseSinger}, the first term
\begin{align*}
\eta^{I_{k_1}}\eta^{I_{k_2}}\cdot P_{\gamma}^{-1}\circ y^*\left(\scal[R]{\partial_{k_2}}{\partial_{k_1}}\right)\circ P_{\gamma}
=P_{\gamma}^{-1}\circ\scal[R_y]{\eta^{I_{k_2}}\cdot(y^*\circ\partial_{k_2})}{\eta^{I_{k_1}}\cdot(y^*\circ\partial_{k_1})}\circ P_{\gamma}
\end{align*}
is clearly contained in $\hol_x(T)$.
For the second, let $\delta$ be an $S$-path connecting $y$ to some $S$-point $z$ such that
$\ev|_0\partial_t\circ\delta^{\sharp}=\xi:=dy\left[\eta^{I_{k_3}}\cdot\partial_{\eta^{k_3}}\right]$.
Using (\ref{eqnGalaevTerms}), followed by Prp. \ref{prpParallelTransportConnection}
applied to $y$, $\xi$, $\delta$ as well as $u:=\eta^{I_{k_2}}\cdot(y^*\circ\partial_{k_2})$ and
$v:=\eta^{I_{k_1}}\cdot(y^*\circ\partial_{k_1})$, we yield
\begin{align*}
&\eta^{I_{k_1}}\eta^{I_{k_2}}\eta^{I_{k_3}}\cdot P_{\gamma}^{-1}\circ y^*\left(\scal[(\nabla_{\theta^{k_3}}R)]{\partial_{k_2}}{\partial_{k_1}}\right)\circ P_{\gamma}\\
&\qquad=P_{\gamma}^{-1}\circ\scal[\left((y^*\nabla)_{\eta^{I_{k_3}}\cdot\partial_{\eta^{k_3}}}R_y\right)]{\eta^{I_{k_2}}\cdot(y^*\circ\partial_{k_2})}{\eta^{I_{k_1}}\cdot(y^*\circ\partial_{k_1})}\circ P_{\gamma}\\
&\qquad=P_{\gamma}^{-1}\circ\partial_t|_0\left(P_{\delta}|_{[0,t]}^{-1}\circ\scal[R_{\delta}]{\oP_{\delta}|_{[0,t]}(u)}{\oP_{\delta}|_{[0,t]}(v)}\circ P_{\delta}|_{[0,t]}\right)\circ P_{\gamma}\\
&\qquad=\partial_t|_0\left(P_{\gamma}^{-1}\circ P_{\delta}|_{[0,t]}^{-1}\circ\scal[R_{\delta}]{\oP_{\delta}|_{[0,t]}(u)}{\oP_{\delta}|_{[0,t]}(v)}\circ P_{\delta}|_{[0,t]}\circ P_{\gamma}\right)
\end{align*}
By Thm. \ref{thmAmbroseSinger}, the term in parentheses lies, for every $t\in[0,1]$,
in $\hol_x(T)$, which is a vector space.
Therefore, the differential is also contained in $\hol_x(T)$.

The second covariant derivative term is treated analogous.
\end{proof}

Consider the zero-derivative term in Lem. \ref{lemGalaevTerms}. For generic choice of
$\eta^{I_{k_1}}$ and $\eta^{I_{k_2}}$, we find that
\begin{align*}
&\left(\partial_{\eta^{I_{k_1}}}\partial_{\eta^{I_{k_2}}}\left(\eta^{I_{k_1}}\eta^{I_{k_2}}P_{\gamma}^{-1}\circ y^*\left(\scal[R]{\partial_{\xi^{k_2}}}{\partial_{\xi^{k_1}}}\right)\circ P_{\gamma}\right)\right)_0\\
&\qquad=P_{\gamma_0}^{-1}\circ\scal[R_{y_0}]{\partial_{\xi^{k_2}}}{\partial_{\xi^{k_1}}}\circ P_{\gamma_0}
\in\hol^{\mathrm{Gal}}_x
\end{align*}
and analogous for the first and second derivative terms and, by conjecture, for all higher
derivative terms. The generating elements of $\hol_x^{\mathrm{Gal}}$ can thus be
extracted out of $\hol_x(T)$ as certain coefficients of special elements in the
way made precise by Lem. \ref{lemGalaevTerms}.
This construction is based on the knowledge of the geometric significance of the
elements. It remains an open question whether $\hol_x^{\mathrm{Gal}}$ can be obtained
from $\hol_x(T)$ in a purely algebraic way.

\section*{Acknowledgements}

I would like to thank Alexander Alldridge, Anton Galaev, Harald Dorn and Jan Plefka for interesting discussions, and again Anton Galaev for spotting a mistake in a previous version of this article.

\addcontentsline{toc}{section}{References}

\bibliographystyle{alpha}

\end{document}